\DeclareMathOperator{\poly}{\operatorname{poly}}
\newcommand{\NP}{$\mathsf{NP}$\xspace}
\newcommand{\XP}{$\mathsf{XP}$\xspace}
\newcommand{\FPT}{$\mathsf{FPT}$\xspace}
\newcommand{\W}[1]{$\mathsf{W[#1]}$\xspace}
\newcommand{\Wh}[1]{$\mathsf{W[#1]}$-hard\xspace}
\newcommand{\II}{\mathcal I}
\newcommand{\Oh}{\mathcal O}
\newcommand{\OhOp}[1]{{\mathcal O}\mathopen{}\mathclose\bgroup\left( #1 \aftergroup\egroup\right)}
\newcommand{\OhStarOp}[1]{{\mathcal O}^*\mathopen{}\mathclose\bgroup\left( #1 \aftergroup\egroup\right)}
\def \r {\rangle}
\newcommand{\RR}{\mathbb{R}}
\newcommand{\QQ}{\mathbb{Q}}
\newcommand{\ZZ}{\mathbb{Z}}
\newcommand{\NN}{\mathbb{N}}
\newcommand{\G}{\mathcal{G}}
\DeclareMathOperator{\nd}{{\rm nd}}
\DeclareMathOperator{\vc}{{\rm vc}}
\DeclareMathOperator{\td}{{\rm td}}
\DeclareMathOperator{\tw}{{\rm tw}}
\DeclareMathOperator{\cw}{{\rm cw}}
\DeclareMathOperator{\mw}{{\rm mw}}
\DeclareMathOperator{\Succ}{{\rm succ}}
\DeclareMathOperator{\sign}{{\rm sign}}
\newcommand{\transpose}{\intercal}
\title{Integer Programming in Parameterized Complexity: Three Miniatures
}
\titlerunning{Integer Programming in Parameterized Complexity: Three Miniatures} 
\author{Tomáš Gavenčiak}{Department of Applied Mathematics, Charles University, Prague, Czech Republic}{gavento@kam.mff.cuni.cz}{}{}
\author{Dušan Knop}{Department of Informatics, University of Bergen, Bergen, Norway}{dusan.knop@gmail.com}{}{Author supported by the project NFR MULTIVAL.}
\author{Martin Koutecký}{Technion - Israel Institute of Technology, Haifa, Israel \emph{and} \\Computer Science Institute, Charles University, Prague, Czech Republic}{koutecky@technion.ac.il}{}{Author supported by Technion postdoctoral fellowship.}
\authorrunning{T. Gavenčiak, D. Knop, M. Koutecký}  
\subjclass{
Theory of computation $\rightarrow$ Parameterized complexity and exact algorithms;
Theory of computation $\rightarrow$ Graph algorithms analysis
}
\keywords{graph coloring, parameterized complexity, integer linear programming, integer convex programming}
\theoremstyle{definition}
\newtheorem{model}[theorem]{Model}
\noindent\makebox[\linewidth]{\rule{\textwidth}{0.4pt}}
\noindent\makebox[\linewidth]{\rule{\textwidth}{0.4pt}}
\newtheorem{observation}{Observation}
\newtheorem{claim}[observation]{Claim}
\newcommand{\cqed}{\renewcommand{\qed}{\hfill $\vartriangleleft$}}
\newcommand{\Trule}{\rule{0pt}{3ex}}
\newcommand{\Brule}{\rule[-2ex]{0pt}{0pt}}
\newcommand{\prob}[3]{
\begin{center}
\begin{tabularx}{\textwidth}{|l X|}
	\hline
	\multicolumn{2}{|c|}{#1\Trule} \\
	\textbf{Input:\ }&{#2}\\
	\textbf{Task:\ }&{#3\Brule}\\
	\hline
\end{tabularx}
\end{center}
}
\def\ve#1{\mathchoice{\mbox{\boldmath$\displaystyle\bf#1$}}
    {\mbox{\boldmath$\textstyle\bf#1$}}
    {\mbox{\boldmath$\scriptstyle\bf#1$}}
    {\mbox{\boldmath$\scriptscriptstyle\bf#1$}}}
\newcommand\veb{{\ve b}}
\newcommand\ved{{\ve d}}
\newcommand\vece{{\ve e}}
\newcommand\vef{{\ve f}}
\newcommand\veg{{\ve g}}
\newcommand\veh{{\ve h}}
\newcommand\vel{{\ve l}}
\newcommand\veu{{\ve u}}
\newcommand\vew{{\ve w}}
\newcommand\vex{{\ve x}}
\newcommand\vey{{\ve y}}
\newcommand\vez{{\ve z}}
\newcommand{\N}{\mathbb{N}}
\newcommand{\sv}[1]{}
\newcommand{\lv}[1]{#1}
\newcommand{\appendixText}{}
\newcommand{\toappendix}[1]{#1}
\newcommand{\appmark}{$\star$}
\newcommand{\tool}[1]{\vspace{0pt}\noindent\textbf{#1}}
\newcommand{\comment}[1]{\textcolor{black!60}{#1}}
\begin{document}

\maketitle

\begin{abstract}
Powerful results from the theory of integer programming have recently led to substantial advances in parameterized complexity.
However, our perception is that, except for Lenstra's algorithm for solving integer linear programming in fixed dimension, there is still little understanding in the parameterized complexity community of the strengths and limitations of the available tools.
This is understandeable: it is often difficult to infer exact runtimes or even the distinction between \FPT and \XP algorithms, and some knowledge is simply unwritten folklore in a different community.
We wish to make a step in remedying this situation.

To that end, we first provide an easy to navigate quick reference guide of integer programming algorithms from the perspective of parameterized complexity.
Then, we show their applications in three case studies, obtaining \FPT algorithms with runtime $f(k) \poly(n)$. We focus on:
\begin{itemize}
\item \emph{Modeling}: since the algorithmic results follow by applying existing algorithms to new models, we shift the focus from the complexity result to the modeling result, highlighting common patterns and tricks which are used.
\item \emph{Optimality program:} after giving an \FPT algorithm, we are interested in reducing the dependence on the parameter; we show which algorithms and tricks are often useful for speed-ups.
\item \emph{Minding the $\poly(n)$:} reducing $f(k)$ often has the unintended consequence of increasing $\poly(n)$; so we highlight the common trade-offs and show how to get the best of both worlds.
\end{itemize}

Specifically, we consider graphs of bounded neighborhood diversity which are in a sense the simplest of dense graphs, and we show several \FPT algorithms for \textsc{Capacitated Dominating Set}, \textsc{Sum Coloring}, and \textsc{Max-$q$-Cut} by modeling them as convex programs in fixed dimension, $n$-fold integer programs, bounded dual treewidth programs, and indefinite quadratic programs in fixed dimension.
\end{abstract}

\section{Introduction}
 Our focus is on modeling various problems as \textsc{integer programming} (IP), and then obtaining \FPT algorithms by applying known algorithms for IP.
IP is the problem
\begin{equation}
\min \{f(\vex) \mid \vex \in S \cap \ZZ^n, S \subseteq \RR^n \text{ is convex}\} \enspace . \tag{IP} \label{IP}
\end{equation}
We give special attention to two restrictions of IP.
First, when $S$ is a polyhedron, we get
\begin{equation}
\min \{f(\vex)\mid A \vex \leq \veb, \, \vex \in \ZZ^n\}  \tag{LinIP},\label{LinIP}
\end{equation}
where $A \in \ZZ^{m \times n}$ and $\veb \in \ZZ^m$; we call this problem~\emph{linearly-constrained IP}, or \textsc{LinIP}.
Further restricting $f$ to be a linear function gives \textsc{Integer Linear Programming} (ILP):
\begin{equation}
\min \{\vew \vex \mid A \vex \leq \veb, \, \vex \in \ZZ^n\}  \tag{ILP},\label{ILP}
\end{equation}
where $\vew \in \ZZ^n$.
The function $f\colon \ZZ^n\to\ZZ$ is called the \emph{objective function}, $S$ is the \emph{feasible set} (defined by \emph{constraints} or various \emph{oracles}), and $\vex$ is a vector of \emph{(decision) variables}.
By $\langle \cdot \rangle$ we denote the binary encoding length of numbers, vectors and matrices.

In 1983 Lenstra showed that ILP is polynomial in fixed dimension and solvable in time $n^{\Oh(n)} \langle A, \veb, \vew \rangle$ (including later improvements~\cite{FrankTardos87,Kannan87,Lenstra83}).
Two decades later this algorithm's potential for applications in parameterized complexity was recognized, e.g. by Niedermeier~\cite{Niedermeier04}:
\begin{quote}
  \emph{[...] It remains to investigate further examples besides
    \textsc{Closest String} where the described ILP approach turns out
    to be applicable. More generally, it would be interesting to
    discover more connections between fixed-parameter algorithms and
    (integer) linear programming.}
\end{quote}
This call has been answered in the following years, for example in the context of graph algorithms~\cite{FellowsLMRS08,FialaGKKK:2017,Ganian12,Lampis12}, scheduling~\cite{HermelinKSTW:2015,JansenS:2010,KnopK:2016,MnichW:2015} or computational social choice~\cite{BredereckFNST:2015}.

In the meantime, many other powerful algorithms for IP have been devised; however it seemed unclear exactly \emph{how} could these tools be used, as Lokshtanov states in his PhD thesis~\cite{Lokshtanov09}, referring to \FPT algorithms for convex IP in fixed dimension:
\begin{quote}
  \emph{It would be interesting to see if these even more general
    results can be useful for showing problems fixed parameter
    tractable.}
\end{quote}
Similarly, Downey and Fellows~\cite{DowneyF13} highlight the \FPT algorithm for so called $n$-fold IP:
\begin{quote}
  \emph{Conceivably, [\textsc{Minimum Linear Arrangement}] might also
    be approached by the recent (and deep) FPT results of Hemmecke,
    Onn and Romanchuk~\cite{HemmeckeOR13} concerning nonlinear
    optimization.}
\end{quote}
Interestingly, \textsc{Minimum Linear Arrangement} was shown to be \FPT by yet another new algorithm for IP due to Lokshtanov~\cite{Lokshtanov:2015}.

In the last 3 years we have seen a surge of interest in, and an increased understanding of, these IP techniques beyond Lenstra's algorithm, allowing significant advances in fields such as parameterized scheduling~\cite{ChenM:2018,HermelinKSTW:2015,JansenKMR:2018,KnopK:2016,MnichW:2015}, computational social choice~\cite{KnopKM:2017b,KnopKM:2017,KnopKM:2018}, multichoice optimization~\cite{GajarskyHKO:2017}, and stringology~\cite{KnopKM:2017b}.
This has increased our understanding of the strengths and limitations of each tool as well as the modeling patterns and tricks which are typically applicable and used.

\subsection{Our Results}
We start by giving a quick overview of existing techniques in Section~\ref{sec:toolbox}, which we hope to be an accessible reference guide for parameterized complexity researchers.
Then, we resolve the parameterized complexity of three problems when parameterized by the neighborhood diversity of a graph (we defer the definitions to the relevant sections).
However, since our complexity results follow by applying an appropriate algorithm for IP, we also highlight our modeling results.
Moreover, in the spirit of the optimality program (introduced by Marx~\cite{Marx:2012}), we are not content with obtaining \emph{some} \FPT algorithm, but we attempt to decrease the dependence on the parameter $k$ as much as possible.
This sometimes has the unintended consequence of increasing the polynomial dependence on the graph size $|G|$.
We note this and, by combining several ideas, get the ``best of both worlds''.
Driving down the $\poly(|G|)$ factor is in the spirit of ``minding the $\poly(n)$'' of Lokshtanov et al.~\cite{LokshtanovRS:2016}.

We denote by $|G|$ the number of vertices of the graph $G$ and by $k$ its neighborhood diversity; graphs of neighborhood diversity $k$ have a succinct representation (constructible in linear time) with $\Oh(k^2 \log |G|)$ bits and we assume to have such a representation on input.

\begin{theorem} \label{thm:cds}
\textsc{Capacitated Dominating Set}
\begin{enumerate}[label={\alph*)}]
\item\label{thm:cds:convex} Has a convex IP model in $\Oh(k^2)$ variables and can be solved in time and space $k^{\Oh(k^2)} \log |G|$.
\item\label{thm:cds:ilp} Has an ILP model in $\Oh(k^2)$ variables and $\Oh(|G|)$ constraints, and can be solved in time $k^{\Oh(k^2)} \poly(|G|)$ and space $\poly(k,|G|)$.
\item\label{thm:cds:speed} Can be solved in time $k^{\Oh(k)} \poly(|G|)$ using model~\ref{thm:cds:convex} and a proximity argument.
\item\label{thm:cds:approx} Has a polynomial $OPT+k^2$ approximation algorithm by rounding a relaxation of~\ref{thm:cds:convex}.
\end{enumerate}
\end{theorem}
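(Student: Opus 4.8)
The plan is to build one type-level integer program and derive all four items from it. Write $V_1,\dots,V_k$ for the neighborhood-diversity classes, with sizes $n_i=|V_i|$ and the $k\times k$ type-adjacency pattern. For each class $i$ I would introduce a variable $x_i\in\{0,\dots,n_i\}$ counting how many vertices of $V_i$ enter the dominating set, and for each adjacent ordered pair $(i,j)$ (including $i=j$ on a clique class) a variable $y_{ij}\ge 0$ counting how many vertices of $V_j$ are dominated from $V_i$; the objective is $\min\sum_i x_i$. Since all vertices of a class share the same outside neighborhood, the only thing distinguishing them is their capacity, so among $x_i$ chosen vertices we should keep those of largest capacity; accordingly I define $C_i(x_i)$ to be the sum of the $x_i$ largest capacities in $V_i$. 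The constraints are coverage, $\sum_{i\sim j} y_{ij}\ge n_j-x_j$ for every $j$, and capacity, $\sum_{j\sim i} y_{ij}\le C_i(x_i)$ for every $i$. The common core of the whole theorem is then a modeling lemma: feasible integer $(\vex,\vey)$ correspond to capacitated dominating sets of the same size. The forward direction is immediate; for the backward direction I would argue by integrality of flows that any aggregate $(\vex,\vey)$ satisfying the constraints is realizable by an actual assignment of dominators to dominatees respecting the per-vertex capacities, using that vertices inside a class are interchangeable.

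The second ingredient is that each $C_i$ is concave, since its increments $C_i(t)-C_i(t-1)$ are the capacities in nonincreasing order; hence each capacity constraint is convex and we obtain the convex IP of item~\ref{thm:cds:convex} in $\Oh(k^2)$ variables. Feeding this to the fixed-dimension convex IP algorithm from the toolbox in dimension $d=\Oh(k^2)$, with variable bounds at most $|G|$ and the $C_i$ supplied by an evaluation oracle (prefix sums precomputed once), gives time $d^{\Oh(d)}\cdot\Oh(\log|G|)=k^{\Oh(k^2)}\log|G|$, and likewise for space. For item~\ref{thm:cds:ilp} the only nonlinear part is $\sum_j y_{ij}\le C_i(x_i)$; because $C_i$ is concave and piecewise linear with breakpoints at $0,\dots,n_i$, its hypograph is cut out by the $n_i$ supporting lines of its segments, so replacing each capacity constraint by these $n_i$ linear inequalities yields an ILP with $\Oh(k^2)$ variables and $\sum_i n_i=\Oh(|G|)$ constraints. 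Solving it with Lenstra's algorithm in dimension $\Oh(k^2)$ gives time $k^{\Oh(k^2)}\poly(|G|)$; the point of this route is that Lenstra runs in space polynomial in the input, i.e. $\poly(k,|G|)$, trading the $\log|G|$ of~\ref{thm:cds:convex} for a smaller space footprint.

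The speed-up of item~\ref{thm:cds:speed} comes from eliminating the $\Oh(k^2)$ flow variables. For fixed integral $\vex$ the $y_{ij}$ solve a transportation problem with demands $n_j-x_j$ and concave capacities $C_i(x_i)$ whose polytope is integral, so any feasible fractional flow rounds to an integral one at no cost; it therefore suffices to optimize over $\vex\in\ZZ^k$ alone. I would solve the continuous relaxation, invoke a proximity theorem to confine an integer optimum to a box around it whose side depends only on $k$, and then solve the resulting convex IP in dimension $k$, obtaining $k^{\Oh(k)}\poly(|G|)$. For the additive approximation of item~\ref{thm:cds:approx} I would solve the continuous relaxation of~\ref{thm:cds:convex} (value at most $OPT$), round each of the $\Oh(k^2)$ variables, and repair feasibility: rounding $\vex$ up only increases available capacity, so the sole damage is to coverage, and restoring coverage costs at most one extra dominator per affected variable, for a total additive loss of $\Oh(k^2)$.

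I expect the main obstacle to be the modeling lemma, specifically its backward direction: proving that every aggregate solution is realizable as a genuine capacity-respecting assignment, and that restricting to the largest-capacity vertices of each class loses nothing, so that $C_i$ is simultaneously the correct profile and concave. Once this equivalence and the concavity of the $C_i$ are established, items~\ref{thm:cds:convex}--\ref{thm:cds:approx} reduce to invoking the respective integer-programming tools; the remaining work is bookkeeping, namely matching the stated time and space bounds and keeping the rounding error in~\ref{thm:cds:approx} at $\Oh(k^2)$ rather than a larger polynomial in $k$.
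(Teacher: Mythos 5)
Your proposal is correct in outline and follows essentially the same route as the paper: the identical type-level model (your $C_i$ are the paper's domination capacity functions $f_i$, justified by the same keep-the-largest-capacities exchange argument), a fixed-dimension convex IP algorithm for part a), linearization of the concave capacity constraints by their supporting lines plus Lenstra for part b), rounding the continuous relaxation for part d), and relaxation-plus-proximity-plus-$k$-dimensional search (with a flow/matching feasibility check) for part c). One caveat: in part c) you invoke ``a proximity theorem'' as a black box, but no off-the-shelf proximity result gives a box whose side depends only on $k$ here --- generic bounds (e.g.\ Hochbaum--Shanthikumar type) scale with the constraint coefficients or subdeterminants, which are as large as $|G|$ in this model; the paper instead derives the $k^2$ proximity bound from its own rounding of the relaxation, i.e.\ precisely from your part d), so you must make that connection explicit (integrality gap at most $k^2$ implies the integer optimum can be sought in a box of side $\Oh(k^2)$ around the fractional optimum) for the $k^{\Oh(k)}\poly(|G|)$ claim to stand.
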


\begin{theorem} \label{thm:sumcol}
\textsc{Sum Coloring}
\begin{enumerate}[label={\alph*)}]
\item Has an $n$-fold IP model in $\Oh(k|G|)$ variables and $\Oh(k^2|G|)$ constraints, and can be solved in time $k^{\Oh(k^3)} |G|^2 \log^2 |G|$. \label{thm:sumcol:nfold}
\item Has a \textsc{LinIP} model in $\Oh(2^k)$ variables and $k$ constraints with a non-separable convex objective, and can be solved in time $2^{2^{k^{\Oh(1)}}} \log |G|$. \label{thm:sumcol:convex}
\item Has a \textsc{LinIP} model in $\Oh(2^k)$ variables and $\Oh(2^k)$ constraints whose constraint matrix has dual treewidth $k+2$ and whose objective is separable convex, and can be solved in time $k^{\Oh(k^2)} \log |G|$. \label{thm:sumcol:graver}
\end{enumerate}
\end{theorem}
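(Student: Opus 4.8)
The plan is to exploit the structure of a graph $G$ of neighborhood diversity $k$: its vertex set partitions into $k$ \emph{types} $T_1,\dots,T_k$, each inducing a clique or an independent set, with any two types joined by either all or no edges. Write $n_i=|T_i|$ and let $H$ be the \emph{type graph} on $[k]$ whose edges are the completely-joined pairs. A proper coloring uses at most $|G|$ colors; in every color class the types it meets form an independent set of $H$, and a clique type contributes at most one vertex per color. A convenient observation, used throughout, is that some optimal sum coloring makes every independent-set type \emph{monochromatic}: recoloring all vertices of such a type by the smallest color avoiding their (common) neighborhood never increases the objective and keeps the coloring proper, so each independent-set type may be treated as a single indivisible object.

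For part~\ref{thm:sumcol:nfold} I would build an $n$-fold IP with one brick per color $j\in[|G|]$ and, inside brick $j$, one variable per type: for a clique type a $0/1$ variable $x^j_i$ (at most one vertex per color), for an independent type a $0/1$ indicator $y^j_i$ whose weight $n_i$ is folded into the objective only. The global (linking) constraints are the $k$ equations $\sum_j x^j_i=n_i$ (resp.\ $\sum_j y^j_i=1$), forcing each type to be fully colored; the local (per-brick) constraints are the $\Oh(k^2)$ inequalities $(\cdot)^j_i+(\cdot)^j_{i'}\le 1$ over adjacent types, encoding that each color class is independent. The objective $\sum_{i,j} j\cdot(\cdot)^j_i$ is linear. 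Crucially the constraint matrix is $0/1$, so $\|E\|_\infty=1$; with $r=\Oh(k)$ linking rows, $s=\Oh(k^2)$ local rows, brick size $t=\Oh(k)$ and $|G|$ bricks, plugging into the $n$-fold algorithm of Section~\ref{sec:toolbox} yields $k^{\Oh(k^3)}|G|^2\log^2|G|$.

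For part~\ref{thm:sumcol:convex} I would pass to a configuration formulation in fixed dimension: introduce $a_I\in\ZZ_{\ge 0}$ for each independent set $I$ of $H$ (at most $2^k$ of them), counting the colors whose type-support is $I$, with the $k$ constraints $\sum_{I\ni i} a_I=n_i$ (resp.\ $=1$). Each such color covers $W_I=\sum_{i\in I}(n_i\text{ or }1)$ vertices, and by the rearrangement inequality the minimum sum assigns the smallest color values to the heaviest colors, making the optimum a non-separable convex function of $(a_I)$; the convex integer minimization algorithm in dimension $\Oh(2^k)$ then gives $2^{2^{k^{\Oh(1)}}}\log|G|$. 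For part~\ref{thm:sumcol:graver} I would linearize this sorting to obtain a separable objective on a sparse matrix. Fixing the order $I_1,\dots,I_M$ by non-increasing $W_I$, the colors of $I_t$ occupy a consecutive block, so with prefix variables obeying the chain equations $S_t=S_{t-1}+a_{I_t}$, Abel summation rewrites the objective as $\sum_t (W_{I_t}-W_{I_{t+1}})\cdot\tfrac{S_t(S_t+1)}{2}$, which is separable convex since $W_{I_t}-W_{I_{t+1}}\ge 0$. The matrix has entries in $\{-1,0,1\}$ and $\Oh(2^k)$ rows; in its dual graph the $k$ type constraints form a backbone shared by all bags while the chain forms a path, giving dual treewidth $k+2$, and the separable-convex bounded-dual-treewidth algorithm yields $k^{\Oh(k^2)}\log|G|$.

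I expect the main obstacle to be the modeling rather than the algorithmic invocations. For~\ref{thm:sumcol:nfold} the decisive point is encoding properness with $0/1$ coefficients so that $\|E\|_\infty$ stays constant and the bound is genuinely \FPT rather than \XP, for which the monochromatic observation is essential; and for~\ref{thm:sumcol:graver} the delicate step is the Abel-summation reformulation that simultaneously renders the objective separable convex and keeps the dual treewidth at $k+\Oh(1)$, since a naive sorted objective is both non-separable and densely coupled.
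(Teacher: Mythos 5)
Your three models coincide with the paper's almost exactly. Part~\ref{thm:sumcol:nfold} is Model~\ref{mod:sc:nfold} verbatim (same essential-coloring normalization, brick-per-color structure, $0/1$ coefficients, $r=k$, $s=\Oh(k^2)$, $t=k$); part~\ref{thm:sumcol:convex} is Model~\ref{mod:sc:convex} with $a_I$ playing the role of $x_I$; and in part~\ref{thm:sumcol:graver} your prefix variables $S_t$ with the Abel-summation objective $\sum_t (W_{I_t}-W_{I_{t+1}})\binom{S_t+1}{2}$ are precisely the paper's suffix variables $z_i$ with objective $\sum_{i\in\Gamma}\zeta_i\binom{z_i}{2}$, and your dual-treewidth argument is Lemma~\ref{lem:tw}. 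One presentational remark on part~\ref{thm:sumcol:convex}: the rearrangement inequality only identifies the optimal assignment of colors to class sizes; to get \emph{convexity} of the resulting function of $(a_I)$ you still need the column-counting identity (the paper's $\sum_j \binom{y_j}{2}$, equivalently your Abel summation), so that identity should appear already there, not only in part~\ref{thm:sumcol:graver}.

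The genuine gap is the algorithmic invocation in part~\ref{thm:sumcol:graver}. There is no ``separable-convex bounded-dual-treewidth algorithm'': with $\|A\|_\infty$ and $\tw_D(A)$ both bounded by constants, problem~\eqref{StandardLinIP} remains \NP-hard (this is exactly why the paper's classification in Theorem~\ref{thm:graver} uses tree\emph{depth} and cannot be relaxed to treewidth). The tool that is actually available (the ``Small treewidth and Graver norms'' entry of Section~\ref{sec:toolbox}) runs in time $g_1(A)^{\Oh(\tw_D(A))} n^2 \log n \langle L \rangle$, where $g_1(A)$ is the largest $\ell_1$-norm of an element of the Graver basis $\G(A)$. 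Hence, to obtain $k^{\Oh(k^2)}\log|G|$ you must additionally prove $g_1(A) \le k^{\Oh(k)}$ for $A=\left(\begin{smallmatrix}F\\ L\end{smallmatrix}\right)$, where $F$ holds the type constraints and $L$ the chain constraints. This is the technical core of the paper's proof (Lemma~\ref{lem:graver}): one deletes duplicate columns of $L$, shows by an explicit sign-analysis induction that every integer kernel vector of $L$ decomposes into conformal vectors of $\ell_1$-norm at most $|\Gamma|+1$ whose $x$-part has $\ell_1$-norm at most $2$, and then combines the stacking lemma for $\left(\begin{smallmatrix}F\\ L\end{smallmatrix}\right)$ with the few-rows Graver bound applied to $F\cdot\G(L)$. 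Your proposal contains no substitute for this step, so the claimed runtime does not follow. Nor can you sidestep it via dual treedepth: your dual graph is a $k$-vertex backbone joined to a path on up to $2^k$ chain rows, so $\td_D(A)=\Theta(k)$, and that route gives only $(\|A\|_\infty+1)^{2^{\td_D(A)}} = 2^{2^{\Oh(k)}}$, no better than part~\ref{thm:sumcol:convex}. Parts~\ref{thm:sumcol:nfold} and~\ref{thm:sumcol:convex} are complete as proposed.
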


\begin{theorem} \label{thm:maxcut}
\textsc{Max-$q$-Cut} has a \textsc{LinIP} model with an indefinite quadratic objective and can be solved in time $g(q,k)\log |G|$ for some computable function $g$.
\end{theorem}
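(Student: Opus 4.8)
The plan is to exploit neighborhood diversity to reduce \textsc{Max-$q$-Cut} to optimizing a quadratic objective over a polytope of fixed dimension. Write $V_1,\dots,V_k$ for the neighborhood-diversity types, $n_i := |V_i|$, and recall that each $V_i$ is either a clique or an independent set, and that for $i\neq i'$ the pair $V_i,V_{i'}$ induces either a complete bipartite graph or no edges at all. The crucial consequence is that any two vertices in the same type are interchangeable: the number of edges cut by a $q$-coloring depends only on how many vertices of each type receive each color, not on the identities of those vertices.

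Accordingly, first I would introduce integer variables $x_{i,j}$ for $i\in[k]$, $j\in[q]$, where $x_{i,j}$ records the number of vertices of type $V_i$ colored $j$. This gives $qk$ variables, i.e.\ fixed dimension. The feasible region is cut out by the $k$ partition constraints $\sum_{j\in[q]} x_{i,j} = n_i$ together with $x_{i,j}\ge 0$, so the constraint matrix and right-hand side are encoded in $\Oh(k^2\log|G|)$ bits.

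Next I would write the objective. Since the total number of edges is fixed, maximizing the cut is the same as minimizing the number of monochromatic edges, which by the interchangeability observation equals
$$M(\vex) \;=\; \sum_{i:\,V_i \text{ clique}}\ \sum_{j\in[q]} \binom{x_{i,j}}{2} \;+\; \sum_{\{i,i'\}\text{ adjacent}}\ \sum_{j\in[q]} x_{i,j}\,x_{i',j}.$$
The first sum counts monochromatic edges inside clique-types (independent types contribute nothing internally), and the second counts monochromatic edges across adjacent type-pairs. This $M$ is quadratic; the bilinear cross terms $x_{i,j}x_{i',j}$ make the associated form indefinite in general, so we genuinely land in the indefinite-quadratic regime rather than the convex one.

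Finally, I would feed this \textsc{LinIP} instance --- $qk$ variables, an indefinite quadratic objective, and $\Oh(k)$ linear constraints --- to the algorithm for indefinite quadratic integer programming in fixed dimension recalled in Section~\ref{sec:toolbox}. Its running time is of the form $g'(qk)\cdot\langle A,\veb\rangle$ for a computable $g'$, and since all data are bounded by $|G|$, the encoding length $\langle A,\veb\rangle$ is $\Oh(k^2\log|G|)$, yielding the claimed $g(q,k)\log|G|$ bound after absorbing the polynomial-in-$k$ factor into $g$. The modeling steps are routine once interchangeability is established; the only genuinely hard ingredient is the fixed-dimension indefinite quadratic solver itself, which I treat as a black box from the toolbox --- that is where all the difficulty (and the non-elementary dependence hidden in $g$) resides.
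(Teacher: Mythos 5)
Your proposal is correct and takes essentially the same approach as the paper: Model~\ref{mod:mc} uses the very same variables $x_{i\alpha} = |V_i \cap W_\alpha|$, the same partition constraints $\sum_\alpha x_{i\alpha} = |V_i|$, and the same black-box invocation of the Lokshtanov/Zemmer fixed-dimension indefinite quadratic IP algorithm, with the same runtime accounting ($qk$ variables, unit coefficients, right-hand sides bounded by $|G|$). The only cosmetic difference is in the objective: you minimize monochromatic edges (handling clique-type internal edges via $\binom{x_{i,j}}{2}$ terms), while the paper writes the cut count over edges (and loops) of the type graph directly; the two are equivalent because the total number of edges is fixed, and both are indefinite quadratics.
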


\subsection{Related Work}
Graphs of neighborhood diversity constitute an important stepping stone in the design of algorithms for dense graphs, because they are in a sense the simplest of dense graphs~\cite{AlvesDFKSS16,AravindKKL:2017,BonnetS:2016,FialaGKKK:2017,Ganian12,Gargano:2015,MasarikT:16}.
Studying the complexity of \textsc{Capacitated Dominating Set} on graphs of bounded neighborhood diversity is especially interesting because it was shown to be \Wh{1} parameterized by treewidth by Dom et al.~\cite{DomLSV:2008}.
\textsc{Sum Coloring} was shown to be \FPT parameterized by treewidth~\cite{Jansen:1997}; its complexity parameterized by clique-width is open as far as we know.
\textsc{Max-$q$-Cut} is \FPT parameterized by $q$ and treewidth (by reduction to CSP), but \Wh{1} parameterized by clique-width~\cite{FominGLS14}.

\subsection{Preliminaries} \label{sec:prelims}
\sv{\toappendix{\section{Additional Material to Section~\ref{sec:prelims}}}}
For positive integers $m,n$ with $m \leq n$ we set $[m,n] = \{m,\ldots, n\}$ and $[n] = [1,n]$.
We write vectors in boldface (e.g., $\vex, \vey$) and their entries in normal font (e.g., the $i$-th entry of~$\vex$ is~$x_i$).
\lv{For an integer $a \in \ZZ$, we denote by $\langle a \rangle = 1 + \log_2 a$ the binary encoding length of $a$; we extend this notation to vectors, matrices and tuples of these objects.
For example, $\langle A, \veb \rangle = \langle A \rangle + \langle \veb \rangle$, and $\langle A \rangle = \sum_{i,j} \langle a_{ij} \rangle$.}
For a graph~$G$ we denote by $V(G)$ its set of vertices, by $E(G)$ the set of its edges, and by $N_G(v) = \{u \in V(G) \mid uv \in E(G) \}$ the (open) neighborhood of a vertex $v \in V(G)$.
For a matrix $A$ we define
\begin{itemize}
\item the \emph{primal graph} $G_P(A)$, which has a vertex for each column and two vertices are connected if there exists a row such that both columns are non-zero, and,
\item the \emph{dual graph} $G_D(A) = G_P(A^{\transpose})$, which is the above with rows and columns swapped.
\end{itemize}
We call the treedepth and treewidth of $G_P(A)$ the \emph{primal treedepth $\td_P(A)$} and \emph{primal treewidth $\tw_P(A)$}, and analogously for the \emph{dual treedepth $\td_D(A)$} and \emph{dual treewidth $\tw_D(A)$}.

We define a partial order $\sqsubseteq$ on $\RR^n$ as follows: for $\vex,\vey\in\RR^n$ we write $\vex\sqsubseteq \vey$ and say that $\vex$ is
{\em conformal} to $\vey$ if $x_iy_i\geq 0$ (that is, $\vex$ and $\vey$ lie in the same orthant) and $|x_i|\leq |y_i|$ for all $i \in [n]$.
It is well known that every subset of $\ZZ^n$ has finitely many
$\sqsubseteq$-minimal elements.

\begin{definition}[Graver basis]\label{def:graver}
The {\em Graver basis} of $A \in \ZZ^{m \times n}$ is the finite set $\G(A)\subset\ZZ^n$ of $\sqsubseteq$-minimal elements in $\{\vex\in\ZZ^n\,\mid\, A\vex=0,\ \vex\neq \mathbf{0}\}$.
\end{definition}

\subparagraph{Neighborhood Diversity.}
Two vertices $u,v$ are called \emph{twins} if $N(u)\setminus\{v\} = N(v)\setminus\{u\}$.
The \emph{twin equivalence} is the relation on vertices of a graph where two vertices are equivalent if and only if they are twins.
\begin{definition}[Lampis~\cite{Lampis12}]
The \emph{neighborhood diversity} of a graph $G$, denoted by $\nd(G)$, is the minimum number $k$ of classes (called \textit{types}) of the twin equivalence of $G$.
\end{definition}

\toappendix{
\lv{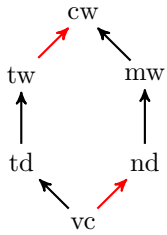
\begin{figure}[bt]}
\sv{\begin{figure}[!h]}
  \begin{minipage}[c]{0.2\textwidth}
    \begin{tikzpicture}
  [align=center, node distance=1.15cm]
  \tikzstyle{parameter}=[inner sep=3pt]

  \node[parameter](cw) {cw};
  \node[parameter, below left of=cw](tw) {tw};
  \node[parameter, below of=tw](td) {td};
  \node[parameter, below right of=cw](mw) {mw};
  \node[parameter, below of=mw](nd) {nd};
  \node[parameter, below left of=nd](vc) {vc};

  \draw[->,thick,red,>=stealth'] (tw) -- (cw);
  \draw[->,thick,>=stealth', red] (vc) -- (nd);
  \draw[->,thick,>=stealth'] (mw) -- (cw);
  \draw[->,thick,>=stealth'] (nd) -- (mw);
  \draw[->,thick,>=stealth'] (vc) -- (td);
  \draw[->,thick,>=stealth'] (td) -- (tw);
  
\end{tikzpicture}
  \end{minipage}
  \begin{minipage}[c]{0.78\textwidth}
    \caption[A map of parameters.]{\label{fig:parameterMap}A map of assumed parameters: $\vc$ is the vertex cover number, $\tw$ is treewidth, $\cw$ is clique-width, $\nd$ is  neighborhood diversity, and $\mw$ is modular-width.
Black arrow stands for linear upper bounds, while a red arrow stands for exponential upper bounds.
Note that treewidth and neighborhood diversity are incomparable because,
     \begin{minipage}{\linewidth}
     \vspace{-.3cm}
     \begin{align*}
     \tw(K_n) = n-1 && \nd(K_n) = 1 \\
     \tw(P_n) = 1  && \nd(P_n) = n,
     \end{align*}
     \end{minipage}
    where $K_n$ and $P_n$ are the complete graph and path on $n$ vertices, respectively.}
  \end{minipage}
\end{figure}
}


We denote by $V_i$ the classes of twin equivalence on $G$ for $i \in [k]$.
A graph $G$ with $\nd(G) = k$ can be described in a compressed way using only $\OhOp{\log |G| \cdot k^2}$ space by its type graph, which is computable in linear time~\cite{Lampis12}:
\begin{definition}
The \emph{type graph} $T(G)$ of a graph $G$ is a graph on $k=\nd(G)$ vertices $[k]$, where each $i$ is assigned weight $|V_i|$, and where ${i,j}$ is an edge or a loop in $T(G)$ if and only if two distinct vertices of $V_i$ and $V_j$ are adjacent.
\end{definition}

\subparagraph{Modeling.}
Loosely speaking, by \emph{modeling} an optimization problem $\Pi$ as a different problem $\Lambda$ we mean encoding the features of $\Pi$ by the features of $\Lambda$, such that the optima of $\Lambda$ encode \emph{at least some} optima of $\Pi$.
Modeling differs from reduction by highlighting which features of $\Pi$ are captured by which features of $\Lambda$.

In particular, when modeling $\Pi$ as an integer program, the same feature of $\Pi$ can often be encoded in several ways by the variables, constraints or the objective.
For example, an objective of $\Pi$ may be encoded as a convex objective of the IP, or as a linear objective which is lower bounded by a convex constraint; similarly a constraint of $\Pi$ may be modeled as a linear constraint of IP or as minimizing a penalty objective function expressing how much is the constraint violated.
Such choices greatly influence which algorithms are applicable to solve the resulting model.
Specifically, in our models we focus on the parameters \#variables (dimension), \#constraints, the largest coefficient in the constraints $\|A\|_\infty$ (abusing the notation slightly when the constraints are not linear), the largest right hand side $\|\veb\|_\infty$, the largest \emph{domain} $\|\veu-\vel\|_\infty$, and the largest coefficient of the objective function $\|\vew\|_\infty$ (linear objectives), $\|Q\|_\infty$ (quadratic objectives) or $f_{\max} = \max_{\vex: \vel \leq \vex \leq \veu} |f(\vex)|$ (in general), and noting other relevant features.

\subparagraph*{Solution structure.} We concur with Downey and Fellows that \FPT and structure are essentially one~\cite{DowneyF13}.
Here, it typically means restricting our attention to certain structured solutions and showing that nevertheless such structured solutions contain optima of the problem at hand.
We always discuss these structural properties before formulating a model.

\section{Integer Programming Toolbox}\label{sec:toolbox}
We give a list of the most relevant algorithms solving IP, highlighting their fastest known runtimes (marked \textbf{$\top$}), typical use cases and strengths (\textbf{$+$}), limitations (\textbf{$-$}), and a list of references to the algorithms (\textbf{$\heartsuit$}) and their most illustrative applications (\textbf{$\triangleright$}), both in chronological order.
\lv{We are deliberately terse here and defer a more nuanced discussion to Appendix~\ref{sec:cip_pc}.}

\subsection{Small Dimension}
The following tools generally rely on results from discrete geometry.
\lv{
Consider for example Lenstra's theorem: it can be (hugely) simplified as follows.
Let $S = \{\vex \mid A\vex \leq \veb\} \subseteq \RR^n$; then we can decide whether $S \cap \ZZ^n$ by the following recursive argument:
\begin{enumerate}
\item Either the volume of $S$ is too large not to contain an integer point by Minkowski's first theorem,
\item or the volume of $S$ is small and $S$ must be ``flat'' in some direction by the flatness theorem; then, we can cut $S$ up into few lower-dimensional slices and recurse into these.
\end{enumerate}
Being able to optimize an objective then follows from testing feasibility by binary search.
}

\medskip
\tool{ILP in small dimension.} Problem~\eqref{ILP} with small $n$.
\begin{ilps}
\task*[\textbf{$\top$}] $n^{2.5n} \langle A, \veb, \vew \r$~\cite{Kannan87,FrankTardos87}
\task*[\textbf{$+$}] Can use large coefficients, which allows encoding logical connectives using Big-$M$ coefficients~\cite{ilptricks}. Runs in polynomial space. Most people familiar with ILP.
\task*[\textbf{$-$}] Small dimension can be an obstacle in modeling polynomially many ``types'' of objects~\cite[Challenge \#2]{BredereckCFGNW:2014}.
Models often use exponentially many variables in the parameter, leading to double-exponential runtimes (applies to all small dimension techniques below).
Encoding a convex objective or constraint requires many constraints (cf. Model~\ref{mod:cds:linear}). Big-$M$ coefficients are impractical.
\task*[\textbf{$\heartsuit$}]Lenstra~\cite{Lenstra83}, Kannan~\cite{Kannan87}, Frank and Tardos~\cite{FrankTardos87}
\task*[\textbf{$\triangleright$}] Niedermeier (\textsc{Closest String})~\cite{Niedermeier04} Fellows et al. (graph layout problems)~\cite{FellowsLMRS08} Jansen and Solis-Oba (scheduling; MILP column generation technique)~\cite{JansenS:2010}, Fiala et al. (graph coloring)~\cite{FialaGKKK:2017}, Faliszewski et al. (computational social choice; big-$M$ coefficients to express logical connectives)~\cite{FaliszewskiGKT:2018}.
\end{ilps}

\tool{Convex IP in small dimension.} Problem~\eqref{IP} with $f$ a convex function; $S$ can be represented by polynomial inequalities, a first-order oracle, a separation oracle, or as a semialgebraic set.
\begin{ilps}
\task*[\textbf{$\top$}] $n^{\frac{4}{3}n} \langle B \r$, where $S$ is contained in a ball of radius $B$~\cite{DadushV:2013}.
\task*[\textbf{$+$}] Strictly stronger than ILP. Representing constraints implicitely by an oracle allows better dependence on instance size (cf. Model~\ref{mod:cds:convex}).
\task*[\textbf{$-$}] Exponential space. Algorithms usually impractical. Proving convexity can be difficult.
\task*[\textbf{$\heartsuit$}] \lv{Grötschel, Lovász, and Schrijver~\cite[Theorem 6.7.10]{GLS} (weak separation oracle), Khachiyan and Porkolab~\cite{KhachiyanP00} (semialgebraic sets), Heinz~\cite{Heinz05}, whose algorithm is superseded by Hildebrand and Köppe~\cite{HildebrandK13} (polynomials), Dadush, Peikert and Vempala~\cite{DadushV:2013} randomized and Dadush and Vempala~\cite{DadushV:2013} (strong separation oracle), Oertel, Wagner, and Weismantel~\cite{OertelWW14} reduction to Mixed ILP subproblems (first-order oracle).}\sv{\cite[Theorem 6.7.10]{GLS} (weak separation oracle), \cite{KhachiyanP00} (semialgebraic sets), \cite{Heinz05,HildebrandK13} (polynomials), \cite{DadushPV11} randomized /~\cite{DadushV:2013} deterministic (strong separation oracle), \cite{OertelWW14} reduction to Mixed ILP subproblems (first-order oracle).}
\task*[$\triangleright$] Hermelin et al. (multiagent scheduling; convex constraints)~\cite{HermelinKSTW:2015}, Bredereck et al. (bribery; convex objective)~\cite{BredereckFNST:2015}, Mnich and Wiese, Knop and Koutecký (scheduling; convex objective)~\cite{MnichW:2015,KnopK:2016}, Knop et al. (various problems; convex objectives)~\cite{KnopMT18}, Model~\ref{mod:cds:convex}
\end{ilps}

\tool{Indefinite quadratic IP in small dimension.} Problem~\eqref{LinIP} with $f(\vex) = \vex^{\intercal} Q \vex$ indefinite (non-convex) quadratic.
\begin{ilps}
\task*[\textbf{$\top$}] $g(n, \|A\|_\infty, \|Q\|_\infty) \langle \veb \r$~\cite{Zemmer:2017}
\task*[\textbf{$+$}] Currently the only tractable indefinite objective.
\task*[\textbf{$-$}] Limiting parameterization.
\task*[\textbf{$\heartsuit$}] \lv{Lokshtanov~\cite{Lokshtanov:2015}, Zemmer~\cite{Zemmer:2017}}\sv{\cite{Lokshtanov:2015,Zemmer:2017}}
\task*[\textbf{$\triangleright$}] Lokshtanov (\textsc{Optimal Linear Arrangement}~\cite{Lokshtanov:2015}, Model~\ref{mod:mc}
\end{ilps}

\tool{Parametric ILP in small dimension.}
Given a $Q = \{\veb \in \RR^m \mid B \veb \leq \ved\}$, decide
\[
\forall \veb \in Q \cap \ZZ^m \, \exists \vex \in \ZZ^n:\, A\vex \leq \veb \enspace .
\]
\begin{ilps}
\task*[\textbf{$\top$}] $g(n,m) \poly(\|A, B, \ved\|_\infty)$~\cite{EisenbrandShmonin2008}
\task*[\textbf{$+$}] Models one quantifier alternation. Useful in expressing game-like constraints (e.g., ``$\forall$ moves $\exists$ a counter-move''). Allows unary big-$M$ coefficients to model logic~\cite[Theorem 4.5]{KnopKM:2018}.
\task*[\textbf{$-$}] Input has to be given in unary (vs. e.g. Lenstra's algorithm).
\task*[\textbf{$\heartsuit$}] \lv{Eisenbrand and Shmonin~\cite[Theorem 4.2]{EisenbrandShmonin2008}, Crampton et al.~\cite[Corollary 1]{CramptonEtAl2017}}\sv{\cite[Theorem 4.2]{EisenbrandShmonin2008}, \cite[Corollary 1]{CramptonEtAl2017}}
\task*[\textbf{$\triangleright$}] Crampton et al. (resiliency)~\cite{CramptonEtAl2017}, Knop et al. (Dodgson bribery)~\cite{KnopKM:2018}
\end{ilps}

\subsection{Variable Dimension}
In this section it will be more natural to consider the following \emph{standard form} of \eqref{LinIP}
\begin{equation}
\min \{f(\vex)\mid A\vex = \veb, \, \vel \leq \vex \leq \veu, \, \vex \in \ZZ^n\}  \tag{SLinIP},\label{StandardLinIP}
\end{equation}
where $\veb \in \ZZ^m$ and $\vel, \veu \in \ZZ^n$.
Let $L = \langle f_{\max}, \veb, \vel, \veu \rangle$.
In contrast with the previous section, the following algorithms typically rely on algebraic arguments and dynamic programming.
\lv{
The large family of algorithms based on Graver bases (see below) can be described as \emph{iterative augmentation} methods, where we start with a feasible integer solution $\vex_0$ and iteratively find a step $\veg \in \{\vex \in \ZZ^n \mid A\vex = \mathbf{0}\}$ such that $\vex_0 + \veg$ is still feasible and improves the objective.
Under a few additional assumptions on $\veg$ it is possible to prove quick convergence of such methods.
}

\medskip
\tool{ILP with few rows.} Problem~\eqref{StandardLinIP} with small $m$ and a linear objective $\vew\vex$ for $\vew \in \ZZ^n$.
\begin{ilps}
\task*[\textbf{$\top$}] $\Oh((m\|A\|_\infty)^{2m}) \langle \veb \r$ if $\vel \equiv \mathbf{0}$ and $\veu\equiv +\mathbf{\infty}$, and $n \cdot (m \|A\|_\infty)^{\Oh(m^2)} \langle \veb, \vel, \veu \r$ in general~\cite{JansenR:2018}
\task*[\textbf{$+$}] Useful for configuration IPs with small coefficients, leading to exponential speed-ups. Best runtime in the case without upper bounds.
Linear dependence on $n$.
\task*[\textbf{$-$}] Limited modeling power. Requires small coefficients.
\task*[\textbf{$\heartsuit$}] \lv{Papadimitriou~\cite{Papadimitriou:1981}, Eisenbrand and Weismantel~\cite{EisenbrandW:2018}, Jansen and Rohwedder~\cite{JansenR:2018}}\sv{\cite{Papadimitriou:1981,EisenbrandW:2018,JansenR:2018}}
\task*[\textbf{$\triangleright$}] Jansen and Rohwedder (scheduling)~\cite{JansenR:2018}
\end{ilps}

\begin{minipage}{.45\textwidth}
\begin{align*}
A_{\text{nfold}} = \left(
\begin{array}{cccc}
  A_1    & A_1    & \cdots & A_1    \\
  A_2    & 0      & \cdots & 0      \\
  0      & A_2    & \cdots & 0      \\
  \vdots & \vdots & \ddots & \vdots \\
  0      & 0      & \cdots & A_2    \\
\end{array}
\right)
\end{align*}
\end{minipage}
\begin{minipage}{.45\textwidth}
\begin{align*}
A_{\text{stoch}} = \left(
\begin{array}{ccccc}
B_1    & B_2 & 0   & \cdots & 0    \\
B_1    & 0   & B_2   & \cdots & 0      \\
\vdots & \vdots & \vdots & \ddots & \vdots \\
B_1      & 0   & 0 & \cdots & B_2    \\
\end{array}
\right)
\end{align*}
\end{minipage}

\tool{$\mathbf{n}$-fold IP, tree-fold IP, and dual treedepth.}
\emph{$n$-fold IP} is problem~\eqref{StandardLinIP} in dimension $nt$, with $A = A_{\text{nfold}}$ for some two blocks $A_1 \in \ZZ^{r \times t}$ and $A_2 \in \ZZ^{s \times t}$, $\vel, \veu \in \ZZ^{nt}$, $\veb \in \ZZ^{r+ns}$, and with $f$ a separable convex function, i.e., $f(\vex) = \sum_{i=1}^{n} \sum_{j=1}^t  f_j^i(x_j^i)$ with each $f_j^i: \ZZ \to \ZZ$ convex.
\emph{Tree-fold IP} is a generalization of $n$-fold IP where the block $A_2$ is itself replaced by an $n$-fold matrix, and so on, recursively, $\tau$ times.
Tree-fold IP has bounded $\td_D(A)$.

\begin{ilps}
\task*[\textbf{$\top$}] $(\|A\|_\infty rs)^{\Oh(r^2s + rs^2)} (nt)^2 \log (nt) \langle L \r$ $n$-fold IP~\cite{AltmanovaKK:2018,EisenbrandHK:2018}; $(\|A\|_\infty+1)^{2^{\td_D(A)}} (nt)^2 \log (nt) \langle L \r$ for~\eqref{StandardLinIP}~\cite{KouteckyLO:2018}.
\task*[\textbf{$+$}] Variable dimension useful in modeling many ``types'' of objects~\cite{KnopKM:2017,KnopKM:2018}.
Useful for obtaining exponential speed-ups (not only configuration IPs).
Seemingly rigid format is in fact not problematic (blocks can be different provided coefficients and dimensions are small).
\task*[\textbf{$-$}] Requires small coefficients.
\task*[\textbf{$\heartsuit$}] \lv{Hemmecke et al.~\cite{HemmeckeOR13}, Knop et al.~\cite{KnopKM:2017b}, Chen and Marx~\cite{ChenM:2018}, Eisenbrand et al.~\cite{EisenbrandHK:2018}, Altmanová et al.~\cite{AltmanovaKK:2018}, Koutecký et al.~\cite{KouteckyLO:2018}}\sv{\cite{HemmeckeOR13,KnopKM:2017b,ChenM:2018,EisenbrandHK:2018,AltmanovaKK:2018,KouteckyLO:2018}}
\task*[\textbf{$\triangleright$}] Knop and Koutecký (scheduling with many machine types)~\cite{KnopK:2016}, Knop et al. (bribery with many voter types)~\cite{KnopKM:2017,KnopKM:2017b}, Chen and Marx (scheduling; tree-fold IP)~\cite{ChenM:2018}, Jansen et al. (scheduling EPTAS)~\cite{JansenKMR:2018}, Model~\ref{mod:sc:nfold}
\end{ilps}

\tool{2-stage and multi-stage stochastic IP, and primal treedepth.}
2-stage stochastic IP is problem~\eqref{StandardLinIP} with $A = A_{\text{stoch}}$ and $f$ a separable convex function; multi-stage stochastic IP is problem~\eqref{StandardLinIP} with a multi-stage stochastic matrix, which is the transpose of a tree-fold matrix; multi-stage stochastic IP is in turn generalized by IP with small primal treedepth $\td_P(A)$.
\begin{ilps}
\task*[\textbf{$\top$}] $g(\td_P(A), \|A\|_\infty) n^2 \log n \langle L \r$, $g$ computable~\cite{KouteckyLO:2018}
\task*[\textbf{$+$}] Similar to Parametric ILP in fixed dimension, but quantification $\forall \veb \in Q \cap \ZZ^n$ is now over a polynomial sized but possibly non-convex set of explicitely given right hand sides.
\task*[\textbf{$-$}] Not clear which problems are captured. Requires small coefficients. Parameter dependence $g$ is possibly non-elementary; no upper bounds on $g$ are known, only computability.
\task*[\textbf{$\heartsuit$}] \lv{Hemmecke and Schultz~\cite{HemmeckeS:2001}, Aschenbrenner and Hemmecke~\cite{AschenbrennerH:2007}, Koutecký et al.~\cite{KouteckyLO:2018}}\sv{\cite{HemmeckeS:2001,AschenbrennerH:2007,KouteckyLO:2018}}
\task*[\textbf{$\triangleright$}] N/A
\end{ilps}

\tool{Small treewidth and Graver norms.}
Let $g_\infty(A) = \max_{\veg \in \G(A)} \|\veg\|_\infty$ and $g_1(A) = \max_{\veg \in \G(A)} \|\veg\|_1$ be maximum norms of elements of $\G(A)$.
\begin{ilps}
\task*[\textbf{$\top$}] $\min \{ g_\infty(A)^{\Oh(\tw_P(A))}, g_1(A)^{\Oh(\tw_D(A))} \} n^2 \log n \langle L \r$~\cite{KouteckyLO:2018}
\task*[\textbf{$+$}] Captures IPs beyond the classes defined above (cf. Section~\ref{sec:sumcol:graver}).
\task*[\textbf{$-$}] Bounding $g_1(A)$ and $g_\infty(A)$ is often hard or impossible.
\task[\textbf{$\heartsuit$}] \lv{Koutecký et al.~\cite{KouteckyLO:2018}}\sv{\cite{KouteckyLO:2018}}
\task[\textbf{$\triangleright$}] Model~\ref{mod:sc:graver}
\end{ilps}

\section{Convex Constraints: Capacitated Dominating Set}\label{sec:convexConstraints}
\sv{\toappendix{\section{Additional Material to Section~\ref{sec:convexConstraints}}}}
\prob{\textsc{Capacitated Dominating Set}}
{A graph $G=(V,E)$ and a capacity function $c\colon V \to \N$.}
{%
Find a smallest possible set $D \subseteq V$ and a mapping $\delta\colon V \setminus D \to D$ such that for each $v \in D$, $|\delta^{-1}(v)| \leq c(v)$.
}

\subparagraph*{Solution Structure.}
Let $<_c$ be a linear extension of ordering of $V$ by vertex capacities, i.e., $u <_c v$ if $c(u) \leq c(v)$.
For $i \in T(G)$ and $\ell \in [|V_i|]$ let $V_i[1:\ell]$ be the set of the first $\ell$ vertices of $V_i$ in the ordering $<_c$ and let $f_i(\ell) = \sum_{v \in V_i[2:\ell]} c(v)$; for $\ell > |V_i|$ let $f_i(\ell) = f_i(|V_i|)$.
Let $D$ be a solution and $D_i = D \cap V_i$.
We call the functions $f_i$ the \emph{domination capacity functions}.
Intuitively, $f_i(\ell)$ is the maximum number of vertices dominated by $V_i[1:\ell]$.
Observe that since $f_i(\ell)$ is a partial sum of a non-increasing sequence of numbers, it is a piece-wise linear concave function.
We say that $D$ is \emph{capacity-ordered} if, for each $i \in T(G)$, $D_i = V_i[1:|D_i|]$.
The following observation allows us to restrict our attention to such solutions; the proof goes by a simple exchange argument.
\sv{\begin{lemma}[\appmark]\label{lem:CDScapacityOrientedOPT}}
\lv{\begin{lemma}\label{lem:CDScapacityOrientedOPT}}
There is a capacity-ordered optimal solution.
\end{lemma}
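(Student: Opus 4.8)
The plan is to prove the statement by a local exchange argument that, one type at a time, rewrites an arbitrary optimal solution into a capacity-ordered one of the same cardinality. Fix an optimal solution $(D,\delta)$ and a type $i \in T(G)$, and write $D_i = D \cap V_i$. If $D_i \ne V_i[1:|D_i|]$, then there are vertices $w \in V_i[1:|D_i|] \setminus D_i$ and $u \in D_i \setminus V_i[1:|D_i|]$. Since $w$ lies in the initial segment of $V_i$ under $<_c$ while $u$ does not, we have $c(w) \ge c(u)$; this is exactly the ordering that makes the domination-capacity function $f_i$ a partial sum of a non-increasing sequence, i.e.\ concave. I would then swap the roles of $u$ and $w$.

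The key observation that makes the swap clean is that $u$ and $w$ are twins, so the transposition $\sigma = (u\; w)$ that exchanges them and fixes every other vertex is an automorphism of $G$. I would apply $\sigma$ to the whole solution, setting $D' = \sigma(D)$ and $\delta' = \sigma \circ \delta \circ \sigma^{-1}$, that is, $\delta'(\sigma(v)) = \sigma(\delta(v))$ for $v \in V \setminus D$. Because $u \in D$ and $w \notin D$, we obtain $w \in D'$ and $u \notin D'$, while $D'$ agrees with $D$ on every other vertex; in particular $|D'| = |D|$ and $D'_i = (D_i \setminus \{u\}) \cup \{w\}$ agrees with $V_i[1:|D_i|]$ in one more vertex than $D_i$ did.

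It then remains to check that $(D', \delta')$ is again a feasible capacitated dominating set. Any implicit adjacency requirement (that a vertex be dominated by a neighbor, $\delta(v) \in N(v)$) is preserved because $\sigma$ is a graph automorphism, so $\delta'$ is a legitimate domination map onto $D'$. For the capacity constraints, every dominator other than $w$ keeps exactly its old load, since $\sigma$ only relabels the fibers: $|(\delta')^{-1}(v)| = |\delta^{-1}(v)| \le c(v)$ for $v \in D' \setminus \{w\} = D \setminus \{u\}$. The single fiber that changes owner is the one of $u$, which passes to $w$, and here $|(\delta')^{-1}(w)| = |\delta^{-1}(u)| \le c(u) \le c(w)$ by the choice $c(w) \ge c(u)$. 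Hence $(D',\delta')$ is feasible, of the same size, and strictly closer to capacity-ordered.

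Finally I would argue termination: each swap increases the potential $\sum_{i \in T(G)} |D_i \cap V_i[1:|D_i|]|$ by one and never changes $|D|$, so after at most $\sum_i |D_i| \le |G|$ swaps we reach a solution with $D_i = V_i[1:|D_i|]$ for every $i$, which is the desired capacity-ordered optimum. I expect the only real subtlety to be the feasibility bookkeeping after removing $u$ from the dominating set — ensuring $u$ itself stays dominated and that no dominator is overloaded — and routing everything through the twin-automorphism $\sigma$ is precisely what dispatches all of these cases at once (whether or not $V_i$ carries a loop, and whether or not $u$ previously dominated $w$), collapsing the entire check to the single inequality $c(u) \le c(w)$.
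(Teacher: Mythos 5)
Your proof is correct and rests on the same skeleton as the paper's: find a type where $D_i$ deviates from the initial segment $V_i[1:|D_i|]$, swap one misplaced vertex of $D_i$ for one missing vertex of the segment, argue feasibility from the single capacity inequality between the two, and iterate via a potential function (your increasing $\sum_i |D_i \cap V_i[1:|D_i|]|$ is equivalent to the paper's decreasing $\sum_i |D_i \triangle V_i[1:|D_i|]|$). The genuine difference is how the new domination map is produced. The paper redefines $\delta'$ by explicit case analysis, redirecting the fiber of the removed vertex to the added one and giving the removed vertex the added vertex's old dominator; you instead note that the two swapped vertices are twins, so the transposition $\sigma$ exchanging them is an automorphism of $G$, and conjugate the whole solution: $D' = \sigma(D)$, $\delta' = \sigma \circ \delta \circ \sigma^{-1}$. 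This packaging buys something concrete: in the corner case where the incoming vertex was previously dominated by the outgoing one, the paper's explicit rule (``$\delta'(v) = y$ if $\delta(u) = y$'') literally assigns the outgoing vertex to itself and needs a silent patch, whereas your conjugation yields the correct assignment automatically, and it likewise dispatches adjacency preservation and intra-type loops with no case analysis. One remark on orientation: the paper's prose defines $u <_c v$ by $c(u) \le c(v)$, but its concavity claim for $f_i$ and its own proof both presuppose the opposite, non-increasing order; you resolved this ambiguity the intended way (largest capacities first), which is exactly what makes your inequality $c(w) \ge c(u)$ valid.
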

\toappendix{
\begin{proof}\sv{[Proof of Lemma~\ref{lem:CDScapacityOrientedOPT}]}
Consider any solution $D$ together with a mapping $\delta: V \setminus D \to D$ witnessing that $D$ is a solution.
Our goal is to construct a capacity-ordered solution $\hat{D}$ which is at least as good as $D$.
If $D$ itself is capacity-ordered, we are done.
Assume the contrary; thus, there exists an index $i \in [k]$ and a vertex $v \in D_i$ such that $v \not\in V_i[1:|D_i|]$, and consequently there exists a vertex $u \in V_i[1:|D_i|]$ such that $u \not\in D_i$.

Let $D' \subseteq V$ be defined by setting $D'_i = (D_i \cup \{u\}) \setminus \{v\}$ and $D'_j = D_j$ for each $j \neq i$.
We shall define a mapping $\delta'$ witnessing that $D'$ is again a solution.
Let $\delta'(x) = y$ iff $\delta(x) = y$ and $x \neq u$ and $y \neq v$, let $\delta'(x) = u$ whenever $\delta(x) = v$ and let $\delta'(v) = y$ if $\delta(u) = y$.
Clearly $|(\delta')^{-1}(x)| \leq c(x)$ for each $x \in D$ because $|(\delta')^{-1}(x)| = |\delta^{-1}(x)|$ when $x \not\in \{u,v\}$, and $|(\delta')^{-1}(u)| = |\delta^{-1}(v)|$ and $c(u) \geq c(v)$.

If $D'$ itself is not yet a capacity-ordered solution, we repeat the same swapping argument.
Observe that \begin{enumerate*}
\item $\sum_{i=1}^k |D_i \triangle V_i[1:|D_i|] > \sum_{i=1}^k |D'_i \triangle V_i[1:|D'_i|]$, i.e., $D'$ is closer than $D$ to being capacity-ordered, and,
\item the size of $D'$ compared to $D$ does not increase.
\end{enumerate*}
Finally, when $\sum_{i=1}^k |D'_i \triangle V_i[1:|D'_i|] = 0$, $D'$ is our desired capacity-ordered solution $\hat{D}$.
\end{proof}
}

\vskip -.2cm
Observe that a capacity-ordered solution is fully determined by the sizes $|D_i|$ and $<_c$ rather than the actual sets $D_i$, which allows modeling CDS in small dimension.

\begin{model}[\textsc{Capacitated Dominating Set} as convex IP in fixed dimension]\label{mod:cds:convex}~\\
\textbf{Variables \& notation:}
\begin{tasks}[style=itemize](2)
		\task $x_i = |D_i|$
		\task $y_{ij} = |\delta^{-1}(D_i) \cap D_j|$
		\task* $f_i(x_i)$ = maximum \#vertices dominated by $D_i$ if $|D_i|=x_i$
	\end{tasks}
\textbf{Objective \& Constraints:}
\begin{align}
\min & \sum_{i \in T(G)} x_i & & & \comment{\min |D| = \sum_{i \in T(G)} |D_i|} \tag{cds:cds-obj} \label{eqn:cds_start}  \\
\sum_{j \in N_{T(G)}(i)} y_{ij} &\leq f_i(x_i) & \forall i \in T(G) & & \comment{\text{respect capacities}} \tag{cds:cap} \label{eqn:cds_convex}\\
\sum_{i \in N_{T(G)}(j)} y_{ij} &\geq |V_j| - x_j & \forall j \in T(G) & & \comment{\text{every $v \in V_j \setminus D_j$ dominated}} \tag{cds:dom} \\
0 \leq x_i &\leq |V_i| & \forall i \in T(G) & & \tag{cds:bounds} \label{eqn:cds_end}
\end{align}
\textbf{Parameters \& Notes:}
\vspace{-.6em}
\begin{itemize}
\item
\begin{tabular}{c c c c c c }
\#vars & \#constraints & $\|A\|_\infty$ & $\|\veb\|_\infty$ & $\|\vel,\veu\|_\infty$ & $\|\vew\|_\infty$ \\
$\Oh(k^2)$ & $\Oh(k)$ & $1$ & $|G|$ & $|G|$ & $1$
\end{tabular}
\item constraint~\eqref{eqn:cds_convex} is convex, since it bounds the area \emph{under} a concave function, and is piece-wise linear. \qed
\end{itemize}
\end{model}

\toappendix{
\lv{\begin{figure}[bt]}
\sv{\begin{figure}[!h]}
\begin{minipage}[b]{.47\textwidth}
\includegraphics[width=\textwidth]{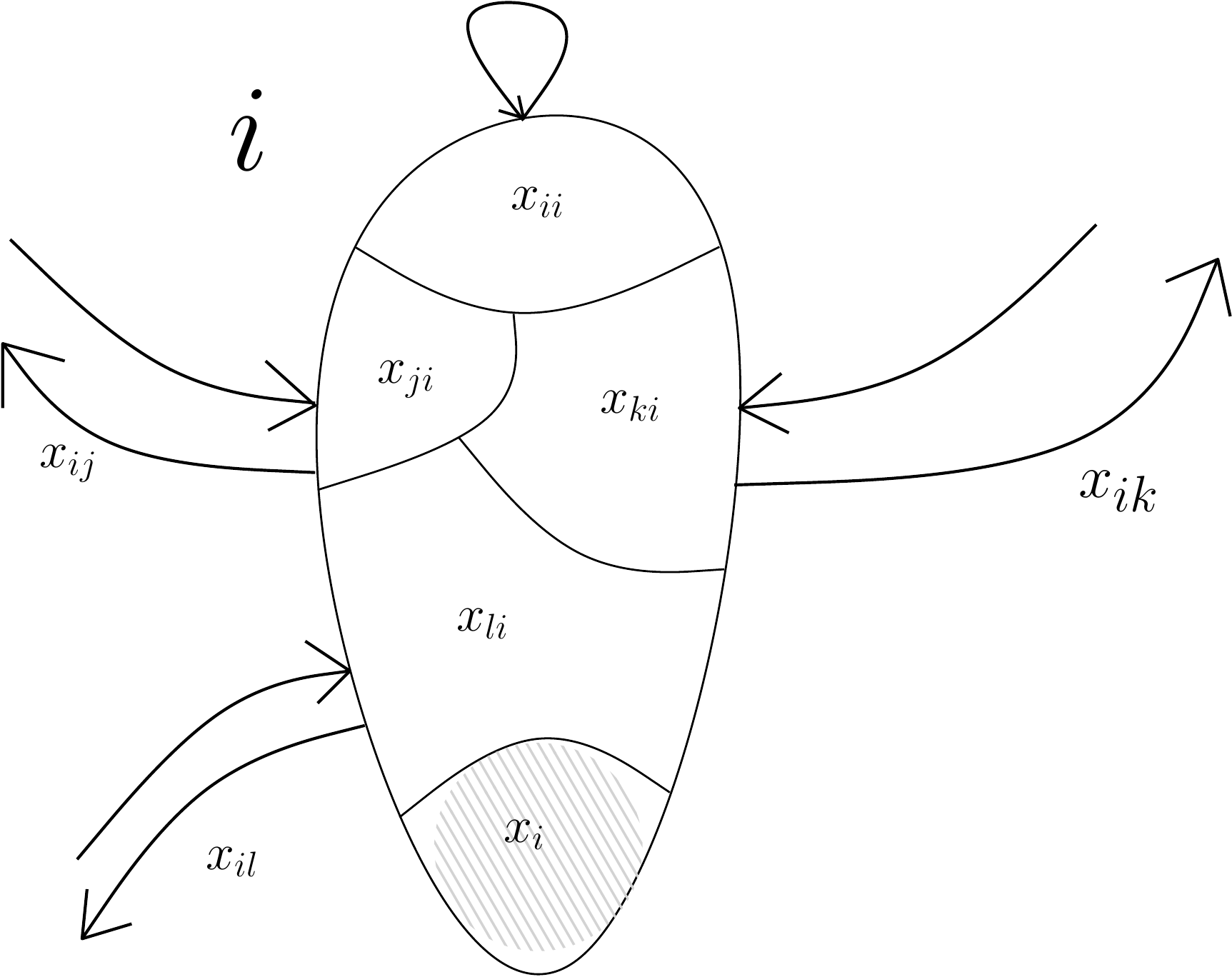} \label{fig:cds}
\end{minipage}
\begin{minipage}[b]{.47\textwidth}
\includegraphics[width=\textwidth]{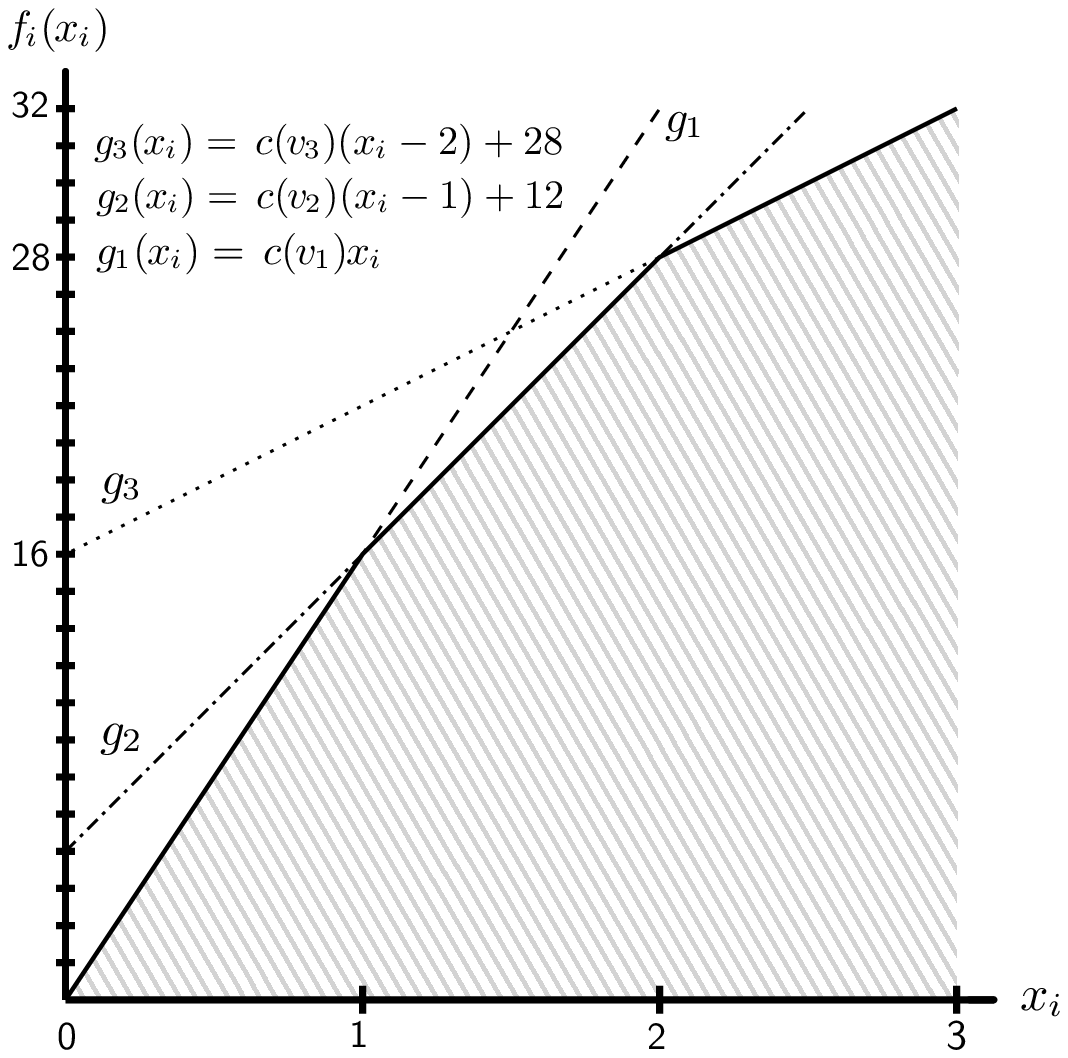}
\end{minipage}

\begin{minipage}[t]{.47\textwidth}
\caption{\label{fig:interpretationForCDS}%
Interpretation of variables of Model~\ref{mod:cds:convex}.}
\end{minipage}
\begin{minipage}[t]{.47\textwidth}
\caption{\label{fig:linearization}%
The linearization~\eqref{eqn:cds:cap-lin} of a piecewise linear convex constraint~\eqref{eqn:cds_convex} in Model~\ref{mod:cds:linear}.}
\end{minipage}
\end{figure}
}

\vskip -.5cm
Then, applying for example Dadush's algorithm~\cite{DadushPV11} to Model~\ref{mod:cds:convex} yields Theorem~\ref{thm:cds}\ref{thm:cds:convex}.
We can trade the non-linearity of the previous model for an increase in the number of constraints and the largest coefficient.
That, combined with Lenstra's algorithm, yields Theorem~\ref{thm:cds}\ref{thm:cds:ilp}, where we get a larger dependence on $|G|$, but require only $\poly(k,|G|)$ space.

\begin{model}[\textsc{Capacitated Dominating Set} as ILP in fixed dimension] \label{mod:cds:linear}~\\
Exactly as Model~\ref{mod:cds:convex} but replace constraints~\eqref{eqn:cds_convex} with the following equivalent set of $|G|$ linear constraints:
\begin{align}
\sum_{ij \in E(T(G))} y_{ij} &\leq f_i(\ell-1) + c(v_\ell)(x_i - \ell + 1) & \forall i \in T(G)\, \forall \ell \in [|V_i|] \tag{cds:cap-lin} \label{eqn:cds:cap-lin}
\end{align}
The parameters then become:
\begin{tabular}{c c c c c c }
\#vars & \#constraints & $\|A\|_\infty$ & $\|\veb\|_\infty$ & $\|\vel,\veu\|_\infty$ & $\|\vew\|_\infty$ \\
$\Oh(k^2)$ & $\Oh(k+|G|)$ & $|G|$ & $|G|$ & $|G|$ & $1$
\end{tabular} \qed \vspace{1em}
\end{model}

%

\begin{proof}[{[Additive approximation]} Proof of Theorem~\ref{thm:cds}\ref{thm:cds:approx}]
Let $(\vex, \vey) \in \RR^{k + k^2}$ be an optimal solution to the \emph{continuous relaxation} of Model~\ref{mod:cds:convex}, i.e., we relax the requirement that $(\vex, \vey)$ are integral; note that such $(\vex, \vey)$ can be computed in polynomial time using the ellipsoid method~\cite{GLS}, or by applying a polynomial LP algorithm to Model~\ref{mod:cds:linear}.
We would like to round $(\vex, \vey)$ up to an integral $(\hat{\vex}, \hat{\vey})$ to obtain a feasible integer solution which would be an approximation of an integer optimum.
Ideally, we would take $\hat{\vey} = \lceil \vey \rceil$ and compute $\hat{\vex}$ accordingly, i.e., set $\hat{x}_i$ to be smallest possible such that $\sum_{j \in N_{T(G)}(i)} \hat{y}_{ij} \geq f_i(\hat{x}_i)$; note that $\hat{x}_i \leq x_i + k$, since we add at most $k$ neighbors (to be dominated) in neighborhood of $V_i$.
However, this might result in a non-feasible solution if, for some $i$, $\hat{x}_i > |V_i|$.
In such a case, we solve the relaxation again with an additional constraint $x_i = |V_i|$ and try rounding again, repeating this aforementioned fixing procedure if rounding fails, and so on.
After at most $k$ repetitions this rounding results in a feasible integer solution $(\hat{\vex}, \hat{\vey})$, in which case we have $\|\hat{\vex} - \vex\|_1 \leq k^2$ and thus the solution represented by $(\hat{\vex}, \hat{\vey})$ has value at most $OPT+k^2$; the relaxation must eventually become feasible as setting $x_i=|V_i|$ for all $i\in T(G)$ yields a feasible solution.
\end{proof}

\begin{proof}[{[Speed trade-offs]} Proof of Theorem~\ref{thm:cds}\ref{thm:cds:speed}]
Notice that on our way to proving Theorem~\ref{thm:cds}\ref{thm:cds:approx} we have shown that Model~\ref{mod:cds:convex} has \emph{integrality gap} at most $k^2$, i.e., the value of the continuous optimum is at most $k^2$ less than the value of the integer optimum.
This implies that an integer optimum $(\vex^*, \vey^*)$ satisfies, for each $i \in [k]$, $\max \{0, \lfloor x_i - k^2 \rfloor\} \leq x^*_i \leq \min \{|V_i|, x_i + \lceil k^2 \rceil \}$.

We can exploit this to improve Theorem~\ref{thm:cds}\ref{thm:cds:convex} in terms of the parameter dependence at the cost of the dependence on $|G|$.
Let us assume that we have a way to test, for a given integer vector $\hat{\vex}$, whether it models a capacity-ordered solution, that is, whether there exists a capacitated dominating set with $D_i = V_i[1:\hat{x}_i]$ for each $i$.
Then we can simply go over all possible $(2k^2 + 2)^{k}$ choices of $\hat{\vex}$ and choose the best.
So we are left with the task of, given a vector $\hat{\vex}$, deciding if it models a capacity-ordered solution.

But this is easy.
Let $<_c$ be the assumed order and define $D$ as above.
Now, we construct an auxiliary bipartite matching problem, where we put $c(v)$ copies of each vertex from $D$ on one side of the graph, and all vertices of $V \setminus D$ on the other side, and connect a copy of $v \in D$ to $u \in V \setminus D$ if $uv \in E(G)$.
Then, $D$ is a capacitated dominating set if and only if all vertices in $V \setminus D$ can be matched.
The algorithm is then simply to compute the continuous optimum $\vex$, and go over all integer vectors $\hat{\vex}$ with $\|\vex - \hat{\vex}\|_1 \leq k^2$, verifying whether they model a solution and choosing the smallest (best) one.
\end{proof}


\section{Indefinite Quadratics: Max $q$-Cut}
\prob{\textsc{Max-$q$-Cut}}
{A graph $G=(V,E)$.}
{A partition $W_1 \dot{\cup} \cdots \dot{\cup} W_q = V$ maximizing the number of edges between distinct $W_\alpha$ and $W_\beta$, i.e., $|\{uv \in E(G) \mid u\in W_\alpha, v \in W_\beta, \alpha \neq \beta\}|$. \vspace{0.25em}  }
\subparagraph*{Solution structure.}
As before, it is enough to describe \emph{how many} vertices from type $i \in T(G)$ belong to $W_\alpha$ for $\alpha \in [q]$, and their specific choice does not matter; this gives us a small dimensional encoding of the solutions.

\begin{model}[\textsc{Max-$q$-Cut} as \textsc{LinIP} with indefinite quadratic objective]\label{mod:mc}
~\\ \textbf{Variables \& Notation:}
\begin{tasks}[style=itemize](2)
\task $x_{i\alpha} = |V_i \cap W_\alpha|$
\task $x_{i\alpha} \cdot x_{j \beta} = $ \#edges between $V_i \cap W_\alpha$ and $V_j \cap W_\beta$ if $ij \in E(T(G))$.
\end{tasks}
\textbf{Objective \& Constraints:}
\begin{align}
\min & \sum_{\substack{\alpha, \beta \in [q]:\\ \alpha \neq \beta}} \sum_{ij \in E(T(G))} x_{i\alpha} \cdot x_{j\beta} & & & \comment{\min \text{\#edges across partites}} \tag{mc:obj} \label{eqn:mc:obj}\\
\sum_{\alpha \in [q]} x_{i\alpha} &= |V_i| & \forall i \in T(G) & & \comment{\text{$\left(V_i \cap W_\alpha\right)_{\alpha \in [q]}$ partitions $V_i$}} \tag{mc:part} \label{eqn:mc:part}
\end{align}
\textbf{Parameters \& Notes:}
\begin{itemize}
\item
\begin{tabular}{c c c c c c }
\#vars & \#constraints & $\|A\|_\infty$ & $\|\veb\|_\infty$ & $\|\vel,\veu\|_\infty$ & $\|Q\|_\infty$ \\
$kq$ & $k$ & $1$ & $|G|$ & $|G|$ & $1$
\end{tabular}
\item objective~\eqref{eqn:mc:obj} is indefinite quadratic. \qed
\end{itemize}
\end{model}
Applying Lokshtanov's~\cite{Lokshtanov:2015} or Zemmer's~\cite{Zemmer:2017} algorithm to Model~\ref{mod:mc} yields Theorem~\ref{thm:maxcut}.
Note that since we do not know anything about the objective except that it is quadratic, we have to make sure that $\|Q\|_\infty$ and $\|A\|_\infty$ are small.

\section{Convex Objective: \textsc{Sum Coloring}}\label{sec:SumColoring}
\sv{\toappendix{\section{Additional Material to Section~\ref{sec:SumColoring}}}}
\prob{\textsc{Sum Coloring}}
{A graph $G=(V,E)$.}
{A proper coloring $c\colon V \to \NN$ minimizing $\sum_{v \in V} c(v)$.}

In the following we first give a single-exponential algorithm for \textsc{Sum Coloring} with a polynomial dependence on $|G|$, then a double-exponential algorithm with a logarithmic dependence on $|G|$, and finally show how to combine the two ideas together to obtain a single-exponential algorithm with a logarithmic dependence on $|G|$.

\subsection{\textsc{Sum Coloring} via $n$-fold IP}
\subparagraph*{Structure of Solution.}
The following observation was made by Lampis~\cite{Lampis12} for the \textsc{Coloring} problem, and it holds also for the \textsc{Sum Coloring} problem: every color $C \subseteq V(G)$ intersects each clique type in at most one vertex, and each independent type in either none or all of its vertices.
The first follows simply by the fact that it is a clique; the second by the fact that if both colors $\alpha,\beta$ with $\alpha < \beta$ are used for an independent type, then recoloring all vertices of color $\beta$ to be of color $\alpha$ remains a valid coloring and decreases its cost.
We call a coloring with this structure an \emph{essential coloring}.

\begin{model}[\textsc{Sum Coloring} as $n$-fold IP]\label{mod:sc:nfold}
~\\ \textbf{Variables \& Notation:}
\begin{tasks}[style=itemize](2)
\task $x_i^\alpha = 1$ if color $\alpha$ intersects $V_i$
\task $\alpha \cdot x_i^\alpha = $ cost of color $\alpha$ at a clique type $i$
\task* $\alpha |V_i| \cdot x_i^\alpha = $ cost of color $\alpha$ at an independent type $V_i$
\task* $S_{\text{nfold}}(\vex) = \sum_{\alpha=1}^{|G|} \left( (\sum_{\text{clique } i \in T(G)} \alpha x_i^\alpha) + (\sum_{\text{indep. } i \in T(G)} \alpha |V_i| x^\alpha_i) \right) = $ total cost of $\vex$
\end{tasks}
\textbf{Objective \& Constraints:}
\begin{align}
\min ~& S_{\text{nfold}}(\vex) & & & \tag{sc:nf:obj} \label{eqn:sumcol_nfold_obj}\\
\sum_{\alpha=1}^{|G|} x_i^\alpha &= |V_i| & \forall i \in T(G), \text{$V_i$ is clique} & & \comment{\text{$V_i$ is colored}} \tag{sc:nf:cliques} \label{eqn:sc:nfold:clique}\\
\sum_{\alpha=1}^{|G|} x_i^\alpha &= 1 & \forall i \in T(G), \text{$V_i$ is independent} & & \comment{\text{$V_i$ is colored}} \tag{sc:nf:indeps}\label{eqn:sc:nfold:indep} \\
x_i^\alpha + x_j^\alpha & \leq 1 & \forall \alpha \in \left[ |G| \right]\, \forall ij \in E(T(G)) & & \comment{\vex^\alpha \text{ is independent set}} \tag{sc:nf:xi-indep}\label{eqn:sumcol_nfold_col}
\end{align}
\textbf{Parameters \& Notes:}
\vspace{-.6em}
\begin{itemize}
\item
\begin{tabular}{c c c c c c c c c}
\#vars & \#constraints & $\|A\|_\infty$ & $\|\veb\|_\infty$ & $\|\vel,\veu\|_\infty$ & $\|\vew\|_\infty$ & $r$ & $s$ & $t$\\
$k|G|$ & $k+k^2|G|$ & $1$ & $|G|$ & $1$ & $|G|$ & $k$ & $k^2$ & $k$
\end{tabular}
\item Constraints have an $n$-fold format: \eqref{eqn:sc:nfold:clique} and~\eqref{eqn:sc:nfold:indep} form the $(A_1 \cdots A_1)$ block and~\eqref{eqn:sumcol_nfold_col} form the $A_2$ blocks; see parameters $r,s,t$ above.
\lv{Observe that the matrix $A_1$ is the $k \times k$ identity matrix and the matrix $A_2$ is the incidence matrix of $T(G)$ transposed.}
\qed
\end{itemize}
\end{model}
Applying the algorithm of Altmanová et al.~\cite{AltmanovaKK:2018} to Model~\ref{mod:sc:nfold} yields Theorem~\ref{thm:sumcol}\ref{thm:sumcol:nfold}.
Model~\ref{mod:sc:nfold} is a typical use case of $n$-fold IP: we have a vector of multiplicities $\veb$ (modeling $(|V_1|, \dots, |V_k|)$) and we optimize over its decompositions into independent sets of $T(G)$.
A clever objective function models the objective of \textsc{Sum Coloring}.
\lv{The main drawback is large number of bricks in this model.}

\subsection{\textsc{Sum Coloring} via Convex Minimization in Fixed Dimension}
\subparagraph*{Structure of Solution.}
The previous observations also allow us to encode a solution in a different way.
Let $\II = \{I_1, \dots, I_K\}$ be the set of all independent sets of $T(G)$; note that $K < 2^k$.
Then we can encode an essential coloring of $G$ by a vector of multiplicities $\vex=(x_{I_1}, \dots, x_{I_K})$ of elements of $\II$ such that there are $x_{I_j}$ colors which color exactly the types contained in $I_j$.
The difficulty with \textsc{Sum Coloring} lies in the formulation of its objective function.
Observe that given an $I \in \II$, the number of vertices every color class of this type will contain is independent of the actual multiplicity $x_I$.
Define the {\em size of a color class} $\sigma\colon \II\to \NN$ as
$\sigma(I)=\sum_{\text{clique } i \in I}1+\sum_{\text{indep. } i \in I}|V_i|$.

\sv{\begin{lemma}[\appmark]\label{lem:order_I}}
\lv{\begin{lemma}\label{lem:order_I}}
Let $G = (V,E)$ be a graph and let $c\colon V \to \NN$ be a proper coloring of $G$ minimizing $\sum_{v\in V} c(v)$.
Let $\mu(p)$ denote the quantity $\left|\left\{ v \in V \mid c(v) = p \right\}\right|$.
Then $\mu(p) \ge \mu(q)$ for every $p \le q$.
\end{lemma}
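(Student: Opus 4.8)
The plan is to establish the monotonicity by a global color-swap exchange argument. Suppose, towards a contradiction, that the conclusion fails, so that there exist two colors $p < q$ with $\mu(p) < \mu(q)$. The key observation is that applying any permutation of the color labels to a proper coloring yields another proper coloring: properness depends only on whether adjacent vertices receive equal colors, and relabeling preserves equalities and inequalities between colors.

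First I would define the recoloring $c'$ obtained from $c$ by transposing the labels $p$ and $q$: set $c'(v) = q$ whenever $c(v) = p$, set $c'(v) = p$ whenever $c(v) = q$, and leave $c'(v) = c(v)$ otherwise. By the observation above, $c'$ is again a proper coloring of $G$, since it arises from $c$ by composing with the transposition of $\{p,q\}$.

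Next I would compute the change in objective value. Only vertices colored $p$ or $q$ under $c$ change their contribution, so
\begin{align*}
\sum_{v \in V} c'(v) - \sum_{v \in V} c(v) &= \mu(p)(q - p) + \mu(q)(p - q) \\
&= (q - p)\bigl(\mu(p) - \mu(q)\bigr).
\end{align*}
Since $p < q$ we have $q - p > 0$, and by assumption $\mu(p) - \mu(q) < 0$, so the right-hand side is strictly negative. Hence $c'$ has strictly smaller sum than $c$, contradicting the optimality of $c$.

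Therefore no such pair $p < q$ exists, which is exactly the claim that $\mu(p) \ge \mu(q)$ whenever $p \le q$ (the case $p = q$ being trivial). I do not anticipate any genuine obstacle: the only points requiring a moment's care are verifying that the label transposition preserves properness, which is immediate, and observing that the argument remains valid when one of the two classes is empty, since then the relevant inequality holds trivially.
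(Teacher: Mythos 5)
Your proof is correct and follows essentially the same route as the paper: an exchange argument that transposes the labels $p$ and $q$, verifies the result is still a proper coloring, and computes that the objective strictly decreases by $(q-p)\bigl(\mu(q)-\mu(p)\bigr)$, contradicting optimality. Your added remarks on permutation-invariance of properness and the empty-class case are harmless refinements of the identical argument.
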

\toappendix{
\begin{proof}\sv{[Proof of Lemma~\ref{lem:order_I}]}
Suppose for contradiction that we have $p < q$ with $\mu(p) < \mu(q)$.
We now construct a proper coloring $c'$ of $G$ as follows
\[
c'(v) = \begin{cases}
p & \textrm{if } c(v) = q, \\
q & \textrm{if } c(v) = p, \\
c(v) & \textrm{otherwise}.
\end{cases}
\]
Clearly $c'$ is a proper coloring.
Now we have
\begin{align*}
\sum_{v\in V} c(v) = &
\left( \sum_{v\in V} c'(v) \right) - p\mu(q) - q\mu(p) + p\mu(p) + q\mu(q) = \\
&\left( \sum_{v\in V} c'(v) \right) - p (\mu(q) - \mu(p)) + q (\mu(q) - \mu(p)) = \\
&\left( \sum_{v\in V} c'(v) \right) + (\mu(q) - \mu(p))(q - p) > \sum_{v\in V} c'(v) \,.
\end{align*}
Here the last inequality holds, since both the factors following the sum are positive due to our assumptions.
Thus we arrive at a contradiction that $c$ is a coloring minimizing the first sum.
\end{proof}
}

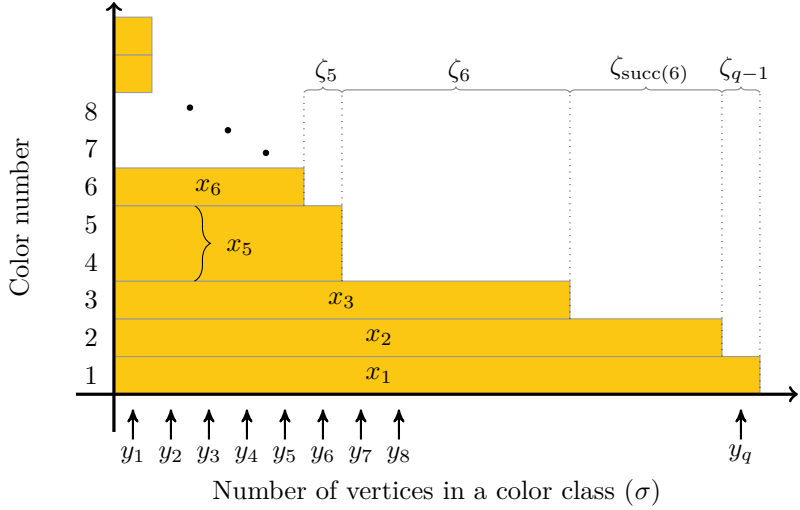
\begin{figure}[bt]
\begin{minipage}{.76\textwidth}
  \begin{tikzpicture}
  \tikzstyle{tecka}=[draw,circle,black,fill=black,inner sep=0pt,minimum size=2pt]
  \tikzstyle{var}=[black,pos=.5]
  \tikzstyle{cislo}=[black]
  \tikzstyle{axis}=[very thick,black,->]
  \tikzstyle{heavy var}=[fill=orange!60,orange!60,draw=black!40]
  \definecolor{lipicsYellow}{rgb}{0.99,0.78,0.07}
  \tikzstyle{light var}=[fill=lipicsYellow!100,lipicsYellow!100,draw=black!40]
  \tikzstyle{pomocnaCara}=[black!80,dotted]
  \tikzstyle{malaSvorka}=[decorate,decoration={brace,amplitude=2pt},black!50]

  \draw[light var] (0,0) rectangle (8.5,.5) node[var,xshift=-.76cm] {$x_1$};
  \draw[light var] (0,.5) rectangle (8,1) node[var,xshift=-.51cm] {$x_2$};
  \draw[light var] (0,1) rectangle (6,1.5) node[var] {$x_3$};

  \draw[light var] (0,1.5) rectangle (3,2.5) node[pos=.5] {};
  \draw [decorate,decoration={brace,amplitude=6pt,mirror},xshift=-4pt,yshift=0pt]
    (1.2,1.5) -- (1.2,2.5) node [black,midway,yshift=-.05cm,xshift=.6cm] 
    {$x_5$};
  \draw[light var] (0,2.5) rectangle (2.5,3) node[var] {$x_6$};

  \node[tecka] at (2,3.2) {};
  \node[tecka] at (1.5,3.5) {};
  \node[tecka] at (1,3.8) {};

  \draw[light var] (0,4) rectangle (0.5,4.5) node[pos=.5] {};
  \draw[light var] (0,4.5) rectangle (0.5,5) node[pos=.5] {};


  \draw[axis] (-.5,0) -- coordinate (x axis mid) (9,0);
  \draw[axis] (0,-.5) -- coordinate (y axis mid) (0,5.2);


  \node[rotate=90, above=.5cm, yshift=.5cm] at (y axis mid) {Color number};
  \node[below=1cm] at (x axis mid) {Number of vertices in a color class ($\sigma$)};

  \foreach \i in {1,...,8} {
    \pgfmathsetmacro{\y}{\i*0.5-0.25}
    \node[cislo] at (-.3,\y) {\i};
  }
  
  \foreach \i in {1,...,8} {
    \pgfmathsetmacro{\x}{\i*0.5-0.25}
    \node[cislo] at (\x,-.8) {$y_{\i}$};
    \draw[->,thick,>=stealth'] (\x,-.6) -- (\x,-.2);
  }
  \node[cislo] at (8.25,-.8) {$y_q$};
  \draw[->,thick,>=stealth'] (8.25,-.6) -- (8.25,-.2);

  \draw[pomocnaCara] (8.5,0) -- (8.5,4);
  \draw[pomocnaCara] (8,.5) -- (8,4);
  \draw[pomocnaCara] (6,1) -- (6,4);
  \draw[pomocnaCara] (3,1.5) -- (3,4);
  \draw[pomocnaCara] (2.5,2.5) -- (2.5,4);

  \draw[malaSvorka] (8,4) -- (8.5,4) node [black,midway,yshift=.3cm,xshift=.05cm] {$\zeta_{q-1}$};
  \draw[malaSvorka] (6,4) -- (8,4) node [black,midway,yshift=.3cm,xshift=.05cm] {$\zeta_{\Succ(6)}$};
  \draw[malaSvorka] (3,4) -- (6,4) node [black,midway,yshift=.3cm,xshift=.05cm] {$\zeta_6$};
  \draw[malaSvorka] (2.5,4) -- (3,4) node [black,midway,yshift=.3cm,xshift=.05cm] {$\zeta_{5}$};
\end{tikzpicture}
  \end{minipage}
  \begin{minipage}{.22\textwidth}
  \caption{\label{fig:fig_1}%
  An illustration of the cost decomposition to the individual classes.
  Note that $i$-th row (color $i$) has cost $i$ per vertex.}
  \end{minipage}
\end{figure}

Our goal now is to show that the objective function can be expressed as a convex function in terms of the variables $\vex$.
We will get help from auxiliary variables $y_1, \dots, y_{|G|}$ which are a linear projection of variables $\vex$; note that we do not actually introduce these variables into the model and only use them for the sake of proving convexity.
Namely, $y_j$ indicates how many color classes contain at least $j$ vertices:
$y_j=\sum_{\sigma(I)\geq j} x_I$.
Then, the objective function can be expressed as $S_{\text{convex}}(\vex)=\sum_{i=1}^{p} \left| i\sigma(I_i) \right| = \sum_{j=1}^{|G|}\binom{y_j}{2}$,
where $i=1,\dots,p$ is the order of the color classes given by Lemma~\ref{lem:order_I}, every class of type $I$ is present $x_I$ times, where we enumerate only those $I$ with $x_I \geq 1$.
The equivalence of the two is straightforward to check.

Finally, $S_{\text{convex}}$ is convex with respect to $\vex$ because,
\sv{
\begin{enumerate*}[label={\arabic*)},font={\bfseries}]
  \item all $x_I$ are linear (thus affine) functions,
  \item $y_i = \sum_{I: \sigma(I) \geq i} x_I$ is a sum of affine functions, thus affine,
  \item $y_i(y_i - 1) / 2$ is convex: it is a basic fact that $h(x) = g(f(x))$ is convex if $f$ is affine and $g$ is convex. Here $f = y_i$ is affine by the previous point and $g = f(f-1)/2$ is convex.
  \item $S_{\text{convex}}$ is the sum of $y_i(y_i - 1) / 2$, which are convex by the previous point.
\end{enumerate*}}
\lv{\begin{itemize}
  \item all $x_I$ are linear (thus affine) functions,
  \item $y_i = \sum_{I: \sigma(I) \geq i} x_I$ is a sum of affine functions, thus affine,
  \item $y_i(y_i - 1) / 2$ is convex: it is a basic fact that $h(x) = g(f(x))$ is convex if $f$ is affine and $g$ is convex. Here $f = y_i$ is affine by the previous point and $g = f(f-1)/2$ is convex.
  \item $S_{\text{convex}}$ is the sum of $y_i(y_i - 1) / 2$, which are convex by the previous point.
\end{itemize}}

\begin{model}[\textsc{Sum Coloring} as \textsc{LinIP} in fixed dimension with convex objective]\label{mod:sc:convex}
~\\ \textbf{Variables \& Notation:}
\begin{tasks}[style=itemize](2)
\task $x_I=$ \#of color class $I$
\task $y_i=$ \#of color classes $I$ with $\sigma(I) \leq i$
\task $\binom{y_i}{2}$ cost of column $y_i$ (Figure~\ref{fig:fig_1})
\task $S_{\text{convex}} = \sum_{i=1}^{|G|} \binom{y_i}{2} = $ cost of all columns
\end{tasks}
\textbf{Objective \& Constraints:}
\vspace{-.5em}
\begin{align}
\min ~&S_{\text{convex}}(\vex)  & & & \tag{sc:convex:obj} \label{sc:convex:obj}\\
\sum_{I_j: i \in I_j} x_{I_j} &= |V_i| & \forall \text{clique } i \in T(G) & & \comment{\text{clique $V_i$ gets $|V_i|$ colors}} \tag{sc:convex:cliques} \label{sc:convex:cliques} \\
\sum_{I_j: i \in I_j} x_{I_j} &= 1 & \forall \text{indep. } i \in T(G) & & \comment{\text{indep. $V_i$ gets $1$ color}} \tag{sc:convex:indeps} \label{sc:convex:indeps}
\end{align}
\textbf{Parameters \& Notes:}
\vspace{-.5em}
\begin{itemize}
\item
\begin{tabular}{c c c c c c}
\#vars & \#constraints & $\|A\|_\infty$ & $\|\veb\|_\infty$ & $\|\vel,\veu\|_\infty$ & $f_{\max}$ \\
$2^k$ & $k$ & $1$ & $|G|$ & $|G|$ & $|G|^2$
\end{tabular}
\item Objective $S_{\text{convex}}$ is non-separable convex, and can be computed in time $2^k \log |G|$ by noticing that there are at most $2^k$ different $y_i$'s (see below). \qed
\end{itemize}
\end{model}
Applying the algorithm of Dadush~\cite{DadushPV11} to Model~\ref{mod:sc:convex} yields Theorem~\ref{thm:sumcol}\ref{thm:sumcol:convex}.
Notice that we could not apply Lokshtanov's algorithm because the objective has large coefficients.
Also notice that we do not need separability of $S_{\text{convex}}$ or any structure of $A$.

\subsection{\textsc{Sum Coloring} and Graver Bases} \label{sec:sumcol:graver}
Consider Model~\ref{mod:sc:convex}.
The fact that the number of rows and the largest coefficient $\|A\|_\infty$ is small, and that we can formulate $S_{\text{convex}}$ as a separable convex objective in terms of the $y_i$ variables gives us some hope that Graver basis techniques would be applicable.

Since $|\II| \leq 2^k$, we can replace the $y_i$'s by a smaller set of variables $z_i$ for a set of ``critical sizes'' $\Gamma = \{i \in [|G|] \mid \exists I \in \II: \sigma(I) = i\}$.
For each $i \in \Gamma$ let $\Succ(i) = \min \{j \in \Gamma \mid j > i\}$ (and let $\Succ(\max \Gamma)=\max \Gamma$), define $z_i = \sum_{I \in \II: \sigma(I) \geq i} x_I$, and let $\zeta_i = (\Succ(i) - i)$ be the size difference between a color class of size $i$ and the smallest larger color class. Then,
\[
S_{\text{convex}}(\vex) = \sum_{i=1}^{|G|} \binom{y_i}{2} = \sum_{i \in \Gamma} \zeta_i \binom{z_i}{2} = S_{\text{sepconvex}}(\vez) \enspace .
\]

Now we want to construct a system of inequalities of bounded dual treewidth $\tw_D(A)$; however, adding the $z_i$ variables as we have defined them amounts to adding many inequalities containing the $z_1$ variable, thus increasing the dual treewidth to $k+2^k$.
To avoid this, let us define $z_i$ equivalently as $z_i = z_{\Succ(i)} + \sum_{\substack{I \in \II:\\ \Succ(i) > \sigma(I) \geq i}} x_I = z_{\Succ(i)} + \sum_{\substack{I \in \II:\\ \sigma(I) = i}} x_I$.
\lv{The last equality follows from the definition of $\Gamma$ which implies there are no independent sets with size strictly between $i$ and $\Succ(i)$ in $G$.}

\begin{model}[\textsc{Sum Coloring} as \textsc{LinIP} with small $\tw_D(A)$ and small $g_1(A)$]\label{mod:sc:graver}
~\\ \textbf{Variables \& Notation:}
\begin{tasks}[style=itemize](2)
\task $x_I=$ \#of color class $I$
\task $z_i=$ \#of color classes $I$ with $\sigma(I) \geq i$
\task* $\zeta_i =$ size difference between $I \in \II$ with $\sigma(I)=i$ and closest larger $J \in \II$
\task* $\zeta_i \binom{z_i}{2}$ cost of all columns between $y_i$ and $y_{\Succ(i)}$ (Figure~\ref{fig:fig_1})
\task $\Gamma = $ set of critical sizes
\task $S_{\text{sepconvex}}(\vez) = \sum_{i \in \Gamma} \zeta_i \binom{z_i}{2} = $ total cost
\end{tasks}
\textbf{Objective \& Constraints:} constraints~\eqref{sc:convex:cliques} and~\eqref{sc:convex:indeps}, and:
\vspace{-.5em}
\begin{align}
\min ~&S_{\text{sepconvex}}(\vez)  & & & \tag{sc:graver:obj} \label{sc:graver:obj}\\
z_i &= z_{\Succ(i)} + \sum_{I \in \II:\sigma(I) = i} x_I & \forall i \in \Gamma & & \tag{sc:graver:sep} \label{sc:graver:sep}
\end{align}
\textbf{Parameters \& Notes:}
\vspace{-.5em}
\begin{itemize}
\item
\begin{tabular}{c c c c c c c c}
\#vars & \#constraints & $\|A\|_\infty$ & $\|\veb\|_\infty$ & $\|\vel,\veu\|_\infty$ & $f_{\max}$ & $g_1(A)$ & $\tw_D(A)$ \\
$\OhOp{2^k}$ & $\OhOp{2^k}$ & $1$ & $|G|$ & $|G|$ & $|G|^2$ & $\OhOp{k^{k}}$  & $k+2$
\end{tabular}
\item Bounds on $g_1(A)$ and $\tw_D(A)$ by Lemmas~\ref{lem:graver} and~\ref{lem:tw}, respectively.
\item Objective $S_{\text{sepconvex}}$ is separable convex. \qed
\end{itemize}
\end{model}
Applying the algorithm of Koutecký et al.~\cite{KouteckyLO:2018} to Model~\ref{mod:sc:graver} yields Theorem~\ref{thm:sumcol}\ref{thm:sumcol:graver}.

Let us denote the matrix encoding the constraints~\eqref{sc:convex:cliques} and~\eqref{sc:convex:indeps} as $F \in \ZZ^{k \times 2\cdot2^k}$ (notice that we also add the empty columns for the $z_i$ variables), and the matrix encoding the constraints~\eqref{sc:graver:sep} by $L \in \ZZ^{2^k \times 2\cdot2^k}$; thus $A = \left(\begin{smallmatrix}F \\ L\end{smallmatrix}\right)$.

\sv{\begin{lemma}[\appmark]\label{lem:tw}}
\lv{\begin{lemma}\label{lem:tw}}
In Model~\ref{mod:sc:graver} it holds that $\tw_D(A) \leq k+1$.
\end{lemma}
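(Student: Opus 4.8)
The plan is to exhibit an explicit \emph{path} decomposition of the dual graph $G_D(A)$ of width $k+1$. Recall that $G_D(A)$ has one vertex per row of $A$, with two rows adjacent exactly when some column (variable) is nonzero in both. There are two kinds of rows: the $k$ rows of $F$, one per type $i \in T(G)$ coming from constraints~\eqref{sc:convex:cliques} and~\eqref{sc:convex:indeps}, which I call \emph{type rows}; and the $|\Gamma|$ rows of $L$, one per $i \in \Gamma$ coming from~\eqref{sc:graver:sep}, which I call \emph{link rows} and denote $\ell_i$. A type row for type $i$ involves the variables $x_I$ for all $I \ni i$ (and no $z$-variable), whereas the link row $\ell_i$ involves only $z_i$, $z_{\Succ(i)}$, and those $x_I$ with $\sigma(I)=i$.

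First I would pin down the adjacencies among link rows, which is the structural heart of the argument. Since $\sigma(I)=i$ determines a unique element of $\Gamma$, the sets of $x$-variables belonging to distinct link rows are pairwise disjoint; hence two link rows can only be adjacent through a shared $z$-variable. By the definition of $z_i$ via $\Succ$, this happens exactly for consecutive elements of $\Gamma$, namely $\ell_i$ and $\ell_{\Succ(i)}$ share $z_{\Succ(i)}$. Consequently the link rows, ordered by $\Gamma = \{i_1 < \dots < i_m\}$, induce a single path in $G_D(A)$. The type rows, by contrast, may be mutually adjacent (whenever an independent set contains two given types) and each may be adjacent to many link rows, so I make no attempt to separate them.

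The key step is then to build the path decomposition along $\Gamma$: put \emph{all} $k$ type rows into every bag, and let bag $B_t$ additionally contain the two consecutive link rows $\ell_{i_t}$ and $\ell_{i_{t+1}}$ (omitting the second for $t=m$), so that $|B_t| \le k+2$. Checking the decomposition axioms is then routine bookkeeping: every type--type edge is covered since all type rows lie in $B_1$; every type--link edge incident to $\ell_{i_t}$ is covered by $B_t$; every link--link edge $\ell_{i_t}\ell_{i_{t+1}}$ is covered by $B_t$; and each row occupies a contiguous run of bags, since every type row occupies all of them while $\ell_{i_t}$ occupies precisely $B_{t-1}$ and $B_t$. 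As the maximum bag size is $k+2$, this gives $\tw_D(A) \le k+1$.

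The one subtlety I would be most careful about is the claim that link rows far apart in $\Gamma$ are non-adjacent, as this is exactly what keeps two link rows per bag sufficient; it rests on the disjointness of their $x$-variables together with the observation that only consecutive link rows share a $z$-variable. Everything else is forced: forcing all $k$ type rows into every bag trivially absorbs the (possibly dense) interaction among types and between types and links, at the unavoidable additive cost of $k$ in the bag size.
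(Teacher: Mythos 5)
Your proof is correct and takes essentially the same approach as the paper: a path decomposition indexed by the elements of $\Gamma$ in increasing order, with all $k$ rows of $F$ placed in every bag and two consecutive rows of $L$ per bag, giving bag size $k+2$ and hence width $k+1$. The only difference is that you spell out the verification the paper leaves as ``not difficult to check,'' in particular the key fact that distinct link rows share no $x$-variable and are therefore adjacent only when consecutive in $\Gamma$.
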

\sv{\begin{proof}[Proof Idea]
$G_D(F)$ is a $k$-clique $K_k$, and $G_D(L)$ is a $2^k$-path $P_{2^k}$.
Thus, $G_D(A)$ are these two graphs connected by all possible edges, and we construct a path decomposition, whose consecutive nodes contain $G_D(F)$ and consecutive vertices of $G_D(L)$.
\end{proof}}
\toappendix{
\sv{\begin{proof}[Proof of Lemma~\ref{lem:tw}]}
\lv{\begin{proof}}
We shall construct a tree decomposition of $G_D(A)$ of width $k+2$.
The decomposition is a path and has $|\Gamma| - 1$ nodes, one for each except the largest $i \in \Gamma$, in increasing order.
We put all $k$ rows of $F$ in the bag of every node.
In addition to that the bag of the $i$-th node contains the $i$-th and $(i+1)$-st row of $L$.
It is not difficult now to check that this indeed satisfies the definition of a tree decomposition.
\end{proof}}

\sv{\begin{lemma}[\appmark]\label{lem:graver}}
\lv{\begin{lemma}\label{lem:graver}}
In Model~\ref{mod:sc:graver} it holds that $g_1(A) \leq k^{\OhOp{k}}$.
\end{lemma}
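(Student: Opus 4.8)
We must bound $g_1(A) = \max_{\veg \in \G(A)} \|\veg\|_1$ for the matrix $A = \left(\begin{smallmatrix}F \\ L\end{smallmatrix}\right)$ of Model~\ref{mod:sc:graver}, showing $g_1(A) \leq k^{\OhOp{k}}$. Recall that a Graver element $\veg$ satisfies $A\veg = \mathbf{0}$, so in particular $F\veg = \mathbf{0}$ and $L\veg = \mathbf{0}$.

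**Plan.** The strategy is to exploit the very rigid structure of $L$ together with the smallness of $F$. The idea is to decompose the coordinates of $\veg$ into the $x_I$-part (indexed by independent sets $I \in \II$) and the $z_i$-part (indexed by critical sizes $i \in \Gamma$), and to observe that the $z$-coordinates are \emph{completely determined} by the $x$-coordinates. Indeed, the constraints~\eqref{sc:graver:sep} read $z_i = z_{\Succ(i)} + \sum_{I: \sigma(I)=i} x_I$, which is a telescoping recurrence; running it from the largest critical size downward yields $z_i = \sum_{I \in \II: \sigma(I) \geq i} x_I$ as a linear function of the $x$-variables with $0/1$ coefficients. Therefore any $\veg \in \ker A \cap \ZZ^n$ is uniquely reconstructed from its $x$-part, and moreover each $z$-coordinate is a signed sum of at most $|\II| \leq 2^k$ of the $x$-coordinates, so $\|\vez\text{-part}\|_\infty \leq 2^k \|\vex\text{-part}\|_1$.

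**Reducing to the $x$-variables.** Because the $z$-coordinates are forced by the $x$-coordinates, the conformality structure of $\veg$ is governed entirely by its $x$-part: if $\veh \sqsubseteq \veg$ agrees with the kernel equations, then the $x$-part of $\veh$ determines its $z$-part, and conformality of the full vectors reduces to conformality on the $x$-coordinates (one checks the $z$-coordinates of $\veh$ inherit the required sign/magnitude relations automatically, up to the possible obstruction discussed below). Thus it suffices to bound $\|\vex\text{-part}\|_1$ for a $\sqsubseteq$-minimal $\veg$. The $x$-part lies in the kernel of the $k \times 2^k$ submatrix $F$ restricted to the $x$-columns, whose entries are in $\{0,1\}$. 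First I would bound the $\ell_\infty$-norm of a $\sqsubseteq$-minimal kernel element of such a matrix: by standard Cramer/Siegel-type estimates (the entries of a minimal integer kernel vector of a $k \times N$ integer matrix are bounded by the largest $k \times k$ subdeterminant), and since $\|F\|_\infty = 1$, Hadamard's inequality gives subdeterminants of size at most $k^{k/2}$, hence each $x_I$ is at most $k^{\OhOp{k}}$ in absolute value. Multiplying by the number of coordinates $2^k$ to pass from $\ell_\infty$ to $\ell_1$ keeps us at $k^{\OhOp{k}}$, and then propagating to the $z$-coordinates via the factor $2^k$ established above again only changes the base of the exponent, leaving the bound at $k^{\OhOp{k}}$.

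**Main obstacle.** The delicate point is justifying that the bound really comes from $F$ alone and that minimality in the full matrix $A$ does not force larger $x$-entries than minimality in $F$. Concretely, I must argue that if the $x$-part of $\veg$ were not $\sqsubseteq$-minimal among integer kernel vectors of $F$, then $\veg$ itself would not be $\sqsubseteq$-minimal in $\G(A)$; this requires checking that a conformal decomposition of the $x$-part \emph{lifts} to a conformal decomposition of the whole vector via the reconstruction $z_i = \sum_{\sigma(I)\geq i} x_I$. The potential trap is that the $z$-coordinates are sums that could change sign under a conformal split of the $x$-part, breaking conformality on the $z$-block. The hard part will therefore be verifying that the reconstruction map is \emph{sign-compatible} enough that a conformal split on $x$ produces a conformal split on $z$ --- or, if it is not fully compatible, absorbing that discrepancy into the $2^k$ factor without escaping the $k^{\OhOp{k}}$ regime. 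Once this lifting property is secured, the determinant bound on $\ker F$ closes the argument.
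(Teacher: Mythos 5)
Your plan founders on exactly the point you flag as the ``main obstacle,'' and that obstacle is not a technicality that can be absorbed into an extra $2^k$ factor --- it is where the entire content of the lemma lies. The reconstruction $z_i = \sum_{I :\, \sigma(I) \geq i} x_I$ is a suffix sum, and suffix sums are not sign-compatible with conformal splitting: if the $x$-part has an entry $+1$ on some $I$ with $\sigma(I) = s$ and an entry $-1$ on some $I'$ with $\sigma(I') = s' > s$, these cancel in every $z_i$ with $i \leq s$, but a conformal piece of the $x$-part containing the $+1$ and not the $-1$ has suffix sum $1$ at such an $i$, while the whole vector has suffix sum $0$ there --- so the lifted piece violates $\sqsubseteq$. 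Consequently a vector $(\vez,\vex) \in \ker A$ can be $\sqsubseteq$-minimal in $\G(A)$ even though its $x$-part is far from $\sqsubseteq$-minimal in the kernel of $F$; your reduction ``it suffices to bound the $\ell_1$-norm of a $\sqsubseteq$-minimal kernel element of $F$'' is therefore unjustified, and with it the Cramer/Hadamard estimate. (A second, more minor, flaw: entries bounded by maximal subdeterminants is a property of \emph{circuits}, i.e.\ support-minimal kernel vectors, not of Graver elements; for a matrix $E$ with $k$ rows one should instead use $g_1(E) \leq (2k\|E\|_\infty+1)^k$, due to Eisenbrand et al., which is what the paper invokes.) As written, the proposal is a proof plan whose critical step is missing, and that step is false in the generality in which you would need it.

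The paper closes this gap by proceeding in the opposite order, using machinery built precisely for this two-level conformality bookkeeping: the stacking lemma of De Loera et al., which bounds $g_1\left(\left(\begin{smallmatrix}F \\ L\end{smallmatrix}\right)\right)$ by $g_1(F \cdot \G(L)) \cdot g_1(L)$. One first decomposes with respect to the telescoping block $L$ \emph{alone}; a bespoke sign analysis (the paper's Lemma~\ref{lem:graverBoundForLowerMatrix}) shows that every element of $\G(L)$ has $\ell_1$-norm at most $|\Gamma|+1$ and, crucially, $x$-part of $\ell_1$-norm at most $2$. Since the rows of $F$ vanish on the $z$-columns, this yields $\|F \cdot \G(L)\|_\infty \leq 2$, so the few-rows bound above gives $g_1(F \cdot \G(L)) \leq (4k+1)^k$, and the stacking lemma concludes $g_1(A) \leq k^{\OhOp{k}} \cdot 2^k = k^{\OhOp{k}}$. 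To salvage your outline you would have to prove a sign-compatible lifting statement of comparable strength, which amounts to reproving that technical lemma; the stacking lemma is the clean way to package it.
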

\sv{\begin{proof}[Proof Idea] We first simplify the structure of $L$ by deleting duplicitous columns, and then explicitely construct a decomposition of any $\veh$ s.t. $L \veh = \mathbf{0}$ into conformal vectors $\veg$ of small $\ell_1$-norm. Combining with known bounds on matrices with few rows ($F$) and stacked matrices ($A$) yields the bound.\end{proof}}
\toappendix{
The idea behind the proof is as follows.
Since $A = \left(\begin{smallmatrix}F \\ L\end{smallmatrix}\right)$ is a matrix obtained by stacking the two blocks $F$ and $L$, the bound on $g_1(A)$, the largest coefficient in an element of the Graver basis of $A$, can be estimated using the following lemma for stacked matrices.
\begin{lemma}[Stacking lemma~{\cite[Lemma 3.7.6]{DeLoeraEtAl2013}}]
$g_1\left(\left(\begin{smallmatrix}F \\ L\end{smallmatrix}\right)\right) = g_1(F \cdot \G(L)) \cdot g_1(L)$
\end{lemma}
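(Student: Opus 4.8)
The plan is to establish the bound $g_1\bigl(\bigl(\begin{smallmatrix}F\\L\end{smallmatrix}\bigr)\bigr)\le g_1(F\cdot\G(L))\cdot g_1(L)$ that the displayed identity expresses, which is exactly the estimate fed into Lemma~\ref{lem:graver}. Throughout write $A=\bigl(\begin{smallmatrix}F\\L\end{smallmatrix}\bigr)$ and let $H$ be the matrix whose columns are the elements $\veg_1,\dots,\veg_s$ of $\G(L)$, so that $F\cdot\G(L)=FH$ and, by definition of the Graver basis, every column of $H$ lies in $\{\vex\mid L\vex=\mathbf{0}\}$. The entire argument rests on the fundamental conformal (positive sum) decomposition property recalled in Section~\ref{sec:prelims}: every integer vector in $\{\vex\mid L\vex=\mathbf{0}\}$ is a nonnegative, sign-compatible integer combination of elements of $\G(L)$, and likewise for $\G(FH)$.

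First I would fix an arbitrary $\veh\in\G(A)$ and decompose it through $L$. Since $L\veh=\mathbf{0}$, the decomposition property yields a nonnegative integer multiplicity vector $\boldsymbol\lambda$ with $\veh=H\boldsymbol\lambda=\sum_j\lambda_j\veg_j$ and $\lambda_j\veg_j\sqsubseteq\veh$ for each $j$. Because $\veh\in\G(A)$ also satisfies $F\veh=\mathbf{0}$, we obtain $(FH)\boldsymbol\lambda=F(H\boldsymbol\lambda)=F\veh=\mathbf{0}$, so $\boldsymbol\lambda$ lies in the kernel of $FH$. As all $\veg_j$ with $\lambda_j>0$ lie in the same orthant as $\veh$, the $\ell_1$-norm splits with no cancellation, giving $\|\veh\|_1=\sum_j\lambda_j\|\veg_j\|_1\le\|\boldsymbol\lambda\|_1\cdot g_1(L)$, where $\|\boldsymbol\lambda\|_1=\sum_j\lambda_j$ since $\boldsymbol\lambda\ge\mathbf{0}$.

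The crux is to show $\boldsymbol\lambda\in\G(FH)$, for then $\|\boldsymbol\lambda\|_1\le g_1(FH)=g_1(F\cdot\G(L))$ and the two bounds multiply to the claim. I would argue by lifting: decompose $\boldsymbol\lambda$ conformally as $\boldsymbol\lambda=\sum_t\boldsymbol\gamma_t$ with each $\boldsymbol\gamma_t\in\G(FH)$, $\boldsymbol\gamma_t\ge\mathbf{0}$, and $\boldsymbol\gamma_t\sqsubseteq\boldsymbol\lambda$. Each image $H\boldsymbol\gamma_t$ satisfies $L(H\boldsymbol\gamma_t)=\mathbf{0}$ (columns of $H$ lie in the kernel of $L$) and $F(H\boldsymbol\gamma_t)=(FH)\boldsymbol\gamma_t=\mathbf{0}$, hence $H\boldsymbol\gamma_t\in\{\vex\mid A\vex=\mathbf{0}\}$; moreover $H\boldsymbol\gamma_t$ is conformal to $\veh$ and nonzero, because a nonzero nonnegative combination of vectors all lying in $\veh$'s orthant cannot vanish in a coordinate where some contributing $\veg_j$, and hence $\veh$, is nonzero. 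Summing gives $\sum_t H\boldsymbol\gamma_t=H\boldsymbol\lambda=\veh$, so the $H\boldsymbol\gamma_t$ form a conformal decomposition of $\veh$ into nonzero kernel elements. Since $\veh$ is $\sqsubseteq$-minimal in $\{\vex\mid A\vex=\mathbf{0},\ \vex\neq\mathbf{0}\}$, two or more summands would force at least one to be strictly conformally below $\veh$, a contradiction; thus there is a single summand and $\boldsymbol\lambda=\boldsymbol\gamma_1\in\G(FH)$. Combining the two estimates yields $\|\veh\|_1\le g_1(F\cdot\G(L))\cdot g_1(L)$, and taking the maximum over $\veh\in\G(A)$ proves the bound.

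The main obstacle is precisely this lifting step: one must verify that applying $H$ to a conformal decomposition of the \emph{multiplicity} vector $\boldsymbol\lambda$ produces genuine, nonzero, conformal members of the kernel of $A$, which is where the nonnegativity of the multiplicities and the same-orthant property of the relevant elements of $\G(L)$ relative to $\veh$ are essential; once that is in place, $\sqsubseteq$-minimality of $\veh$ collapses the decomposition to a single term. The supporting facts — existence of conformal decompositions for both $L$ and $FH$, and additivity of $\|\cdot\|_1$ within a single orthant — are the standard properties of Graver bases, so no further machinery is needed beyond Definition~\ref{def:graver} and the preliminaries.
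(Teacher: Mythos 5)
The paper offers no proof of this lemma at all---it is imported verbatim from De Loera et al.~\cite[Lemma 3.7.6]{DeLoeraEtAl2013}---so there is no internal argument to compare against; what you have written is, in essence, the standard textbook proof, and it is correct. Your route is the expected one: project $\veh\in\G\left(\left(\begin{smallmatrix}F\\L\end{smallmatrix}\right)\right)$ to a nonnegative multiplicity vector $\boldsymbol\lambda$ via the positive-sum (conformal) decomposition over $\G(L)$, observe $\boldsymbol\lambda\in\ker(F\cdot\G(L))$, and lift a conformal decomposition of $\boldsymbol\lambda$ back through $H$; you correctly handle the two delicate points, namely that each lifted summand $H\boldsymbol\gamma_t$ is nonzero (a nonnegative combination of nonzero vectors in a common orthant cannot vanish) and sign-compatible with $\veh$, so that $\sqsubseteq$-minimality of $\veh$ collapses the decomposition to a single term and yields $\boldsymbol\lambda\in\G(F\cdot\G(L))$, whence $\|\veh\|_1=\sum_j\lambda_j\|\veg_j\|_1\le g_1(F\cdot\G(L))\cdot g_1(L)$. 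One remark worth making explicit: you prove only the inequality $\le$, and that is the right call, because the equality as displayed in the paper is not literally true---for instance with $L=(1\ {-1})$ and $F=(1\ 1)$ the stacked matrix has trivial integer kernel, so the left-hand side vanishes while $g_1(F\cdot\G(L))\cdot g_1(L)=4$. The citation should be read as the upper bound (equivalently, as the statement that every element of $\G\left(\left(\begin{smallmatrix}F\\L\end{smallmatrix}\right)\right)$ has the form $H\boldsymbol\lambda$ with $\boldsymbol\lambda\in\G(F\cdot\G(L))$, which is exactly what you establish en route), and the upper bound is the only direction consumed by Lemma~\ref{lem:graver}.
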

Here, $\G(L)$ is a matrix whose columns are vectors from the Graver basis of $L$.
Thus, we need to determine $g_1(L)$ and $g_1(F \cdot \G(L))$.
For the first bound we provide the following technical lemma.
\begin{lemma}\label{lem:graverBoundForLowerMatrix}
$g_1(L) \leq | \Gamma | + 1$.
Moreover, for every vector $\left(\begin{smallmatrix} \veg^z \\ \veg^x \end{smallmatrix}\right) \in \G(L)$ we have $\| \veg^x \|_1 \le 2$.
\end{lemma}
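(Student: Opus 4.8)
The plan is to exploit the fact that, in any kernel element $\veh=\left(\begin{smallmatrix}\veg^z\\\veg^x\end{smallmatrix}\right)$ of $L$, the part $\veg^z$ is completely determined by $\veg^x$. Reading constraint~\eqref{sc:graver:sep} over the kernel gives $g^z_i = g^z_{\Succ(i)} + \sum_{I\in\II:\,\sigma(I)=i} g^x_I$, which telescopes from the top of $\Gamma$ downwards to $g^z_i=\sum_{I\in\II:\,\sigma(I)\ge i} g^x_I$. Thus $\veg^x\mapsto\veh$ is a linear bijection onto $\ker L$, and a conformality relation $\veh'\sqsubseteq\veh$ amounts to $\veg'^x\sqsubseteq\veg^x$ \emph{together with} conformality of the right-prefix sums $g^z_i$. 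Writing $w_i:=g^z_i-g^z_{\Succ(i)}=\sum_{\sigma(I)=i}g^x_I$ for the net $x$-mass at each size, the difference sequence $(w_i)_{i\in\Gamma}$ must therefore stay conformal as well; this is the extra bookkeeping that makes the statement nontrivial.

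Next I would dispose of the columns coming from distinct independent sets of equal size (the duplicitous columns). If two sets $I,I'$ with $\sigma(I)=\sigma(I')$ occur in the support of $\veg^x$ with opposite signs, then the vector with $x$-part $\vece_I-\vece_{I'}$ and zero $z$-part is a nonzero kernel vector conformal to $\veh$; hence no Graver element contains such a same-size cancellation, unless it is itself $\pm(\vece_I-\vece_{I'})$, which already meets both bounds with $\|\veg^x\|_1=\|\veh\|_1=2$. After merging equal-size columns, one $x$-coordinate survives per size, and by the previous paragraph the combined vector is parameterised freely by $\vez=(g^z_i)_{i\in\Gamma}$ with $x$-part recovered as the difference sequence $w_i=g^z_i-g^z_{\Succ(i)}$. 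The problem thus reduces to identifying the $\sqsubseteq$-minimal vectors of this difference system.

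The core claim is that these minimal vectors are exactly the signed interval indicators $\pm\mathbf 1_{J}$ of intervals $J$ of $\Gamma$ (in the $\vez$-coordinates). I would prove that every $\vez$ decomposes conformally into such indicators by \emph{horizontal slicing}: for each height $h\ge 1$ peel off the maximal runs of $\{i:g^z_i\ge h\}$ as atoms $+\mathbf 1_J$ and the maximal runs of $\{i:g^z_i\le -h\}$ as atoms $-\mathbf 1_J$. These sum to $\vez$ and are plainly conformal in the $\vez$-coordinates; the delicate point — and the main obstacle — is to check conformality of the \emph{difference} sequences, i.e. that at each size $i$ the cliffs $w_i$ contributed by the atoms all carry the sign of $g^z_i-g^z_{\Succ(i)}$. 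This holds level by level, and in particular at a sign change of $\vez$ the positive and the negative slices each contribute a cliff of the same (correct) sign, so no cancellation occurs. Since each interval atom has at most two cliffs, every Graver element, being $\sqsubseteq$-minimal, must equal a single atom, giving $\|\veg^x\|_1=\|w\|_1\le 2$. Finally, an interval produces exactly one cliff iff it starts at the smallest element of $\Gamma$ (its lower end then has no predecessor); such an atom has $\|w\|_1=1$ and $|J|\le|\Gamma|$, while a two-cliff atom omits the smallest size, so $|J|\le|\Gamma|-1$. In either case $\|\veh\|_1=\|\vez\|_1+\|w\|_1\le|\Gamma|+1$, completing both bounds.
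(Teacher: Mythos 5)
Your proof is correct, and its core technique is genuinely different from the paper's. Both arguments reduce to the same deduplicated ``chain'' system (one merged $x$-coordinate $w_i=\sum_{\sigma(I)=i}g^x_I$ per critical size, with $\veg^z$ given by its suffix sums), and both end up identifying Graver elements as signed interval indicators in the $z$-coordinates with at most two cliffs in the $x$-part; the difference is how the conformal decomposition is produced. The paper argues by induction on $\|\veh\|_1$: it locates the first index where $\veg^z$ is nonzero, runs a sign-propagation case analysis (a chain of claims about $\sign$ of consecutive entries), and greedily peels off a single conformal interval vector of norm at most $|\Gamma|+1$, iterating until the vector is exhausted. You instead exhibit the entire conformal decomposition at once by level-set (horizontal) slicing, which replaces the case analysis by one uniform verification: at every position all cliffs contributed by the slices carry the sign of $w_i=g^z_i-g^z_{\Succ(i)}$ and sum to it — and your observation about sign changes of $\veg^z$ is exactly the point where the paper needs its chained claims. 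You also treat duplicate columns by a direct minimality argument (the cancellation vector with $x$-part $\vece_I-\vece_{I'}$ and zero $z$-part sits conformally inside any element containing a same-size, opposite-sign pair), whereas the paper outsources this to the column-repetition lemma of De Loera et al. Your route buys a transparent global description of all Graver elements of the chain system and avoids induction; the paper's route is more elementary step by step and shortens deduplication via citation. The one step you gloss (covered on the paper's side by that citation) is lifting an atom of the merged system back to a conformal kernel vector of $L$ below the original element: since after discarding cancellations all same-size entries of $\veg^x$ share one sign, a cliff value $w'_i$ with $|w'_i|\le|w_i|$ and the same sign can be distributed over those entries without any sign flip, so the lift is routine but deserves a sentence.
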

The rest is a quite straightforward application of the stacking lemma.
\begin{proof}[Proof of Lemma~\ref{lem:graver}]
Consider the matrix $F \cdot \G(L)$: it is a matrix with $k$ rows with entries bounded by the maximum of $\vef^{\intercal} \veg$ taking $\vef$ to be a row of $F$ and $\veg \in \G(L)$.
Trivially, $\|\vef\|_\infty \leq 1$ and Lemma~\ref{lem:graverBoundForLowerMatrix} yields that $\| \veg \|_1 \le |\Gamma| \le 2^k$, so we have $\|F \cdot \G(L)\|_\infty \leq 2^k$.
However, if we split $\vef$ naturally into two parts correspondding to the two groups of variables $\vef = \left(\begin{smallmatrix} \vef^z \\ \vef^x \end{smallmatrix}\right)$, we observe that $\vef^z = \mathbf{0}$ for every row $\vef$ of $F$.
By taking this and the latter part of Lemma~\ref{lem:graverBoundForLowerMatrix} into account, we arrive at $\|F \cdot \G(L)\|_\infty \leq 2$.

Eisenbrand et al.~\cite[Lemma 2]{EisenbrandHK:2018} show that, for a matrix $E \in \ZZ^{m \times N}$, a bound of $g_1(E) \leq (2m\|E\|_\infty+1)^m$ holds.
Plugging in, we obtain $g_1(F \cdot \G(L)) \leq (2k \cdot 2 + 1)^k = \OhOp{k^{k}}$, and using the stacking lemma, $g_1(A) \leq \OhOp{k^k} \cdot 2^{k} = k^{\OhOp{k}}$.
\end{proof}

\begin{proof}[Proof of Lemma~\ref{lem:graverBoundForLowerMatrix}]
We first simplify the structure of $L$.
It is known~\cite[Lemma 3.7.2]{DeLoeraEtAl2013} that repeating columns of a matrix $B$ does not increase $g_1(B)$; thus, it is enough to bound $g_1(L')$, where $L'$ is obtained from $L$ by deleting duplicitous columns.
Note that the columns corresponding to variables $x_I, x_{I'}$ are duplicitous whenever $\sigma(I) = \sigma(I')$.
So we may assume that $L'$ has the following form, obtaiend by keeping only one column for $\vex$ for every $i \in \Gamma$:
\begin{align}
\alpha_1 &= \beta_1 & \label{eqn:graver-base} \\
\alpha_i &= \alpha_{i-1} + \beta_{i} & \forall i \in [2,K], \label{eqn:graver}
\end{align}
for $K = |\Gamma|$.

First we are going to show that any integer vector $\veh$ with $L'\veh = \mathbf{0}$ can be written as a sum of integer vectors $\veg^1, \dots, \veg^M$ for some $M \in \N$, which satisfy $L' \veg^i = \mathbf{0}$, $\veg^i \sqsubseteq \veh$, and $\|\veg^i\|_1 \leq K + 1$, for all $i \in [M]$.
This is sufficient because while the $\veg^i$'s might not be elements of $\G(L')$ themselves, their maximum $\ell_1$-norm upper bounds $g_1(L')$.
To see this, observe that each such vector can be decomposed further into a $\sqsubseteq$-sum of vectors from a Graver basis of $L'$ and notice further that if $\veg' \sqsubseteq \veg$, then $\| \veg' \|_1 \le \| \veg \|_1$.

The rest of the proof is by induction on $\| \veh \|_1$.
If $\| \veh \|_1 = 0$, the claim clearly follows.
Otherwise let $\veh = \left(\begin{smallmatrix} \veh^\alpha \\ \veh^\beta \end{smallmatrix}\right)$ with $\| \veh \|_1 > 0$ and $L'\veh = \mathbf{0}$.
We have to find a nonzero vector $\veg = \left(\begin{smallmatrix} \veg^\alpha \\ \veg^\beta \end{smallmatrix}\right)$ with $\| \veg \|_1 \le K + 1$ and $L'\veg = \mathbf{0}$ such that $\veg \sqsubseteq \veh$ and $\| \veh - \veg \|_1 < \| \veh \|_1$.

To see this, first observe that if $\veh \neq \mathbf{0}$, then $\veh^\alpha \neq \mathbf{0}$.
Let $i \in [K]$ be such that $h^\alpha_1 = \cdots = h^\alpha_{i-1} = 0$ and $h^\alpha_i \neq 0$.
Now, using \eqref{eqn:graver-base} and \eqref{eqn:graver}, we observe the following.
\begin{claim}\label{clm:simpleSign-h-i-properties}
We have $h^\beta_1 = \cdots = h^\beta_{i - 1} = 0$ and $\sign(h^\alpha_i) = \sign(h^\beta_i)$.
\end{claim}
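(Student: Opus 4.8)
The plan is to read the claim straight off the homogeneous system cutting out $\ker L'$. Writing $\veh = \left(\begin{smallmatrix} \veh^\alpha \\ \veh^\beta \end{smallmatrix}\right)$, the condition $L'\veh = \mathbf{0}$ says precisely that $h^\alpha_1 = h^\beta_1$ (from \eqref{eqn:graver-base}) and $h^\alpha_j = h^\alpha_{j-1} + h^\beta_j$ for every $j \in [2,K]$ (from \eqref{eqn:graver}). Equivalently, $h^\beta_1 = h^\alpha_1$ and $h^\beta_j = h^\alpha_j - h^\alpha_{j-1}$ for $j \geq 2$, so that each $\beta$-coordinate is a ``discrete derivative'' of the $\alpha$-coordinates. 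This is the only structural fact needed, and everything follows by substitution.

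First I would dispose of the degenerate index. If $i = 1$, the assertion $h^\beta_1 = \cdots = h^\beta_{i-1} = 0$ is vacuous, and $h^\beta_1 = h^\alpha_1$ gives $\sign(h^\alpha_1) = \sign(h^\beta_1)$ immediately.

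For $i \geq 2$, I would use that $i$ is the least index with $h^\alpha_i \neq 0$, so $h^\alpha_1 = \cdots = h^\alpha_{i-1} = 0$. Then $h^\beta_1 = h^\alpha_1 = 0$, and for each $j \in [2,i-1]$ we get $h^\beta_j = h^\alpha_j - h^\alpha_{j-1} = 0 - 0 = 0$, which proves $h^\beta_1 = \cdots = h^\beta_{i-1} = 0$. Finally, reading the $i$-th relation, $h^\beta_i = h^\alpha_i - h^\alpha_{i-1} = h^\alpha_i$ since $h^\alpha_{i-1} = 0$; hence $\sign(h^\alpha_i) = \sign(h^\beta_i)$.

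I do not anticipate a genuine obstacle: the entire content lies in correctly transcribing $L'\veh = \mathbf{0}$ into the two recurrences above and respecting the indexing convention (recall that in \eqref{eqn:graver-base}--\eqref{eqn:graver} the critical sizes are relabeled so that index $1$ is the \emph{largest} size). The only point worth double-checking is that at index $i$ one uses the relation from \eqref{eqn:graver} rather than the base case \eqref{eqn:graver-base}, which is automatic once $i \geq 2$.
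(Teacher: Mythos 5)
Your proof is correct and is essentially the same argument as the paper's: the paper also reads off from \eqref{eqn:graver-base} and \eqref{eqn:graver} that $h^\alpha_1 = \cdots = h^\alpha_{i-1} = 0$ forces $h^\beta_1 = \cdots = h^\beta_{i-1} = 0$, and then uses \eqref{eqn:graver} at index $i$ with $h^\alpha_{i-1} = 0$ to get $h^\alpha_i = h^\beta_i$. Your write-up is merely more explicit (spelling out the discrete-derivative substitution and the vacuous $i = 1$ case), which is fine.
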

\begin{proof}
Since $h^\alpha_1 = \cdots = h^\alpha_{i-1}$, we have $h^\beta_1 = \cdots = h^\beta_{i-1}$.
Now \eqref{eqn:graver} together with $h^\alpha_{i-1} = 0$ results in $h^\alpha_i = h^\beta_i$ and the claim follows.
\cqed\end{proof}
Now there are two cases: either $\sign(h^\beta_{i + 1}) = -\sign(h^\alpha_i)$ or $\sign(h^\beta_{i+1}) \in \left\{ \sign(h^\alpha_i), 0 \right\}$.

Suppose $\sign(h^\beta_{i + 1}) = -\sign(h^\alpha_i)$.
Let $\veg^\alpha = \sign(h^\alpha_i) \cdot \vece_i$ and let $\veg^\beta = \sign(h^\alpha_i) \cdot (\vece_i - \vece_{i+1})$, where $\vece_i$ is the $i$-th unit vector, i.e., a vector with zeros everywhere except of the $i$-th coordinate, which is 1.
Observe that now $\veg^\alpha$ affects solely variable $\alpha_i$ and thus we have to care for the only two conditions containing $\alpha_i$ (recall $\alpha_{i-1} = 0$):
\[
\alpha_i = \beta_i \qquad \textrm{ and }\qquad \alpha_{i+1} = \alpha_i + \beta_{i+1} \,.
\]
This leaves us with a matrix with columns corresponding to $\alpha_i, \alpha_{i+1},\beta_i$, and $\beta_{i+1}$
\[\begin{pmatrix}
1 & 0 & -1 & 0 \\
-1 & 1 & 0 & -1
\end{pmatrix} \,.\]
The vector $\veg$,defined above, now corresponds to a vector $(1, 0, 1, 1)^\transpose$.
It is easy to see that this vector is in kernel of the matrix and, since $\alpha_i$ is the only affected $\alpha$-variable, we get $L'\veg = \mathbf{0}$ and we are done in this case.
Notice that in this case we have $\| \veg \|_1 = 3$.

Now suppose $\sign(h^\beta_{i+1}) \in \left\{ \sign(h^\alpha_i), 0 \right\}$.
We observe that this affects sign of $\alpha_{i+1}$.
\begin{claim}\label{clm:sameSignImplication}
If $\sign(h^\beta_{i+1}) \in \left\{ \sign(h^\alpha_i), 0 \right\}$, then $\sign(h^\alpha_i) = \sign(h^\alpha_{i+1})$.
\end{claim}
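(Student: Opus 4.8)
The plan is to read off the sign of $h^\alpha_{i+1}$ directly from the single linear relation that $\veh$ is forced to satisfy. First I would note that, since $L'\veh = \mathbf{0}$, the $(i+1)$-st equation~\eqref{eqn:graver} reads $\alpha_{i+1} = \alpha_i + \beta_{i+1}$ and hence imposes $h^\alpha_{i+1} = h^\alpha_i + h^\beta_{i+1}$; this identity is the only ingredient I will need.

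Next I would fix $s = \sign(h^\alpha_i)$ and recall that $h^\alpha_i \neq 0$ by the choice of $i$ (it is the first index with $h^\alpha_i \neq 0$, as in Claim~\ref{clm:simpleSign-h-i-properties}), so $s \in \{+1,-1\}$. I would then split on the hypothesis $\sign(h^\beta_{i+1}) \in \{s, 0\}$. If $h^\beta_{i+1} = 0$, the identity gives $h^\alpha_{i+1} = h^\alpha_i$, so trivially $\sign(h^\alpha_{i+1}) = s$. If instead $\sign(h^\beta_{i+1}) = s$, then $h^\alpha_{i+1}$ is a sum of two integers both of sign $s$, and therefore has sign $s$ as well. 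Either way $\sign(h^\alpha_{i+1}) = \sign(h^\alpha_i)$, which is exactly the claim.

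There is essentially no obstacle here: the entire content is that adding an integer whose sign lies in $\{s,0\}$ to a nonzero integer of sign $s$ cannot flip the sign, and this is precisely what the hypothesis grants. The only point I would keep an eye on is that the argument uses index $i+1$ being in range, i.e.\ $i < K$, so that constraint~\eqref{eqn:graver} indeed applies to index $i+1$; this is guaranteed by the surrounding case analysis, where we look one step ahead of the first nonzero coordinate $h^\alpha_i$. The claim is then used to propagate the sign of the $\alpha$-coordinates forward until a sign change (or the end of the block) is reached, at which point the conformal step vector $\veg$ can be constructed as in the previous case.
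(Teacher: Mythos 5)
Your proof is correct and follows essentially the same route as the paper's: both rest on the identity $h^\alpha_{i+1} = h^\alpha_i + h^\beta_{i+1}$ from~\eqref{eqn:graver} and the elementary observation that adding a term whose sign lies in $\{\sign(h^\alpha_i),0\}$ to the nonzero $h^\alpha_i$ cannot flip its sign. The only cosmetic difference is that the paper handles the two signs of $h^\alpha_i$ by a symmetry argument while you split on whether $h^\beta_{i+1}$ is zero, which amounts to the same thing.
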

\begin{proof}
Suppose $\sign(h^\alpha_i) = 1$, the other case follows by a symmetric argument.
Then $h^\beta_{i+1}$ is nonnegative and by \eqref{eqn:graver} we obtain that $h^\alpha_{i+1}$ is a sum of a positive and a nonnegative number, thus a positive number as claimed.
\cqed\end{proof}
We are about to design a vector $\veg$ for which
\[
\alpha_{i+1} = \alpha_i + \beta_{i+1}
\]
holds.
Since $\sign(h^\beta_{i+1}) \in \left\{ \sign(h^\alpha_i), 0 \right\}$, we cannot use $g^\beta_{i+1}$ to fulfill the above condition and thus if $g^\alpha_i \neq 0$, then $\sign(g^\alpha_{i+1}) = \sign(g^\alpha_i)$.
Now if we set $g^\alpha_i = g^\beta_i = g^\alpha_{i+1} = \sign(g^\alpha_i)$ we fulfill all conditions \eqref{eqn:graver} (recall $\alpha_{i-1} = 0$).
But now the condition
\[
\alpha_{i+2} = \alpha_{i+1} + \beta_{i+2}
\]
is not satisfied.
However, we have essentially carried the difficulty from $\alpha_i$ to $\alpha_{i+1}$.
Since now either $\sign(h^\beta_{i + 1}) = -\sign(h^\alpha_i)$ or $\sign(h^\beta_{i+1}) \in \left\{ \sign(h^\alpha_i), 0 \right\}$, we arrive at the following.
\begin{claim}
Either
\begin{enumerate}
\item
there exists $j$ with $i < j \le K$ such that $\sign(h^\beta_i), \ldots, \sign(h^\beta_{j-1}) \in \left\{ \sign(h^\alpha_i), 0 \right\}$ and $\sign(h^\beta_j) = - \sign(h^\alpha_i)$ or
\item
it holds that $\sign(h^\beta_i), \ldots, \sign(h^\beta_{K}) \in \left\{ \sign(h^\alpha_i), 0 \right\}$.
\end{enumerate}
Let $j = K$ in the second case then we have $\sign(h^\alpha_i) = \cdots = \sign(h^\alpha_j)$.
\end{claim}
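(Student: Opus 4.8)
The plan is to define $j$ as the location of the first sign reversal of $h^\beta$ after index $i$, to check that this definition produces exactly the stated dichotomy, and then to obtain the constant-sign conclusion on $h^\alpha$ by an induction that iterates the reasoning of Claim~\ref{clm:sameSignImplication}. Concretely, I let $j$ be the smallest index in $(i,K]$ with $\sign(h^\beta_j)=-\sign(h^\alpha_i)$ if such an index exists, and set $j=K$ otherwise.

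First I would settle the dichotomy. If the reversal index exists, then its minimality means that every $\ell$ with $i<\ell<j$ has $\sign(h^\beta_\ell)\neq -\sign(h^\alpha_i)$, i.e. $\sign(h^\beta_\ell)\in\{\sign(h^\alpha_i),0\}$; combined with $\sign(h^\beta_i)=\sign(h^\alpha_i)$ from Claim~\ref{clm:simpleSign-h-i-properties}, this gives the prefix condition $\sign(h^\beta_i),\dots,\sign(h^\beta_{j-1})\in\{\sign(h^\alpha_i),0\}$ of the first alternative. If no reversal index exists, then $\sign(h^\beta_\ell)\in\{\sign(h^\alpha_i),0\}$ for all $\ell\in[i,K]$, which is the second alternative. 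Hence the two alternatives are exhaustive and correspond to the two branches of the definition of $j$.

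For the sign statement I would induct on $\ell$ to show $\sign(h^\alpha_\ell)=\sign(h^\alpha_i)$ throughout the relevant prefix. Assuming without loss of generality $\sign(h^\alpha_i)=1$, the base case is $h^\alpha_i>0$. For the step, the inductive hypothesis gives $h^\alpha_{\ell-1}>0$, while the prefix condition gives $\sign(h^\beta_\ell)\in\{1,0\}$, hence $h^\beta_\ell\ge 0$; recurrence~\eqref{eqn:graver} then yields $h^\alpha_\ell=h^\alpha_{\ell-1}+h^\beta_\ell>0$. This is precisely the ``carrying forward'' of Claim~\ref{clm:sameSignImplication}, iterated, and it delivers $\sign(h^\alpha_i)=\cdots=\sign(h^\alpha_{j-1})$ in the first alternative and $\sign(h^\alpha_i)=\cdots=\sign(h^\alpha_K)$ in the second; with the convention $j=K$ the latter reads $\sign(h^\alpha_i)=\cdots=\sign(h^\alpha_j)$.

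The only point needing care --- and the sole genuine difference between the two alternatives --- is the behaviour at the index $j$ itself in the first branch. There $\sign(h^\beta_j)=-\sign(h^\alpha_i)$, so $h^\alpha_j=h^\alpha_{j-1}+h^\beta_j$ is a sum of a term of sign $\sign(h^\alpha_i)$ and one of the opposite sign, and its sign is undetermined; this is exactly why the constant-sign run extends only to $\ell=j-1$ in the first branch but all the way to $K$ in the reversal-free branch. I would flag this boundary distinction explicitly, since it is precisely the reversal at $j$ that later lets a single $\beta$-coordinate close off the conformal building block $\veg$ (taking $\veg^\alpha=\sign(h^\alpha_i)(\vece_i+\cdots+\vece_{j-1})$ and $\veg^\beta=\sign(h^\alpha_i)(\vece_i-\vece_j)$, so that $\|\veg\|_1\le K+1$ and $\|\veg^\beta\|_1=2$), mirroring the direct construction already given for the case $\sign(h^\beta_{i+1})=-\sign(h^\alpha_i)$.
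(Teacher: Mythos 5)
Your proof is correct and follows essentially the same route as the paper's: take $j$ to be the first reversal index after $i$ (or $j=K$ if none exists), derive the prefix condition from minimality together with $\sign(h^\beta_i)=\sign(h^\alpha_i)$ from Claim~\ref{clm:simpleSign-h-i-properties}, and propagate the sign of $h^\alpha$ along the prefix --- your inline induction via~\eqref{eqn:graver} is exactly a repeated application of Claim~\ref{clm:sameSignImplication}, which is how the paper argues.

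Where you genuinely add value is the boundary index $j$ in the first alternative, and you are right to flag it. The paper iterates the propagation for all $i\le k\le j-1$, but the last application (for $k=j-1$, which would determine $\sign(h^\alpha_j)$) needs the premise $\sign(h^\beta_j)\in\{\sign(h^\alpha_{j-1}),0\}$ --- precisely what fails in the first alternative. Consequently the claim's closing sentence $\sign(h^\alpha_i)=\cdots=\sign(h^\alpha_j)$ is guaranteed only in the second alternative; in the first it must stop at $j-1$, as your induction does. A concrete witness: $K=2$, $i=1$, $h^\alpha_1=1$, $h^\alpha_2=-2$, $h^\beta_1=1$, $h^\beta_2=-3$ satisfies $L'\veh=\mathbf{0}$ and the first alternative with $j=2$, yet $\sign(h^\alpha_2)\neq\sign(h^\alpha_1)$. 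Your corrected follow-up construction, $\veg^\alpha=\sign(h^\alpha_i)(\vece_i+\cdots+\vece_{j-1})$ and $\veg^\beta=\sign(h^\alpha_i)(\vece_i-\vece_j)$, is also the right one: it lies in the kernel of $L'$, it is conformal to $\veh$ using only the signs that are actually determined, and it agrees with the paper's own explicit base case $j=i+1$ treated earlier in the proof of Lemma~\ref{lem:graverBoundForLowerMatrix}; by contrast, the vector written in the paper for the general first case, $\veg^\alpha=\sum_{k=i}^{j}\vece_k$ with $\veg^\beta=\vece_i+\vece_j$, violates row $j$ of $L'$. In short: same approach, but your version is the tightened statement and construction that the surrounding argument actually needs, and none of these corrections affects the bounds $\|\veg\|_1\le K+1$ and $\|\veg^\beta\|_1\le 2$ claimed in Lemma~\ref{lem:graverBoundForLowerMatrix}.
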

\begin{proof}
By repeated applications of Claim~\ref{clm:sameSignImplication} we get $\sign(h^\alpha_k) = \sign(h^\beta_{k+1})$ for all $i \le k \le j - 1$.
Initially the premise of Claim~\ref{clm:sameSignImplication} is what we suppose for this case and each application yields the premise of Claim~\ref{clm:sameSignImplication} for the next application.
\cqed\end{proof}
Let $j$ be defined as in the above claim.
Now, we finish the construction of $\veg$ by setting $\veg^\alpha = \sum_{k = i}^j \vece_k$.
In the first case of the above claim we let $\veg^\beta = \vece_i + \vece_j$ while in the second we have $\veg^\beta = \vece_i$.
It is not hard to verify that $L'\veg = \mathbf{0}$.
Indeed in the first case at index $j$ we essentially arrive to the situation described above when we argued about $\sign(h^\beta_{i + 1}) = -\sign(h^\alpha_i)$.
While if $j = K$, there is no carry, as there are no further rows of~$L'$.
The claimed bound on $g_1(L)$ follows by observing that we have $\| \veg \|_1 \le K + 1$ in both of the just described cases.
As for the latter part of the Lemma, observe that in every case we have $\left\| \veg^\beta \right\|_1 \le 2$ and notice that these variables correspond to the $x_I$ variables of the given model.
\end{proof}

} 


\bibliography{bib/minSumColoring,bib/fptlp}

\newpage\appendix\sloppy
\appendix

\appendixText

\section{Convex Integer Programming and Parameterized Complexity}
\label{sec:cip_pc}
In this section we overview existing results regarding minimization of convex (Subsection \ref{s:cimi_fd}), concave (Subsection~\ref{s:cima_fd}) and indefinite (Subsection~\ref{s:cimind_fd}) objectives in small dimension, and them move on to the rapidly growing area of IP in variable dimension (Subsection \ref{s:ilp_vd}).
The outline is inspired by Chapter 15 of the book \emph{50 Years of Integer Programming}~\cite{HemmeckeKLW10}, omitting some parts but including many recent developments.

\subsection{Convex Integer Minimization in Small Dimension} \label{s:cimi_fd}
Lenstra's result from 1983 shows that solving integer linear programming~\eqref{ILP} is polynomial when the integer dimension is small~\cite{Lenstra83}.
His result extends to the case where there are few integer variables but polynomially many continuous variables, called \emph{mixed ILP}:
\begin{equation}
\min \{\vew \vex \mid A \vex \leq \veb, \, \vex \in \ZZ^{n} \times \RR^{n'}\} \enspace . \tag{MILP} \label{MILP}
\end{equation}
Lenstra's algorithm was subsequently improved by Kannan~\cite{Kannan87} and Frank and Tardos~\cite{FrankTardos87} in two ways.
First, the required space was reduced from exponential to polynomial in the dimension, and second, running time dependency on the dimension $n$ was reduced from $2^{2^{\OhOp{n}}}$ to $n^{\OhOp{n}}$.
The main procedure in all of these algorithms is deciding \emph{feasibility}, i.e., is $\{\vex \mid A\vex \leq \veb\} \cap (\ZZ^n \times \RR^{n'})$ nonempty?
In order to optimize one does binary search over the objective, as described by Fellows et al.~\cite{FellowsLMRS08}.
We would like to point out that while Lenstra's result is old, we are aware of only a few~\cite{BredereckFNST:2015,JansenS:2010} applications which involve \emph{mixed} ILPs.

\begin{theorem}[Frank and Tardos~\cite{FrankTardos87}, Fellows et al.~\cite{FellowsLMRS08}] \label{thm:milp_fpt}
It is possible to solve~\eqref{MILP} using \mbox{$\OhOp{n^{2.5n} \cdot \poly(n') \cdot \langle A, \veb, \vew \r}$} arithmetic operations and space polynomial in $(n+n') \cdot \langle A, \veb, \vew \r$.
\end{theorem}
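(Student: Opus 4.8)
The plan is to decouple the two essentially independent ingredients hidden in the statement: an \emph{oracle} that decides feasibility of a mixed integer system within the claimed time and space, and a \emph{binary search} that upgrades this feasibility oracle into an optimization procedure without asymptotically increasing its cost. The feasibility oracle is exactly the content of the improved Lenstra algorithm due to Kannan~\cite{Kannan87} and Frank and Tardos~\cite{FrankTardos87}, which I would invoke as a black box; the binary-search reduction is the observation of Fellows et al.~\cite{FellowsLMRS08}. Thus the proof is really an assembly: state precisely what each cited result provides, then verify that combining them yields the advertised bounds.

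First I would isolate the feasibility task: given $A,\veb$, decide whether $\{\vex \mid A\vex \le \veb\} \cap (\ZZ^{n} \times \RR^{n'})$ is nonempty, and if so return a point. I would quote that Frank and Tardos solve this in $\OhOp{n^{2.5n} \poly(n') \langle A, \veb \r}$ arithmetic operations and space polynomial in $(n+n')\langle A, \veb\r$. For orientation I would recall the shape of the argument sketched earlier in the paper: one branches only on the $n$ integer coordinates, fixing the $n'$ continuous coordinates by linear programming; at each node, either the projection of the feasible region onto the integer variables is large enough to contain a lattice point (Minkowski), or by the flatness theorem it is flat in some lattice direction, so it splits into $n^{\OhOp{1}}$ parallel lower-dimensional slabs into which one recurses. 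The recursion has depth $n$ and bounded branching, producing the $n^{\OhOp{n}}$ factor, while the polynomial-space realization of Kannan and of Frank and Tardos (replacing Lenstra's original exponential-space version) is what secures the space bound.

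Second, to pass from feasibility to minimizing $\vew\vex$, I would binary search on the optimum value. Since objective and constraints have encoding length $\langle A, \veb, \vew\r$, the optimal value lies in an interval of length $2^{\OhOp{\langle A, \veb, \vew\r}}$ and, being a rational with polynomially bounded denominator, is pinned down once that interval is narrowed below width $2^{-\OhOp{\langle A, \veb, \vew\r}}$; hence $\OhOp{\langle A, \veb, \vew \r}$ iterations suffice. Each iteration appends the single constraint $\vew\vex \le t$ and calls the oracle on the enlarged system, whose encoding length is still $\OhOp{\langle A, \veb, \vew\r}$, so multiplying the per-call cost by the iteration count gives the claimed arithmetic-operation bound. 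As the oracle runs in polynomial space and binary search reuses the same space across iterations (storing only the current threshold $t$ and the best point found), the overall space remains polynomial in $(n+n')\langle A, \veb, \vew\r$.

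The hard part will not arise in this proof: it is the feasibility oracle itself — the lattice-reduction and flatness machinery behind the $n^{2.5n}$ bound and its polynomial-space version — but that is precisely the cited result of Frank and Tardos and is used as a black box here. The remaining obstacles are merely bookkeeping: confirming that fixing the continuous variables by LP does not disturb the recursion on the integer variables, and verifying that the bit-length of all intermediate data (in particular the thresholds $t$ and the slicing hyperplanes) stays polynomially bounded, so that each arithmetic operation acts on numbers of size $\poly(\langle A, \veb, \vew\r)$.
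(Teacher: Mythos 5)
Your proposal matches the paper's own treatment of this theorem: the paper likewise treats the Kannan/Frank--Tardos polynomial-space, $n^{\OhOp{n}}$-time feasibility test for mixed integer systems as a black box, and obtains optimization by binary search over the objective value exactly as described by Fellows et al. The decomposition into feasibility oracle plus binary search, and the bookkeeping on encoding lengths and space reuse, are the same as in the paper, so your argument is correct and takes essentially the same route.
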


This result was later generalized to minimizing a quasiconvex function over a convex set, i.e., problem~\eqref{IP} with $f$ quasiconvex.
A function $f\colon \RR^n \rightarrow \RR$ is called \emph{quasiconvex} if for every $\alpha \in \RR$, the \emph{level set} $\{\vex \in \RR^n \mid f(\vex) \leq \alpha\}$ is a convex subset of $\RR^n$.
The first to show this was Grötschel, Lovász and Schrijver in their famous book~\cite[Theorem 6.7.10]{GLS}.
Unlike above, all of the following results require space exponential in the dimension.
Also, none of the cited results explicitely deals with the mixed integer case; however it is folklore that this is \FPT as well.

The subsequent research diverged in several directions.
The main difference between the papers we discuss is in the assumptions on the representation of the convex set $S$.
Since there is, strictly speaking, no ``better'' or ``worse'' assumption, choosing one is a matter of preference with respect to the specific scenario.
Another difference is in the motivation: some authors seek to achieve better time complexity while others contribute by simplifying existing proofs.
Our list is categorized according to the assumptions on the representation of $S$.

\subparagraph*{Semialgebraic convex set.}
Khachiyan and Porkolab~\cite{KhachiyanP00} state their result for minimizing a quasiconvex function over a \emph{semialgebraic convex set}; without going into technical details, let us say that these are closely related to spectrahedra, the solution spaces of semidefinite programs.
Independently, convex sets and semialgebraic sets have been studied for a long time, but together they have been studied only in the past ten years as \emph{Convex Algebraic Geometry}; cf. a book on the topic by Blekherman, Parillo and Thomas~\cite{BlekhermanPT12}.
A drawback of this result is an exponential dependence on the number of polynomials defining the semialgebraic convex set.

\begin{theorem}[Khachiyan and Porkolab~\cite{KhachiyanP00}] \label{thm:semialgebraic_ip_fpt}
Problem~\eqref{IP} with $f$ quasiconvex and $S$ a semialgebraic convex set defined by $k$ polynomials is \FPT with respect to $k$ and $n$.
\end{theorem}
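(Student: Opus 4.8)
The plan is to reduce the optimization problem to a sequence of \emph{integer feasibility} problems over convex semialgebraic sets, and then to solve feasibility by a Lenstra-type recursion whose geometric primitives are carried out inside the first-order theory of the reals. First I would use quasiconvexity to turn optimization into feasibility: for a threshold $\alpha$ the sublevel set $\{\vex \mid f(\vex) \le \alpha\}$ is convex (this is precisely quasiconvexity) and semialgebraic, so $S_\alpha := S \cap \{f \le \alpha\}$ is again a convex semialgebraic set, now defined by $k+\OhOp{1}$ polynomials. Since nonemptiness of $S_\alpha \cap \ZZ^n$ is monotone in $\alpha$, binary search over the $\OhOp{\langle\text{input}\rangle}$ candidate values of the optimum reduces the theorem to the following: decide whether a convex semialgebraic set, defined by $\OhOp{k}$ polynomials in $n$ variables, contains an integer point, in time $g(n,k)\cdot\langle\text{input}\rangle^{\OhOp{1}}$.

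For feasibility I would run the convex-body analogue of Lenstra's algorithm, recursing on the dimension $n$ and driven by the \emph{flatness theorem}: a convex body $K \subseteq \RR^n$ either contains a lattice point or has lattice width bounded by a function $c(n)$ of the dimension alone. Concretely, at each node I compute an ellipsoidal rounding $E$ with $E \subseteq K \subseteq \gamma(n) E$, apply lattice basis reduction to the lattice associated with $E$ to extract a candidate direction $\ved \in \ZZ^n$ of small width, and then either read off an interior lattice point (when $K$ is ``fat''), or conclude that $\ved\cdot\vex$ takes at most $\OhOp{c(n)}$ integer values on $K$. In the latter case I slice $K$ by the hyperplanes $\ved\cdot\vex = t$ for each admissible integer $t$ and recurse on each slice, which is a convex semialgebraic set of dimension $n-1$. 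The recursion has depth $n$ and branching $\OhOp{c(n)}$, so the total number of recursive calls is bounded by a function of $n$ only.

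The point where semialgebraic sets differ from polyhedra — and the step I expect to be the main obstacle — is making each geometric primitive \emph{effective} for a set given only by polynomial inequalities: computing the rounding ellipsoid, evaluating $\min$ and $\max$ of $\ved\cdot\vex$ over $S_\alpha$ to certify flatness, and describing the sliced sets. Each of these I would phrase as a sentence (or an optimization) in the first-order theory of the reals and decide using Renegar's quantifier-elimination algorithm, whose running time is polynomial in the number $s$ of polynomials, their degree $D$, and the coefficient bit-length, but only \emph{doubly exponential in the number of variables and the number of quantifier alternations}. The crux is verifying that along the whole recursion these ``cheap'' parameters stay bounded: slicing substitutes one variable into the defining polynomials, so it increases neither their number (kept at $\OhOp{k}$) nor their degree, and the extra quantifiers introduced to express rounding and width are a constant number of blocks over $\OhOp{n}$ fresh variables. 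One must also bound the growth of the integers produced by lattice reduction and by the width computation so that the bit-length of each subproblem stays polynomial in the original $\langle\text{input}\rangle$.

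Combining the two ingredients, every one of the $g(n)$ recursive feasibility calls costs a single quantifier-elimination query that runs in time $g'(n,k)\cdot\langle\text{input}\rangle^{\OhOp{1}}$, and the outer binary search multiplies this by $\OhOp{\langle\text{input}\rangle}$; hence the overall running time is $g''(n,k)\cdot\langle\text{input}\rangle^{\OhOp{1}}$, establishing fixed-parameter tractability with respect to $n$ and $k$. The delicate accounting is entirely in the third paragraph: confining the unavoidable double-exponential blow-up of quantifier elimination to the parameters $n$ and $k$, while keeping the polynomial count, degrees, and bit-sizes of every sliced subproblem under control.
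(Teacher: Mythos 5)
First, a caveat about the comparison itself: this paper never proves the statement --- Theorem~\ref{thm:semialgebraic_ip_fpt} is a literature citation inside the survey appendix, attributed to Khachiyan and Porkolab, and the surrounding text only describes the result. So your attempt can only be measured against the original proof. Measured that way, your skeleton is the right one: Khachiyan and Porkolab do reduce optimization to integer feasibility via quasiconvex sublevel sets, and do run a Lenstra-type, flatness-driven recursion in which the geometric primitives (rounding, width estimation, slicing) are carried out by decision procedures for the first-order theory of the reals.

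There is, however, a genuine gap, and it sits exactly where the cited paper does its hardest work: \emph{unboundedness}. The flatness theorem and the ellipsoidal rounding you invoke apply to convex \emph{bodies}, but neither $S$ nor the sublevel sets $S_\alpha$ need be bounded, and your binary search has no endpoints without an a priori bound on the optimal value $f(\vex^*)$. The missing ingredient --- which is the technical core of Khachiyan--Porkolab, not a routine detail --- is a theorem bounding the binary length of a minimal integer point: if $S_\alpha \cap \ZZ^n \neq \emptyset$, then it contains an integer point whose encoding length is polynomial in the input length for fixed $n$. Only with that bound in hand can one intersect with a box to obtain a convex body, legitimately start the flatness recursion, and bound the range of the binary search; without it the recursion you describe cannot even begin. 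A secondary discrepancy worth noting: your accounting keeps every quantifier-elimination call polynomial in the number $k$ of defining polynomials, which would make the whole algorithm polynomial in $k$; but the paper explicitly records that the Khachiyan--Porkolab algorithm has an \emph{exponential} dependence on $k$ (eliminating it is precisely the later contribution of Heinz and of Hildebrand and K\"oppe, by different methods). So either your bookkeeping silently proves a stronger theorem than the one cited, or --- more plausibly --- the cost of the quantifier-elimination subroutines accumulated over the recursion is underestimated. Neither issue threatens the stated FPT claim in $k$ and $n$, but the minimal-solution bound is a missing idea, not a missing calculation.
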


\subparagraph*{Quasiconvex polynomials.}
Heinz~\cite{Heinz05} studied a more specific case of minimizing a quasiconvex polynomial over a convex set given by a system of quasiconvex polynomials, that is, polynomials that are quasiconvex functions.
His result improves over Khachiyan and Porkolab in terms of time complexity, dropping the exponential dependence on the number of polynomials.
The dependence on the dimension $n$ is $\OhOp{2^{n^3}}$, which was further improved by Hildebrand and Köppe~\cite{HildebrandK13} to $n^{\OhOp{n}}$.
The latter result can be stated as follows.
Let $\hat{F},F_1, \dots, F_m \in \ZZ[\vex] = \ZZ[x_1, \dots, x_n]$ be polynomials with integer coefficients. Then we get the \textsc{Convex Polynomial IP} problem
\begin{equation}
\min \{\hat{F}(\vex) \mid F_i(\vex) < 0\, \forall i \in [m], \, \vex \in \ZZ^n\} \enspace .\tag{CPIP}\label{CPIP}
\end{equation}

\begin{theorem}[Hildebrand and Köppe~\cite{HildebrandK13}] \label{thm:polynomial_ip_fpt}
Given a~\eqref{CPIP} instance with \mbox{$\mathcal{F} = \left\{ \hat{F}, F_1, \ldots, F_m \right\}$}.
Let ${d \geq 2}$ be the upper bound on the degree of each \mbox{$F \in \mathcal{F}$}, $M$ is the maximum number of monomials in each \mbox{$F \in \mathcal{F}$} and $\ell$ bounds the binary length of the coefficients of $F$.
Then it can be solved in time:
  \begin{itemize}
  \item $n^{\OhOp{n}} \cdot m(r\ell Md)^{\OhOp{1}}$, thus \FPT with respect to the dimension $n$, if the feasible region is bounded such that $r$ is the binary encoding length of that bound with \mbox{$r\leq \ell d^{\OhOp{p}}$},
  \item $d^{\OhOp{n}} n^{2n} \cdot m\ell^{\OhOp{1}}$, thus \FPT with respect to the dimension $n$ and the maximum degree $d$, if the feasible region is unbounded.
  \end{itemize}
\end{theorem}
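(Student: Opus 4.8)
The plan is to follow the classical Lenstra--Kannan paradigm behind the linear case (Theorem~\ref{thm:milp_fpt}), adapted from linear inequalities to quasiconvex polynomial ones, and to squeeze the recursion hard enough to reach $n^{\Oh(n)}$ rather than the $2^{\Oh(n^3)}$ of Heinz~\cite{Heinz05}. First I would reduce optimization to feasibility: by binary search on a threshold $\gamma$ for the objective, it suffices to decide, for a fixed $\gamma$, whether the system $\hat F(\vex) \le \gamma,\ F_1(\vex) < 0, \dots, F_m(\vex) < 0$ has an integer solution. Since $\hat F$ is quasiconvex, adding the level constraint $\hat F \le \gamma$ keeps the feasible region $K$ convex (it is an intersection of convex level sets), and the number of search steps is only $\poly(r,\ell,d)$, so binary search contributes a polynomial factor and the whole difficulty concentrates in the feasibility test.

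For feasibility I would run a flatness-based recursion on the convex body $K = \{\vex \mid F_i(\vex) < 0\ \forall i\}$. The key geometric step is to compute an ellipsoidal rounding $E$ with $E \subseteq K \subseteq n^{\Oh(1)} E$; quasiconvexity lets me implement a separation oracle, since a violated $F_i$ yields a separating (sub)gradient, so such an $E$ is obtainable by the shallow-cut ellipsoid method. From $E$ I extract, via lattice basis reduction in the norm induced by $E$, either a short vector witnessing small lattice width of $K$, or a certificate that the lattice width exceeds the flatness constant $\omega(n) = \Oh(n^{3/2})$. By the flatness theorem, in the latter case $K \cap \ZZ^n \neq \emptyset$ and we stop; in the former case $K \cap \ZZ^n$ is covered by at most $\Oh(\omega(n))$ parallel lattice hyperplanes, and for each I substitute to obtain an $(n-1)$-dimensional instance — again a system of quasiconvex polynomials — and recurse. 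This strictly refines the semialgebraic route of Khachiyan and Porkolab (Theorem~\ref{thm:semialgebraic_ip_fpt}) by avoiding the exponential dependence on the number of defining polynomials.

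The complexity then splits into a geometric and an algebraic accounting. Geometrically each node branches into $\Oh(\omega(n)) = n^{\Oh(1)}$ subproblems and drops the dimension by one, so the product over the $n$ levels of recursion is $n^{\Oh(n)}$ — the origin of the $n^{\Oh(n)}$ and $n^{2n}$ factors. Algebraically I must track how $(d, M, \ell)$ evolve under the affine substitutions that restrict to a hyperplane: the degree $d$ does not increase, while the monomial count $M$ and the coefficient length $\ell$ grow only polynomially per level, giving the $(r\ell M d)^{\Oh(1)}$ and $m\ell^{\Oh(1)}$ factors. In the \emph{bounded} case the bound of encoding length $r$ caps the search directly; in the \emph{unbounded} case I would first invoke an a~priori bound — of magnitude $d^{\Oh(n)}$ — on the coordinates of a smallest feasible integer point of a degree-$d$ system, reducing to a bounded region and feeding $r \sim d^{\Oh(n)}$ back into the bounded bound, which is exactly where the extra $d^{\Oh(n)}$ in the second estimate appears.

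The hard part, and precisely the point of the improvement over Heinz, will be keeping the geometric recursion lean: a single well-chosen ellipsoidal rounding per node must suffice, and the branching factor must be held to $n^{\Oh(1)}$ slices rather than exponentially many, so that the depth-$n$ recursion multiplies out to $n^{\Oh(n)}$. Reconciling this with the algebraic bookkeeping — controlling monomial blow-up under repeated substitution, and supplying the a~priori size bound cleanly in the unbounded case — is where the two analyses must be made to mesh, and I expect that interface to be the main technical obstacle.
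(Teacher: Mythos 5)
First, a point of reference: the paper itself contains no proof of Theorem~\ref{thm:polynomial_ip_fpt} — it is a survey statement quoted from Hildebrand and Köppe~\cite{HildebrandK13} — so your proposal can only be measured against the cited work. Your outline does reproduce the architecture of that work (which the paper explicitly classifies as a Lenstra-type algorithm): binary search reducing optimization over a quasiconvex objective to feasibility of a system of quasiconvex polynomial inequalities, ellipsoidal rounding of the convex region, lattice basis reduction, the flatness dichotomy, recursion into $n^{\Oh(1)}$ lower-dimensional slices obtained by affine substitution, and the unbounded case handled by an a priori bound fed back into the bounded case. That is the correct skeleton and the correct accounting of where $n^{\Oh(n)}$ and $d^{\Oh(n)}$ arise.

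Two genuine gaps remain if this were to stand as a proof. First, the step ``a violated $F_i$ yields a separating (sub)gradient'' is where the real difficulty of the quasiconvex case lives: quasiconvex functions have no subgradients in the convex-analysis sense, and the gradient of a quasiconvex polynomial can vanish at a boundary point of its level set (consider $x \mapsto x^3$ at the origin), so the shallow-cut ellipsoid oracle cannot be implemented by naive differentiation. Constructing exact rational separating hyperplanes with provably bounded coefficient size for quasiconvex polynomials is the technical core of Heinz~\cite{Heinz05} and of~\cite{HildebrandK13}, and it is also what drives the algebraic bookkeeping you defer to the end. Second, ``keeping the recursion lean'' is not by itself what improves Heinz's $2^{\Oh(n^3)}$ to $n^{\Oh(n)}$: the branching factor per level is governed by the product of the flatness constant and the approximation quality of the basis reduction employed, so you must commit to Kannan-type strong reduction, computable in $n^{\Oh(n)}$ time~\cite{Kannan87}, together with a polynomial flatness bound; LLL-quality reduction would yield $2^{\Oh(n)}$ slices per level and destroy the claimed bound. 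Finally, a small correction in the unbounded case: the a priori bound is on the binary encoding length of some feasible integer point (roughly $\ell d^{\Oh(n)}$), not on the magnitude of its coordinates, which is exactly why $r \sim \ell d^{\Oh(n)}$ is the quantity to substitute into the bounded-case running time.
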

Note that, in particular, the running time is polynomial with respect to the number of polynomials~$m$.
We also note that the quantities $r$ and $\ell$ are natural and effectively appear in the $\langle A, \veb, \vew\r$ runtime term of Theorem~\ref{thm:milp_fpt}.

An advantage of representing $S$ and $f$ by polynomials is that the representation is ``explicit'', in contrast to representing them by an oracle.
Polynomial objectives appear for example in scheduling~\cite{KnopK:2016,MnichW:2015} where models of small dimension represent jobs by multiplicities, and an objective such as $\sum w_j C_j$ (sum of weighted completion times) becomes quadratic in this encoding.
The drawback of representing $S$ or $f$ by polynomials is that it is sometimes unnatural, with piece-wise linear convex constraints (Model~\ref{mod:cds:convex}) or objectives~\cite{GajarskyHKO:2017}.

\subparagraph*{Oracles.}
Further research lead to splitting convex IP (i.e., problem~\eqref{IP} with $f$ convex) in two independent parts to allow more focus on each of them.
The first part is showing that a certain problem formulation (such as quasiconvex polynomial inequalities, semialgebraic set etc.) can be used to give a set of geometric oracles.
The second part is to show that, given these oracles, solving a convex IP can be done in a certain time.

This approach is taken by Dadush, Peikert and Vempala~\cite{DadushPV11} who further improve the time complexity of Hildebrand and Köppe~\cite{HildebrandK13} when the convex set is given by three oracles: a so-called \emph{weak membership}, \emph{strong separation} and \emph{weak distance} oracles.
Observe that the running time of Theorem~\ref{thm:polynomial_ip_fpt} can be rewritten as $\OhStarOp{n^{2n}}$; Dadush et al. improve it to $\OhStarOp{n^{\frac{4}{3}n}}$.
Moreover, Dadush claims in his PhD thesis~\cite{Dadush12} a randomized $\OhStarOp{n^n}$ algorithm; for derandomization cf.~\cite{DadushV:2013}.
(Here and in the following we use the $\Oh^*$ notation which suppresses polynomial factors.)

This sequence of results can be seen as a part of a race for the best running time.
Dadush~\cite{Dadush12} classifies existing algorithms as \emph{Lenstra-type} and \emph{Kannan-type}, depending on the space decomposition they use (hyperplane and subspace, respectively).
The type of algorithm determines the best possible running time -- Lenstra-type algorithms depend on a so-called \emph{flatness theorem}, which gives a lower-bound $\OhStarOp{n^n}$.
The best known Lenstra-type algorithm is the $\OhStarOp{n^{\frac{4}{3}n}}$ algorithm of Dadush et al.~\cite{DadushPV11}.
Note that both Theorem~\ref{thm:semialgebraic_ip_fpt} and~\ref{thm:polynomial_ip_fpt} are Lenstra-type algorithms.
On the other hand, Kannan-type algorithms could run as fast as $\OhStarOp{({\log n})^{n}}$ if a certain conjecture of Kannan and Lovász holds~\cite[Theorem 7.1.3]{Dadush12}.
The~$\OhStarOp{n^n}$ algorithm given in Dadush's thesis~\cite{Dadush12} is Kannan-type.
It is also worth noting that the only known lower bound for convex IP in general is the trivial one of $\OhStarOp{2^n}$ (by encoding SAT as binary ILP).

The oracle approach is also taken by Oertel, Wagner and Weismantel~\cite{OertelWW14}.
They show that a convex IP given by a so-called \textit{first order evaluation oracle} can be reduced to several MILP subproblems, which are readily solved by existing solvers (implementing for example Theorem~\ref{thm:milp_fpt}).
In an earlier version of this paper~\cite{OertelWW12} the authors take a more generic approach requiring a set of oracles to solve a minimization problem, and discuss how to construct these oracles specifically for the~\eqref{CPIP} problem.

\subsection{Concave Integer Minimization in Small Dimension} \label{s:cima_fd}
When we make the step from a \emph{linear} to a general \emph{quasiconvex} objective function, we have to distinguish carefully between convex minimization and maximization, or equivalently, between minimizing a convex and a concave function.
Here we mention one result that can be applied in the concave minimization case.

\subparagraph*{Vertex Enumeration.}
Provided bounds on the encoding length and number of inequalities, there is a good bound on the number of vertices of the integer hull of a polyhedron:

\begin{theorem}[Cook et al.~\cite{CookHKM92}]
Let \mbox{$P = \left\{ \vex \in \RR^n \mid A\vex \leq \veb \right\}$} be a~rational polyhedron with \mbox{$A \in \QQ^{m \times n}$} and let $\phi$ be the largest binary encoding size of any of the rows of the system \mbox{$A\vex \leq \veb$}.
Let \mbox{$P^I = \textrm{conv}(P \cap \ZZ^n)$} be the integer hull of $P$.
Then the number of vertices of $P^I$ is at most \mbox{$2m^n (6n^2 \phi)^{n-1}$}.
\end{theorem}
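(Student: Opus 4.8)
The plan is to charge each vertex of $P^I$ to a nearby vertex of $P$ and to split the count accordingly; the two factors $m^n$ and $(6n^2\phi)^{n-1}$ in the bound will arise, respectively, as the number of candidate vertices of $P$ and as the number of vertices of $P^I$ that can be charged to any single one of them. First I would set up an assignment of a \emph{basis} of $P$ to every vertex of $P^I$, then reduce the per-basis count to counting the vertices of the integer hull of a single simplicial cone, and finally bound the latter by a continued-fraction-type argument.

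\textbf{Assigning a basis.} Let $\vex$ be a vertex of $P^I=\mathrm{conv}(P\cap\ZZ^n)$. Being a vertex, $\vex$ is the unique maximizer of some objective $\vecc^\transpose\vey$ over $P\cap\ZZ^n$. The linear program $\max\{\vecc^\transpose\vey\mid\vey\in P\}$ attains its optimum on a face of $P$, and I would pick a vertex $\vex_B=A_B^{-1}\veb_B$ of that face; here $B\subseteq[m]$ is a set of $n$ linearly independent rows of $A$ (a \emph{basis}) with $A_B$ nonsingular. Assigning $B$ to $\vex$, the number of possible bases is at most $\binom{m}{n}\le m^n$, which yields the first factor; the stray factor $2$ absorbs degeneracies (e.g.\ when $P$ or $P^I$ is not full-dimensional, or ties in the choice of optimal vertex).

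\textbf{Reducing to a simplicial cone.} By construction $\vecc$ lies in the normal cone $N_B=\mathrm{cone}(\vea_i\mid i\in B)$ of $\vex_B$, so over the simplicial \emph{tangent cone} $C_B=\{\vey\mid \vea_i^\transpose\vey\le b_i,\ i\in B\}$ the continuous maximum of $\vecc^\transpose\vey$ is attained at the apex $\vex_B$. Since $P\subseteq C_B$, a key lemma I would prove is that each vertex $\vex$ of $P^I$ assigned to $B$ is in fact a vertex of $\mathrm{conv}(C_B\cap\ZZ^n)$ lying on the ``sail'' of this integer hull that faces the apex --- intuitively, near $\vex_B$ the body $P$ and its tangent cone $C_B$ agree, and the objectives in $N_B$ pull the optimum toward $\vex_B$. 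It therefore suffices to bound the number of sail vertices of the integer hull of a single simplicial cone whose defining rows have encoding size at most $\phi$.

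\textbf{Counting sail vertices (the crux).} This last step is the technical heart and the part I expect to be the main obstacle. Passing to slack coordinates $u_i=b_i-\vea_i^\transpose\vey$ turns $C_B$ into the nonnegative orthant and sends $\ZZ^n$ to a shifted lattice $\Lambda$ of determinant $|\det A_B|$, whose logarithm is at most $\OhOp{n\phi}$ by Hadamard's inequality. The sail vertices correspond to the $\sqsubseteq$-minimal ``staircase'' vertices of $\Lambda$ in the positive orthant. For $n=2$ these are exactly the convergents of a continued fraction expansion, of which there are $\OhOp{\log\det\Lambda}=\OhOp{\phi}$, matching the exponent $n-1=1$. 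In general I would argue by induction on the dimension via projection: fixing one slack coordinate, the cross-sections are $(n-1)$-dimensional sails of lattices whose determinant is controlled by $\det\Lambda$, each contributing $(6n^2\phi)^{n-2}$ vertices, while only $\OhOp{n\phi}$ distinct levels of that coordinate can carry a vertex, multiplying out to $(6n^2\phi)^{n-1}$. The delicate point is precisely this sparsity estimate: the naive proximity bound only confines the vertices to a box of side $2^{\OhOp{\phi}}$, which contains exponentially many integer points, so the entire saving must come from the fact that the boundary of a lattice sail turns only polynomially in $\phi$ many times. Combining the per-basis bound with the $\le m^n$ choices of basis (and absorbing constants into the factor $2$) gives the claimed $2m^n(6n^2\phi)^{n-1}$.
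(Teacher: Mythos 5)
First, a remark on the ground truth: the paper you are being compared against does not prove this statement at all --- it is quoted verbatim, with citation, from Cook, Hartmann, Kannan and McDiarmid~\cite{CookHKM92}. So your attempt can only be measured against the proof in that reference. Your overall skeleton (at most $m^n$ bases, times a per-basis count of $(6n^2\phi)^{n-1}$) matches the shape of the bound, but the key lemma of your step ``Reducing to a simplicial cone'' is false, and with it the reduction collapses. Concretely, take $n=2$, $m=3$ and
\[
P=\left\{(x,y)\in\RR^2 \mid y\le \tfrac12,\ x-y\le \tfrac12,\ x-3y\le -\tfrac25\right\}.
\]
Then $P\cap\ZZ^2=\{(x,y)\in\ZZ^2 \mid y\le 0,\ x\le 3y-1\}$, so $P^I=\{(x,y) \mid y\le 0,\ x\le 3y-1\}$ has the unique vertex $\vev=(-1,0)$, certified by $\vecc=(1,0)$. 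The LP $\max\{\vecc^\transpose\vey \mid \vey\in P\}$ has the unique optimum $(1,\tfrac12)$, where exactly the first two constraints are tight; so your rule assigns to $\vev$ the basis $B$ with rows $\vea_1=(0,1)$, $\vea_2=(1,-1)$, which is even dual-feasible with strictly positive multipliers, $\vecc=\vea_1+\vea_2\in\operatorname{int}\operatorname{cone}(\vea_1,\vea_2)$. Yet $\vev$ is \emph{not} a vertex of $\mathrm{conv}(C_B\cap\ZZ^2)$: here $C_B\cap\ZZ^2=\{(x,y)\in\ZZ^2 \mid y\le 0,\ x\le y\}$, whose hull $\{y\le 0,\ x\le y\}$ has $(0,0)$ as its only vertex, and $\vev=(-1,0)=\tfrac12\bigl((0,0)+(-2,0)\bigr)$ is a proper convex combination of integer points of $C_B$. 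The root cause is that $P^I\subseteq (C_B)^I$ and a vertex of the smaller hull need not remain a vertex of the larger one: integer points of $C_B$ that the remaining constraints cut away from $P$ (here $(0,0)$) can absorb $\vev$ into the relative interior of a face of $(C_B)^I$. Your intuition that ``near $\vex_B$ the body $P$ and its tangent cone agree'' is true but beside the point, because $\vev$ need not lie anywhere near the apex.

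This is also not a local slip that a cleverer choice of basis obviously repairs: it is not clear that \emph{any} $n$-row subsystem certifies $\vev$ as a vertex of its integer hull, and Doignon--Bell--Scarf-type phenomena (the Helly number of $\ZZ^n$ is $2^n$, not $n+1$) indicate that, unlike in linear programming, integer certificates may need far more than $n$ constraints. Accordingly, the proof of Cook et al.\ never passes to integer hulls of tangent cones. It counts in the space of \emph{objectives}: the cone of objectives bounded on $P$ is covered by the at most $\binom{m}{n}\le m^n$ simplicial cones spanned by $n$ linearly independent rows of $A$, and for each such cone one bounds by $(6n^2\phi)^{n-1}$ the number of vertices of $P^I$ that uniquely maximize some objective in it, by bucketing the multipliers $\lambda_i$ (writing $\vecc=\sum_i\lambda_i\vea_i$) into polynomially many scale classes; what makes this work is that every vertex of $P^I$ is an integer vector of encoding size polynomial in $n\phi$ --- this is where $\phi$ enters, not through a Hadamard bound on $\det A_B$. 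Your final step, the ``sail'' bound for a single simplicial cone, is a true statement (and in dimension $2$ your continued-fraction picture is the standard proof of it), but your argument never validly reduces the theorem to it, and your higher-dimensional sketch for it leans on exactly the sparsity estimate you yourself flag as missing.
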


Since Hartmann~\cite{Hartmann89} also gave an algorithm for enumerating all the vertices running in polynomial time in small dimension, it is possible to evaluate the concave objective function on each of them and pick the best.
The crucial observation which makes this sufficient is that any concave objective is minimized on the boundary, which will be a vertex.
Moreover, in parameterized complexity we are often dealing with combinatorial problems whose ILP description only contains numbers encoded in unary, implying that the \emph{encoding length} $\phi$ is logarithmic in the size of the instance $|I|$.
Since $(\log |I|)^k$ for fixed $k$ is order $o(|I|)$~\cite[Lemma 6.1]{ChitnisCHM:15}, convex integer maximization is \FPT in all such cases.

\subsection{Indefinite Optimization in Small Dimension} \label{s:cimind_fd}
Results regarding optimizing indefinite polynomials in fixed dimension are few, indicating this area merits much attention.
De Loera et al.~\cite{DeLoeraHKW:2008} show that optimizing an indefinite non-negative polynomial over the mixed-integer points in small dimensional polytopes admits a fully-polynomial time approximation scheme (FPTAS); however, the runtime of this algorithm is \XP from the perspective of parameterized complexity, and it has not yet found applications.

Hildebrand et al.~\cite{HildebrandWZ:2016} recently also provided an FPTAS, however, their results are incomparable to the previous one.
On one hand, their results are stronger because they use a different notion of approximation, and because they do not require the non-negativity of the objective function.
On the other hand, there are additional requirements on the polynomial, namely that it is quadratic and has at most one negative or at most one positive eigenvalue.

The most significant contribution from the perspective of parameterized complexity is an \FPT algorithm for \textsc{Quadratic Integer Programming} by Lokshtanov~\cite{Lokshtanov:2015}, independently also discovered by Zemmer~\cite{Zemmer:2017}:

\begin{theorem}[{Lokshtanov~\cite{Lokshtanov:2015}, Zemmer~\cite{Zemmer:2017}}]
Let $Q \in \ZZ^{n \times n}$ and $f(\vex) = \vex^{\intercal} Q \vex$.
Then problem~\eqref{LinIP} is \FPT parameterized by $n$, $\|A\|_\infty$, and $\|Q\|_\infty$.
\end{theorem}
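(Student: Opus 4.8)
The plan is to reduce the indefinite problem to a bounded number of \emph{convex} integer programs in fixed dimension, each of which is handled by Theorem~\ref{thm:polynomial_ip_fpt}. First I would symmetrize: since $\vex^\intercal Q \vex = \vex^\intercal \tfrac{1}{2}(Q + Q^\intercal)\vex$, we may assume $Q$ is symmetric at the cost of doubling the objective, which keeps $\|Q\|_\infty$ within a constant factor. A symmetric rational matrix can be diagonalized by a \emph{congruence}: there is an invertible $T \in \QQ^{n \times n}$ with $T^\intercal Q T = D$ diagonal, and crucially the bit-complexity of $T$ and $D$ is bounded by a function of $n$ and $\langle Q\rangle$, hence of the parameters $n$ and $\|Q\|_\infty$. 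Writing $\vey = T^{-1}\vex$ gives $f(\vex) = \sum_{\ell} D_{\ell\ell}\, y_\ell^2$, where each $y_\ell$ is a rational linear form in $\vex$ with small coefficients. By Sylvester's law of inertia the number $q$ of negative diagonal entries is an invariant and $q \le n$; these index the \emph{concave} directions, while the positive entries contribute a convex part.

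The key observation is that the concavity is confined to only $q \le n$ directions. If we \emph{fix} the values of the $q$ concave linear forms, say $y_j = \tau_j$ for each negative index $j$ (clearing denominators to turn these into integer linear constraints on $\vex$), then the subtracted term $\sum_{j}|D_{jj}|\tau_j^2$ becomes a constant, and the residual task is to minimize the remaining convex quadratic subject to $A\vex \le \veb$ together with these $q$ equalities. This is a convex polynomial IP in dimension $n$ with integer coefficients bounded in terms of the parameters, so Theorem~\ref{thm:polynomial_ip_fpt} (Hildebrand and Köppe~\cite{HildebrandK13}) solves it in \FPT time. Thus everything reduces to enumerating the relevant tuples $(\tau_j)_j$ of concave-direction values and taking the best completion.

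The main obstacle is precisely this enumeration: over the integer points of $\{A\vex \le \veb\}$ the forms $y_j$ may take a number of values that grows with $\veb$, so a naive sweep is not \FPT. This is where the bulk of Lokshtanov's and Zemmer's work lies, and where I would concentrate. The idea I would pursue is that, since we are \emph{subtracting} a convex function, the optimum is pushed to the boundary in the concave directions, so that only ``extremal'' configurations of $(\tau_j)_j$ matter. Projecting onto the $q$-dimensional space of concave forms and invoking Cook et al.~\cite{CookHKM92} — whose bound on the number of vertices of an integer hull is polynomial once $\|A\|_\infty$ and the dimension are fixed — together with Hartmann's polynomial-time vertex enumeration~\cite{Hartmann89}, one aims to restrict attention to a family of candidate configurations of size $g(n,\|A\|_\infty,\|Q\|_\infty)\cdot\poly(\langle\veb\rangle)$, each completed by the convex solve above. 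Making this boundary/vertex argument rigorous in the genuinely indefinite regime (as opposed to the purely concave or purely convex cases), and controlling the coupling between the concave and convex directions, is the hard part; the resulting running time $g(n,\|A\|_\infty,\|Q\|_\infty)\cdot\poly(\langle\veb\rangle)$ then matches the $\langle\veb\rangle$ dependence advertised in the toolbox.
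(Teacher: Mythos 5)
First, a point of comparison: the paper itself does not prove this theorem --- it quotes it from Lokshtanov~\cite{Lokshtanov:2015} and Zemmer~\cite{Zemmer:2017} --- so your proposal has to be judged on its own internal logic and against those original proofs. The preparatory steps are sound as far as they go: symmetrization, rational congruence diagonalization with parameter-bounded bit-complexity, Sylvester's law of inertia, and the observation that fixing the $q$ concave forms $y_j=\tau_j$ leaves a convex polynomial IP solvable by Theorem~\ref{thm:polynomial_ip_fpt}. You also correctly identify that enumerating the tuples $(\tau_j)_j$ is the entire difficulty. But the tool you propose for that step --- projecting onto the concave directions and enumerating vertices of the integer hull via Cook et al.~\cite{CookHKM92} and Hartmann~\cite{Hartmann89} --- is unsound, so the gap is genuine and not merely technical.

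Concretely, let $m(\tau)$ be the minimum of the convex residual over the integer points of the fiber $\{\vex \mid y_j(\vex)=\tau_j \ \forall j\}$, and let $h(\tau) = -\sum_j |D_{jj}|\tau_j^2 + m(\tau)$ be the value you must minimize over the projected feasible set. Vertex enumeration is only valid if $h$ is concave (the minimum of a concave function over a polytope sits at a vertex of its integer hull; this is exactly the concave-minimization argument of Section~\ref{s:cima_fd}). It is not: infimal projection preserves convexity, so already in the continuous relaxation $m$ is a \emph{convex} function of $\tau$, and $h$ is a sum of a concave quadratic and a convex function --- precisely as indefinite as the problem you started with, only in dimension $q\le n$ (over integer fibers, even the convexity of $m$ fails). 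A sum of a concave and a convex function can attain its minimum in the strict interior of the feasible region, so Cook--Hartmann enumeration can simply miss the optimum; the projection reduces dimension but does not remove indefiniteness, which makes the plan circular. The actual proofs take a different route: roughly, a Lenstra-type flatness dichotomy on the original polyhedron, combined with the observation that an optimum lying deep inside a fat polyhedron would admit a short integer direction $\ved$ with $\ved^{\intercal} Q \ved < 0$ along which one of the two orientations improves the objective; hence either the quadratic behaves convexly along all short integer directions, or the optimum is near a facet, which permits branching into boundedly many lower-dimensional subproblems. That interior-versus-boundary argument on the full-dimensional problem, not concave vertex enumeration in the projected space, is the missing idea.
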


While this parameterization may seem very restrictive, it lead to the resolution of a major open problem regading the parameterized complexity of \textsc{Minimum Linear Arrangement} parameterized by the vertex cover number.

\subsection{Integer Linear Programming in Variable Dimension} \label{s:ilp_vd}

Two major well-known cases of linear programs (LPs) that can be solved
integrally in polynomial time are LPs in small dimension (as discussed
above) and LPs given by totally unimodular matrices (such as flow
polytopes).
A large stream of research of the past 20 years has very recently converged on a result largely explaining the parameterized complexity of IP in terms of the structural complexity of the matrix $A$.
We are interested in the parameterizations of three graphs associated to the constraint matrix $A$:
\begin{enumerate}
\item The \emph{primal graph} $G_P(A)$, which has a vertex for every column, and two vertices share an edge if a row exists where both corresponding entries are non-zero.
\item The \emph{dual graph} $G_D(A) = G_P(A^{\intercal})$, which is the primal graph of the transpose of a matrix.
\item The \emph{incidence graph} $G_I(A)$, which has a vertex for every row and every column, and two vertices share an edge if they correspond to a row-column coordinate which is non-zero.
\end{enumerate}
Specifically, we are interested in the treedepth and treewidth of these graphs, yielding six parameters: primal/incidence/dual treedepth/treewidth, denoted $\td_P(A)$, $\tw_P(A)$, $\td_I(A)$, $\tw_I(A)$, $\td_D(A)$ and $\tw_D(A)$.
The fundamental result can be phrased as follows:

\begin{theorem}[{\cite[Theorems 5 and 6]{KouteckyLO:2018}}] \label{thm:graver}
There are computable functions $h_P$ and $h_D$ such that problem~\eqref{StandardLinIP} with $f$ a separable convex function can be solved in time:
\begin{itemize}
\item $h_P(\|A\|_\infty, \td_P(A)) n^3 \langle f_{\max}, \vel, \veu, \veb \r$, and
\item $h_D(\|A\|_\infty, \td_D(A)) n^3 \langle f_{\max}, \vel, \veu, \veb \r$.
\end{itemize}
\end{theorem}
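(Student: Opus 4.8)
The plan is to solve~\eqref{StandardLinIP} by \emph{iterative (Graver-best) augmentation}: start from any feasible $\vex_0$ and repeatedly replace the current point $\vex$ by $\vex + \lambda \veg$ for a carefully chosen $\veg \in \G(A)$ and step length $\lambda \in \ZZ_{>0}$ that preserves feasibility and maximizes the decrease of $f$, until no improving step exists. Two classical facts drive this. First, any difference $\vex^* - \vex$ between the current point and an optimum decomposes $\sqsubseteq$-conformally into at most $2n - 2$ Graver elements; since $f$ is separable convex, a superadditivity argument shows that the improvements of the individual single steps sum to at least the total gap $f(\vex) - f(\vex^*)$, so one of them already realizes a $1/(2n)$ fraction of that gap. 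Hence a best Graver step shrinks the optimality gap by a multiplicative factor, and $\poly(n) \cdot \langle f_{\max}\r$ augmentation steps suffice to reach the optimum. The whole problem thus reduces to two things: implementing a \emph{Graver-best augmentation oracle} efficiently, and producing an initial feasible $\vex_0$, the latter being handled by running the same machinery on an auxiliary phase-I instance of identical shape.

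The crux is the oracle, and this is where the structural parameters enter. I would split it into a norm bound and a dynamic program. For the norm bound, I would prove by induction along an elimination forest of $G_P(A)$ (resp.\ $G_D(A)$) that every Graver element is small: $g_\infty(A) \le h(\|A\|_\infty, \td_P(A))$ in the primal case, and $g_1(A) \le h(\|A\|_\infty, \td_D(A))$ in the dual case. The base case bounds the Graver norm of the small ``leaf'' blocks directly (few rows, small entries), and the inductive step controls how the norm grows when blocks are merged at an internal node, using that the branching and depth of the merges are governed by the treedepth.

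Given the norm bound, I would compute a best augmenting step by a dynamic program over the treedepth decomposition. Fixing a target $\ell_\infty$-step length (only $\langle \veu - \vel \r$ many relevant powers matter, handled by a scaling/binary-search layer), I would search for the best $\veg$ with $A\veg = \mathbf 0$ of bounded norm that is feasible and conformal to the direction toward improvement. The elimination-forest structure ensures each node has a bounded interface to the rest, and the norm bound ensures each coordinate of $\veg$ ranges over a set of size $\Oh(g_\infty(A))$; together these cap the number of DP states by a function of $\|A\|_\infty$ and the treedepth, while the separability of $f$ makes the accumulated cost decompose additively along the forest. This yields an oracle running in time $h'(\|A\|_\infty, \td_P(A)) \cdot \Oh(n) \cdot \langle \cdot \r$ per call, with the dual variant obtained by the same argument applied to $A^\transpose$.

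Putting the pieces together gives the claimed $h(\|A\|_\infty, \td_P(A)) \, n^3 \langle f_{\max}, \vel, \veu, \veb \r$ bound (and its dual counterpart): roughly $\poly(n)\langle\cdot\r$ augmentation rounds times the per-round oracle cost, with a more careful amortized analysis of the step lengths tightening the polynomial factor. The main obstacle is squarely the norm bound of the second paragraph: proving that bounded treedepth together with a bounded largest coefficient forces the Graver basis to have small norm is the genuinely hard, technical part, since the inductive merging step must keep the norm growth under control across the entire elimination forest. Once this is in hand, the augmentation framework and the bounded-state dynamic program are comparatively routine.
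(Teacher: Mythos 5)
Your proposal takes essentially the same route as the paper's account of this (cited) result: iterative Graver-best augmentation converging in polynomially many rounds, Graver-norm bounds $g_\infty(A) \le h(\|A\|_\infty, \td_P(A))$ and $g_1(A) \le h(\|A\|_\infty, \td_D(A))$ established inductively along the treedepth structure, and a bounded-state dynamic program (the paper's Lemma~\ref{lem:primaldual}) finding a Graver-best step in time $g_\infty(A)^{\Oh(\tw_P(A))}\cdot(n+m)$, resp.\ $g_1(A)^{\Oh(\tw_D(A))}\cdot(n+m)$. The only cosmetic difference is that you phrase the step-finding DP over the elimination forest while the paper states it for treewidth (which treedepth upper-bounds); your identification of the norm bound as the genuinely hard part also matches the paper's presentation.
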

In the case of ILP (linear objective), these results can even be made \emph{strongly polynomial}, i.e., not depending on the encoding lengths $\langle \vew, \vel, \veu, \veb \r$.
Let us discuss in more detail how these results are obtained.

\subparagraph*{Graver basis optimization.}
A key notion is that of \emph{iterative augmentation}.
Most readers will be familiar that the \textsc{Max Flow} problem can be solved by starting from a zero flow, and iteratively augmenting it with paths; when no augmenting path exists, the flow is optimal.
The notion of a \emph{Graver basis} (cf. Definition~\ref{def:graver}) lets us extend this approach to \eqref{StandardLinIP} as follows.
Starting from some initial feasible point $\vex_0 \in \ZZ^n$, there either exists a $\veg \in \G(A)$ such that $\vex_0+\veg$ is feasible (i.e., $\vel \leq \vex_0 + \veg \leq \veu$) and augmenting (i.e., $f(\vex_0 + \veg) < f(\vex_0)$), or $\vex_0$ is guaranteed to be optimal.
This is not yet enough to ensure quick convergence to an optimal point $\vex^*$, but always augmenting with a \emph{Graver-best} step $\veg$ then also guarantees this.
Thus the question becomes in which cases it is possible to efficiently compute such Graver-best steps.
This turns out to depend on the primal and dual treewidth and the norms of elements of $\G(A)$; recall that $g_\infty(A) = \max_{\veg \in \G(A)} \|\veg\|_\infty$ and analogously $g_1(A) = \max_{\veg \in \G(A)} \|\veg\|_1$.
\begin{lemma}[{Primal and dual lemma~\cite[(roughly) Lemmas 22 and 25]{KouteckyLO:2018}}] \label{lem:primaldual}
A Graver-best step can be found in time
\begin{itemize}
\item $g_\infty(A)^{\Oh(\tw_P(A))} \cdot (n+m)$, and,
\item $g_1(A)^{\Oh(\tw_D(A))} \cdot (n+m)$.
\end{itemize}
\end{lemma}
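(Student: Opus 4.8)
The plan is to prove both bounds by a single template: reduce the task of finding a Graver-best step at a feasible point $\vex_0$ to a \emph{bounded} integer optimization over the lattice $\{\veg \in \ZZ^n \mid A\veg = \mathbf{0}\}$ intersected with the box $\vel - \vex_0 \le \veg \le \veu - \vex_0$, and then solve that optimization by dynamic programming along a tree decomposition of the relevant graph — $G_P(A)$ for the primal bound, $G_D(A)$ for the dual bound. Two facts make the reduction work. First, since $f$ is separable convex, the objective $f(\vex_0+\veg)$ decomposes coordinatewise and can therefore be accumulated additively as the decomposition is processed, with no global bookkeeping. Second, to pass from an arbitrary Graver-best step $\gamma\veh$ (with $\veh\in\G(A)$ and $\gamma\in\ZZ_{\ge1}$) to a norm-bounded subproblem, I would use the standard halving argument: by convexity it suffices to try $\Oh(\log\|\veu-\vel\|_\infty)$ geometrically increasing step lengths and, for each fixed length, search for the best $\veh\in\G(A)$, which has bounded norm; each such search is exactly the DP below, and the logarithmic overhead is charged to the overall running time of Theorem~\ref{thm:graver} rather than to a single invocation.

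For the primal bound I would invoke the structural facts that every $\veh\in\G(A)$ satisfies $\|\veh\|_\infty\le g_\infty(A)$ and has support inducing a connected subgraph of $G_P(A)$. Fix a tree decomposition of $G_P(A)$ of width $\tw_P(A)$. The key observation is that the columns with a nonzero entry in any single row of $A$ form a clique of $G_P(A)$, so by the clique-containment property of tree decompositions all variables of each equality $A_r\veg=0$ live together in some bag, where that equality can be checked locally. I would then run a standard nice-tree-decomposition dynamic program: at each bag enumerate all assignments of its at most $\tw_P(A)+1$ variables to values in $\{-g_\infty(A),\dots,g_\infty(A)\}$, giving $(2g_\infty(A)+1)^{\tw_P(A)+1}=g_\infty(A)^{\Oh(\tw_P(A))}$ states; introduce and forget nodes add the per-coordinate objective contribution and enforce the now-local row constraints, while the running-intersection property guarantees global consistency of each variable. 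Merging child tables costs $g_\infty(A)^{\Oh(\tw_P(A))}$ per node and the decomposition has $\Oh(n+m)$ nodes, giving the claimed running time.

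The dual bound is dual in the literal sense: I would take a tree decomposition of $G_D(A)$, whose vertices are the rows, of width $\tw_D(A)$, and process the columns one by one, now exploiting $\|\veh\|_1\le g_1(A)$. The rows hit by a fixed column form a clique of $G_D(A)$, hence sit in a common bag, so assigning a value to a variable updates only partial constraint values of rows that are co-located. The DP state records, for each of the $\le\tw_D(A)+1$ rows in the current bag, the partial sum $\sum_{j \text{ processed}} A_{rj} g_j$; since $\|\veg\|_1\le g_1(A)$, every such partial sum stays inside $[-\|A\|_\infty g_1(A),\,\|A\|_\infty g_1(A)]$, bounding the per-row state count by $\Oh(\|A\|_\infty g_1(A))=g_1(A)^{\Oh(1)}$ and hence the per-bag state count by $g_1(A)^{\Oh(\tw_D(A))}$, again with an $\Oh(n+m)$ factor for traversing the decomposition.

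I expect the main obstacle to be twofold. First, making the step-length reduction rigorous: one must show that a norm-bounded step (a single scaled Graver element) is no worse than the unbounded Graver-best step, which rests on the superadditivity of separable convex functions along conformal decompositions rather than on any one clean inequality. Second, in the dual DP one must argue that maintaining only \emph{local} partial constraint values faithfully enforces the \emph{global} system $A\veg=\mathbf{0}$, and that the partial-sum window really is bounded throughout; this is precisely where the interplay between the $\ell_1$-bound $g_1(A)$ and the connectivity of the dual decomposition is delicate. The underlying structural results — connected support, the norm bounds, and convergence of the augmentation procedure itself — carry the real substance, but these I would take from the cited framework of Koutecký et al.~\cite{KouteckyLO:2018}.
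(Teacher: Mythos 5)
Your proposal is correct and matches the approach the paper attributes to Koutecký et al.~\cite{KouteckyLO:2018}: the primal bound via Freuder-style dynamic programming over a tree decomposition of $G_P(A)$ with variable domains bounded by $g_\infty(A)$, and the dual bound via the Ganian et al.-style program over $G_D(A)$ tracking partial row sums bounded through $\|\veg\|_1 \leq g_1(A)$, with step lengths handled by the geometric/halving reduction. The paper itself only sketches this (citing the two dynamic programs and deferring details), so your writeup is essentially a faithful expansion of its intended proof.
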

The proof of this lemma uses two dynamic programs; the first is well known and goes back to Freuder~\cite{Freuder:1990,JansenK15}, the second was only recently described by Ganian et al.~\cite{GanianOR:2017}.

\subparagraph*{Graver basis norms.}
The next obvious question is: what IPs satisfy the assumptions of Lemma~\ref{lem:primaldual}?
Hemmecke and Schultz~\cite{HemmeckeS:2001} show (though not in those terms) that \emph{2-stage stochastic} matrices have small $g_\infty(A)$, and it is not hard to see that they have small $\td_P(A) \leq \tw_P(A)$.
This result was later extended by Aschenbrenner and Hemmecke~\cite{AschenbrennerH:2007} to \emph{multi-stage stochastic} matrices, which are in turn generalized (and simultaneously generalize) matrices with small primal treedepth $\td_P(A)$, so we have that $g_\infty(A) \leq h(\|A\|_\infty, \td_P(A))$ for some computable function $h$.

Similarly, it was shown~\cite{Onn:2010} that \emph{$n$-fold} matrices have small $g_1(A)$ and they also have small $\td_D(A)$.
Those are generalized by \emph{tree-fold} matrices introduced by Chen and Marx~\cite{ChenM:2018} who generalize (and are generalized by) matrices with small dual treedepth $\td_D(A)$.

Theorem~\ref{thm:graver} (and its previous versions) has found use for example in parameterized scheduling~\cite{ChenM:2018,KnopK:2016}, computational social choice and stringology~\cite{KnopKM:2017b,KnopKM:2017}, and the design of efficient polynomial time approximation schemes (EPTASes)~\cite{JansenKMR:2018}.

\subparagraph*{Incidence treedepth.}
We note that the classification result of Theorem~\ref{thm:graver} cannot be improved in any direction: allowing unary-sized coefficients $\|A\|_\infty$ gives~\W{1}-hardness, and relaxing treedepth to treewidth leads to \NP-hardness~\cite{KouteckyLO:2018}.

The complexity of parameterizing by $\td_I(A)$ and $\|A\|_\infty$ is wide open.
The simplest stepping stone seems to be so-called 4-block $n$-fold programs, which combine the structure of 2-stage stochastic and $n$-fold matrices.
4-block $n$-fold IP is known to be \XP parameterized by the block dimensions~\cite{HemmeckeKLW10}, and \FPT membership is an important open problem.
Recently, Chen et al.~\cite{ChenXS:2018} gave some indication that the problem might in fact be~\Wh{1}.

\subparagraph*{ILP with few rows.}
Restricting our attention to a simpler case then the one handled by Theorem~\ref{thm:graver} leads us to considering ILPs with few rows.
Papadimitriou showed that ILP is \FPT parameterized by $\|A\|_\infty$ and $m$~\cite{Papadimitriou:1981}.
His algorithm was recently sped up by Eisenbrand and Weismantel~\cite{EisenbrandW:2018} and in the special case without upper bounds also by Jansen and Rohwedder~\cite{JansenR:2018}.
Many approximation algorithms (especially EPTASes) contain a subroutine using Lenstra's algorithm to solve a certain configuration IP.
Provided that this IP has small coefficients, this step can be exponentially sped up by applying one of the aforementioned algorithms.
A good example is the algorithm of Lampis for \textsc{Coloring} on graphs of bounded neighborhood diversity~\cite{Lampis12}, which can be improved from $2^{2^{k^{\Oh(1)}}} \log |G|$ to $k^{\Oh(k)} \log |G|$ simply by replacing Lenstra's algorithm.

\subparagraph*{Miscellaneous results.}
We highlight that we are not aware of any uses of multi-stage stochastic IP in parameterized complexity, and it would be interesting to see what kind of problems it can model.
Another interesting result from this area which has not yet found applications is due to Lee et al.~\cite{LeeORW:2012}.
It states that minimizing even certain non-convex objectives is polynomial-time solvable provided the objective falls in the so-called \emph{quadratic Graver cone}.

One way how to view the results based on Graver bases is via the parameter \emph{fracture number}: a graph has a small fracture number if there exists a small subset of vertices whose deletion decomposes the graph into (possibly many) small components; note that the treedepth is always at most the fracture number.
Dvořák et al.~\cite{DvorakEGKO:2017} show that ILP parameterized by the largest coefficient and the constraint or variable fracture number of the primal graph is \FPT.
In the case of constraint fracture number, one must delete small set of vertices corresponding only to constraints of the ILP at hand.
The variable fracture number is defined accordingly.
They provide an equivalent instance of either $n$-fold IP or 2-stage stochastic IP. These results are subsumed by Theorem~\ref{thm:graver}, but the parameter mixed fracture number (allowing the deletion of both rows and columns of $A$) is interesting because it is equivalent to 4-block $n$-fold and could be useful to understand its complexity.

Jansen and Kratsch~\cite{JansenK15} studied the kernelizability of ILP and show that instances with bounded domains and bounded primal treewidth are efficiently kernelizable.
Moreover, they introduce so-called \emph{$r$-boundaried} ILPs which generalize totally unimodular ILPs and ILPs of bounded treewidth, and they give an FPT result regarding $r$-boundaried ILPs.

Finally, Ganian et al.~\cite{GanianOR:2017} show that ILP parameterized by incidence treewidth and the largest constraint partial sum of any feasible solution is \FPT.
They also combine primal treewidth with Lenstra's algorithm to obtain a new structural parameter called \emph{torso-width}, and give an \FPT algorithm for this parameterization.

\end{document}